\newcommand{\version}{0}
\newcommand{\commenti}{0} 
\newcommand{\condinc}[2]{\ifthenelse{\equal{\commenti}{0}}{#1}{**\violet{#2}} }
\newcommand{\SLV}[2]{\ifthenelse{\equal{\version}{0}}{#1}{ \RED{#2}}}
\newcommand{\Tot}{\ValScript}
\newcommand{\ValScript}{\mathsf{v}}
\newcommand{\Betav}{\beta_v}
\newcommand{\Derel}{\mathsf{d}}
\newcommand{\llsym}{\ell\ell}
\newcommand{\R}{\mathcal R}
\newcommand{\llsmall}{\textsc{l}}
\newcommand{\surf}{\textsc{s}}
\newcommand{\esssym}{\textsc{e}}
\newcommand{\op}{\mathbf{o}}
\newcommand{\DerSym}{\iota}
\newcommand{\Der}[1]{\DerSym\,#1}
\newcommand{\Derp}[1]{\DerSym(#1)}
\newcommand{\App}[2]{\langle #1 \rangle #2}
\newcommand{\La}[2]{\lambda #1. #2}
\newcommand{\Bang}[1]{\oc #1}
\newcommand{\Bangp}[1]{\oc(#1)}
\newcommand{\Nat}{{\mathbb N}}
\newcommand{\BangSet}{\Lambda_\oc}
\newcommand{\BangSetOp}{\Lambda_{\oc \OpSet}}
\newcommand{\BangSetCtx}{{\CtxSet_\oc}}
\newcommand{\BangSetOpCtx}{{\CtxSet_{\oc\OpSet}}}
\newcommand{\BoxSet}{\oc\Lambda_\oc}
\newcommand{\BoxSetOp}{\oc\Lambda_{\oc,\OpSet}}
\newcommand{\ValSet}{\mathsf{Val}}
\newcommand{\VarSet}{\mathsf{Var}}
\newcommand{\LambdaVal}{\Lambda_v}
\newcommand{\LambdaOpVal}{{\LambdaOp}_v}
\newcommand{\CtxSet}{\mathcal{C}}
\newcommand{\LambdaOpCtx}{{\CtxSet_\OpSet}}
\newcommand{\LambdaCtx}{{\CtxSet}}
\newcommand{\OpSet}{\mathcal{O}}
\newcommand{\LambdaOp}{\Lambda_\OpSet}
\newcommand{\Fv}[1]{\mathsf{fv}(#1)}
\newcommand{\Sub}[2]{\{#1/#2\}}
\newcommand{\Rule}{\rho}
\newcommand{\Root}[1]{\mapsto_{#1}}
\newcommand{\To}[1]{\to_{#1}}
\newcommand{\MRevTo}[1]{{\,}_{#1}^*\!\!\leftarrow}
\newcommand{\ToBang}{\To{\oc}}
\newcommand{\ToVal}{\To{\ValScript}}
\newcommand{\ToTot}{\To{\Tot}}
\newcommand{\ToBeta}{\To{\beta}}
\newcommand{\ToBetav}{\To{\Betav}}
\newcommand{\ToOp}{\To{\op}}
\newcommand{\ToRule}{\To{\Rule}}
\newcommand{\ToRuleAt}[1]{\To{\Rule:#1}}
\newcommand{\ToBetavAt}[1]{\To{\Betav:#1}}
\newcommand{\ToBangAt}[1]{\ToTotInd{#1}}
\newcommand{\ToOpAt}[1]{\To{\op:#1}}
\newcommand{\uset}[3][0ex]{%
	\mathrel{\mathop{#3}\limits_{
			\vbox to#1{\kern-6\ex@
				\hbox{$\scriptstyle#2$}\vss}}}}
\newcommand{\red}{\mathrel{\rightarrow}}
\newcommand{\essred}{\uset{{\esssym~}}{\red}}
\newcommand{\nessred}{\uset{\neg\esssym}{\red}}
\newcommand{\sred}{\uset{{\surf~}}{\red}}
\newcommand{\nsred}{\uset{\neg\surf}{\red}}
\newcommand{\sredx}[1] {\mathrel{\sred{}_{\mkern-6mu#1}}}
\newcommand{\nsredx}[1]{\mathrel{\nsred{}_{\mkern-6mu#1}}}
\newcommand{\sredBang}{\mathrel{\sredx{{\Tot}}}}
\newcommand{\nsredBang}{\mathrel{\nsredx{{\Tot}}}}
\newcommand{\llred}{\uset{\llsmall~}{\red}}
\newcommand{\nllred}{\uset{\neg \llsmall~}{\red}}
\newcommand{\llredx}[1]  {\mathrel{\llred{}_{\mkern-6mu#1}}}
\newcommand{\nllredx}[1]{\mathrel{\nllred{}_{\mkern-6mu#1}}}
\newcommand{\llredb}  {\mathrel{\llredx{\beta}}}
\newcommand{\nllredb}{\mathrel{\nllredx{\beta}}}
\newcommand{\llredo}  {\mathrel{\llredx{\oplus}}}
\newcommand{\llredbv}  {\mathrel{\llredx{\Betav}}}
\newcommand{\llredBang}  {\mathrel{\llredx{\Tot}}}
\newcommand{\nllredBang}{\mathrel{\nllredx{\Tot}}}
\newcommand{\llredRule}  {\mathrel{\llredx{\Rule}}}
\newcommand{\nllredRule}{\mathrel{\nllredx{\Rule}}}
\newcommand{\llredOp}{\mathrel{\llredx{\op}}}
\newcommand{\nllredOp}{\mathrel{\nllredx{\op}}}
\newcommand{\Parallel}{\mathrel{\Rightarrow}}
\newcommand{\Parallelx}[1]{\Parallel_{#1}}
\newcommand{\ParallelBang}{\Parallelx{\Tot}}
\newcommand{\ParallelBangAt}[1]{\Parallelx{\Tot:#1}}
\newcommand{\ParallelNotEss}{\mathrel{\uset{{\lnot\esssym}}{\Parallel}}}
\newcommand{\ParallelNotLl}{\mathrel{\uset{{\lnot\llsmall}}{\Parallel}}}
\newcommand{\ParallelNotLlx}[1]{\mathrel{\ParallelNotLl{}_{\mkern-6mu#1}}}
\newcommand{\ParallelNotLlBang}{\ParallelNotLlx{\Tot}}
\newcommand{\ParallelBangInd}[1]{\overset{#1}{\Parallel}_{\Tot}}
\newcommand{\ParallelBangIndLong}[1]{\overset{#1}{\Parallelx{\Tot}}}
\newcommand{\var}{x}
\newcommand{\varTwo}{y}
\newcommand{\varThree}{z}
\newcommand{\Tm}{T}
\newcommand{\TmTwo}{S}
\newcommand{\TmThree}{R}
\newcommand{\TmFour}{Q}
\newcommand{\TmFive}{P}
\newcommand{\tm}{t}
\newcommand{\tmTwo}{s}
\newcommand{\tmThree}{r}
\newcommand{\tmFour}{q}
\newcommand{\ImTmOne}{M}
\newcommand{\ImTmTwo}{N}
\newcommand{\ImVal}{U}
\newcommand{\LVal}{V}
\newcommand{\LTm}{\Tm}
\newcommand{\LTmTwo}{\TmTwo}
\newcommand{\LTmThree}{\TmThree}
\newcommand{\LTmFour}{\TmFour}
\newcommand{\Hole}[1]{\langle #1 \rangle}
\newcommand{\HoleCtx}{\Hole{\cdot}}
\newcommand{\Ctx}{\mathbf{C}}
\newcommand{\ctx}{\mathtt{c}}
\newcommand{\WCtx}{\mathtt{W}}
\newcommand{\Ctxp}[1]{\Ctx\Hole{#1}}
\newcommand{\LCtx}{\mathbf{c}}
\newcommand{\LCtxp}[1]{\LCtx\Hole{#1}}
\newcommand{\LCtxTwo}{\LCtx'}
\newcommand{\Cbn}[1]{{#1^\mathsf{cbn}}}
\newcommand{\Cbv}[1]{{#1^\mathsf{cbv}}}
\newcommand{\CbN}{CbN\xspace}
\newcommand{\CbV}{CbV\xspace}
\newcommand{\InvV}[1]{#1^\dagger}
\newcommand{\Defeq}{\coloneqq}
\newcommand{\Eqdef}{\eqqcolon}
\newcommand{\Size}[1]{|#1|}
\newcommand{\SizeP}[2]{\Size{#1}_{#2}}
\newcommand{\lev}[1]{\ell(#1)}
\newcommand{\levCbn}[1]{\ell^\textup{\CbN}(#1)}
\newcommand{\levCbv}[1]{\ell^\textup{\CbV}(#1)}
\newcommand{\llCbn}[1]{\llsym^\textup{\CbN}(#1)}
\newcommand{\llCbv}[1]{\llsym^\textup{\CbV}(#1)}
\newcommand{\llBangR}[1]{\llbRed{}{\R}{#1}}
\newcommand{\llBang}[1]{\llbRed{}{}{#1}}
\renewcommand{\deg}[1]{\llsym(#1)}
\newcommand{\llb}[2]{\llsym^{#1}(#2)}
\newcommand{\llbRed}[3]{\llsym^{#1}_{#2}(#3)}
\newcommand{\Ie}{\textit{i.e.}\xspace}
\newcommand{\ie}{\textit{i.e.}\xspace}
\newcommand{\Ih}{\textit{i.h.}\xspace}
\newcommand{\Resp}{\textnormal{resp.}\xspace}
\newcommand\Isom\simeq
\renewcommand{\LVal}{v}
\renewcommand{\LTm}{\tm}
\renewcommand{\LTmTwo}{\tmTwo}
\renewcommand{\LTmThree}{\tmThree}
\renewcommand{\LTmFour}{\tmFour}
\renewcommand{\ToBang}{\To{\Derel}}
\newcommand{\ToDer}{\ToBang}
\newcommand{\ToDerNorm}{\twoheadrightarrow_\Derel}
\newcommand{\ToTotInd}[1]{\To{\Tot:#1}}
\newcommand{\ToDerInd}[1]{\To{\Derel:#1}}
\newcommand{\ToBetaInd}[1]{\To{\beta:#1}}
\newcommand{\llredDer}  {\mathrel{\llredx{\Derel}}}
\newcommand{\nllredDer}{\mathrel{\nllredx{\Derel}}}
\renewcommand{\ValScript}{\beta_\oc}
\renewcommand{\VarSet}{\mathcal{V}\!ar}
\newcommand{\valu}{box\xspace}
\renewcommand{\Cbn}[1]{{#1^\mathsf{n}}}
\renewcommand{\Cbv}[1]{{#1^\mathsf{v}}}
\renewcommand{\WCtx}{\mathtt{G}}
\newcommand\retract\vartriangleleft
\newcommand\lam{\ensuremath{\lambda}}
\let\Gamma\varGamma
\let\Delta\varDelta
\let\Theta\varTheta
\let\Lambda\varLambda
\let\Xi\varXi
\let\Pi\varPi
\let\Sigma\varSigma
\let\Upsilon\varUpsilon
\let\Phi\varPhi
\let\Psi\varPsi
\let\Omega\varOmega
\newcommand{\clean}{0}
\newcommand{\old}[1]{{\textcolor{red}{#1}}}
\newcommand{\CF}[1]{\vskip 8pt \textcolor{violet}{*CF: #1 *}}
\newcommand{\memo}[1]{\todo{MEMO:#1}}
\newcommand{\old}[1]{}
\newcommand{\CF}[1]{}
\newcommand{\memo}[1]{}
\newcommand{\xredx}[2] {\mathrel{{\uset{#1}{\red}}{}_{\mkern-3mu#2}}}
\newcommand{\xbackredx}[2] {\mathrel{{\uset{#1}{\leftarrow}}{}_{\mkern-3mu#2}}}
\newcommand{\xrevredx}{\xbackredx}
\newcommand{\eq}{\,=\,}
\newcommand{\eqdef}{\,:=\,}
\newcommand{\RulesSet}{\mathsf{Rules}}
\newcommand{\hole}[1]{\langle #1\rangle}
\newcommand{\cc}{\textsf {C}}
\renewcommand{\ll}{\textsc {l}}
\newcommand{\lo}{\textsc {lo}}
\newcommand{\llev}[1]{{\ell \ell}(#1)}
\newcommand{\cbv}{{\mathtt{cbv}}}
\newcommand{\cbn}{{\mathtt{cbn}}}
\newcommand{\betav}{{\beta_v}}
\newcommand{\ex}{\mathsf {e}}
\renewcommand{\int}{\mathsf{i}}
\newcommand{\head}{\mathsf{h}}
\newcommand{\id}{\iota}
\newcommand{\betab}{!\beta}
\newcommand{\rredbb}{\mapsto_{\bbeta}}
\newcommand{\tms}{s}
\newcommand{\tmu}{u}
\newcommand{\tmr}{r}
\newcommand{\tmp}{p}
\renewcommand{\to}{\xrightarrow{}}
\newcommand{\tob}{\to_\beta}
\newcommand{\tobv}{\to_{\betav}}
\newcommand{\PRed}{\Rightarrow}
\newcommand{\iPRed}{\uset{\int}{\PRed}}
\newcommand{\redx}[1]  {\mathrel{\red{}_{\mkern-8mu#1}}}
\newcommand{\ered}{\uset{\ex}{\red}}
\newcommand{\ired}{\uset{\int}{\red}}
\newcommand{\nered}{\uset{\neg \ex~}{\red}}
\newcommand{\hred}{\uset{\head}{\red}}
\newcommand{\hredx}[1]  {\mathrel{\hred{}_{\mkern-8mu#1}}}
\newcommand{\hredb}  {\mathrel{\hredx{\beta}}}
\newcommand{\hredo}  {\mathrel{\hredx{\oplus}}}
\newcommand{\llredc}  {\mathrel{\llredx{\gamma}}}
\newcommand{\nllredc}{\mathrel{\nllredx{\gamma}}}
\newcommand{\nllreda}{\mathrel{\nllredx{\alpha}}}
\newcommand{\sredbb}  {\mathrel{\sredx{\bbeta}}}
\newcommand{\llredbb}  {\mathrel{\llredx{\bbeta}}}
\newcommand{\nllredbb}{\mathrel{\nllredx{\bbeta}}}
\newcommand{\redbv}{\red_{\beta_v}}
\newcommand{\redb}{\rightarrow_{\beta}}
\newcommand{\eredx}[1]  {\mathrel{\ered{}_{\mkern-8mu#1}}}
\newcommand{\iredx}[1]  {\mathrel{\ired{}_{\mkern-8mu#1}}}
\newcommand{\bbeta}{!\beta}
\newcommand{\redbb}{\rightarrow_{\bbeta}}
\newcommand{\reda}  {\red_{\alpha}}
\newcommand{\ereda}  {\mathrel{\eredx{\alpha}}}
\newcommand{\ireda}  {\mathrel{\iredx{\alpha}}}
\newcommand{\redc}  {\red_{\gamma}}
\newcommand{\eredc} {\mathrel {\eredx{\gamma}}}
\newcommand{\iredc}  {\mathrel{\iredx{\gamma}}}
\newcommand{\lored}{\uset{\lo}{\red}}
\newcommand{\loredb}  {\mathrel{\lored{}_{\mkern-8mu\beta}}}
\newcommand{\loredx}[1]  {\mathrel{\lored{}_{\mkern-8mu#1}}}
\newcommand{\redo}{\rightarrow_{\oplus}}
\newcommand{\nllredo}{\nllredx{\oplus}}
\newcommand{\parmark}{\circ\mkern -1mu}
\newcommand{\makepar}[1]{~\parmark \mkern-16mu #1}
\newcommand{\iparred}{{\makepar \ired}}
\newcommand{\PLambda}{\Lambda_\oplus}
\spnewtheorem*{example*}{Example}{\itshape}{\itshape}
\spnewtheorem{fact}[theorem]{Property}{\bfseries}{\itshape}
\spnewtheorem{notation}[theorem]{Notation}{\bfseries}{\itshape}
\newcommand{\eg}{\emph{e.g.}\xspace}
\newcommand{\ih}{\emph{i.h.}\xspace}
\newcommand{\todo}[1]{{ \color{red}{#1}}}
\newcommand{\RED}[1]{{\color{red}{#1}}}
\newcommand{\pink}[1]{{\color{magenta}{#1}}}
\newcommand{\violet}[1]{{\color{violet}{#1}}}
\newcommand{\rredc}{\mapsto_{\gamma}}
\newcommand{\rreda}{\mapsto_{\alpha}}
\newcommand{\subs}[2]{ \{#2/#1\} }
\renewcommand{\AA}{A}
\newcommand{\LP}[2]{\mathtt{SP(#1,#2)}}
\newcommand{\F}[2]{\mathtt{Fact(#1,#2)}}
\newcommand{\esym}{{\mathtt e}}
\newcommand{\toe}{\to_{\esym}}
\renewcommand{\pink}[1]{}
\renewcommand{\bbeta}{\Tot}
\renewcommand{\betab}{\Tot}
\renewcommand{\gamma}{\rho}
\renewcommand{\alpha}{\xi}
\renewcommand{\surf}{0}
\renewcommand{\WCtx}{\mathbf{S}}
\begin{document}
\title{Factorization in Call-by-Name and Call-by-Value Calculi via Linear Logic}
%

\author{Claudia Faggian\inst{1} \and
Giulio Guerrieri\inst{2}}
\authorrunning{C. Faggian \and G. Guerrieri}
%
\institute{Université de Paris, IRIF, CNRS, F-75013 Paris, France \and University of Bath, Department of Computer Science, Bath, UK
}
\maketitle              
\begin{abstract}
	In each variant of the $\lambda$-calculus, factorization and normalization are two key-properties that show how 
	results are computed.
	
Instead of proving factorization/normalization 
for the call-by-name (\CbN) and call-by-value (\CbV) variants separately, we prove them only once, for the bang calculus (an extension of the $\lambda$-calculus inspired by linear logic  and subsuming \CbN and \CbV), and then we transfer 
the result via translations, obtaining factorization/normalization  for  \CbN and \CbV.

The approach is robust: it still holds when extending the calculi with operators 
and extra rules to model  some additional computational features.

\end{abstract}

\section{Introduction}
\label{sect:intro}

The  $\lambda$-calculus is  the  model of computation underlying functional programming languages and proof assistants.
Actually there are many $\lambda$-calculi, depending on the \emph{evaluation mechanism} (for instance, call-by-name and call-by-value---\CbN and \CbV for short) 
and \emph{computational features} that the calculus aims to model. 

In $\lambda$-calculi, a rewriting relation formalizes computational steps in program execution, and normal forms 
are the results of computations. 
In each calculus, a key question is to define a \emph{normalizing strategy}: How to compute a result? Is there a reduction strategy which is guaranteed to output a result, if any exists?

Proving that a calculus admits a normalizing strategy is complex, and many techniques have been developed. 
A well-known method first proves \emph{factorization} \cite{Barendregt84,Takahashi95,HirokawaMiddledorpMoser15,AccattoliFaggianGuerrieri19}.
Given a calculus with a rewriting relation $\to$, a strategy $\llred \subseteq \to$  \emph{factorizes} if  $\to^* \subseteq \llred^* \cdot \nllred^*$ ($\nllred$ is the dual of $\llred$), \ie  any reduction sequence can be rearranged so as to perform first $\llred$-steps and then the other steps.
If, moreover, the strategy satisfies some “good properties”,
we can conclude that the strategy is normalizing.
Factorization is important also because it is commonly used as a building block in the proof of other properties of the \emph{how-to-compute} kind. 
For instance, \emph{standardization}, which generalizes factorization: every reduction sequences can be rearranged according to a predefined order between redexes.

\paragraph{Two for One.} 
Quoting Levy \cite{Levy99}: \emph{the existence of two separate paradigms} (\CbN and \CbV) is troubling because to prove a certain property---such as factorization or normalization---for both systems \emph{we always need to do it twice}.

The \emph{first aim} of our paper is to develop a technique for deriving factorization for both the 
\CbN \cite{Barendregt84} and 
\CbV \cite{Plotkin75} $\lam$-calculi as corollaries of a \emph{single} factorization theorem, and similarly for normalization. A key tool in our study is the \emph{bang calculus} \cite{EhrhardGuerrieri16,GuerrieriManzonetto18}, a calculus inspired by linear logic in which 
\CbN and \CbV embed.

\paragraph{The Bang Calculus.}

The bang calculus  is a variant of the $\lambda$-calculus where an operator $\oc$ plays the role of a marker for non-linear management: duplicability and discardability.
The bang calculus is nothing but Simpson's linear $\lambda$-calculus \cite{Simpson05} without linear abstraction, or the untyped version of the implicative fragment of Levy's Call-by-Push-Value \cite{Levy99}, as first observed by Ehrhard \cite{Ehrhard16}. 

The motivation to study the bang calculus is to have a general framework where both \CbN and \CbV $\lambda$-calculi can be simulated, via two distinct \emph{translations} inspired by Girard's embeddings \cite{Girard87} of the intuitionistic arrow into linear logic.
So, a certain property can be studied in the bang calculus and then automatically transferred to the \CbN and \CbV settings by translating back.

This approach has so far mainly be exploited semantically \cite{Levy06,Ehrhard16,EhrhardGuerrieri16,GuerrieriManzonetto18,ChouquetTasson20,BucciarelliKesnerRiosViso20}, 
	but can be used it also  to study  operational  properties \cite{GuerrieriManzonetto18,SantoPintoUustalu19,FaggianRonchi}.
In this paper, we push forward this operational direction.

\paragraph{The Least-Level Strategy.}
We study a strategy from the literature of linear logic \cite{CarvalhoPF11}, namely \emph{least-level reduction} $\llred$, which fires a redex at minimal level---the \emph{level} of a redex $\tmThree$ is the number of $\oc$ under which the redex appears.

We prove that the least-level reduction factorizes and normalizes in the bang calculus, and then we transfer the same results to \CbN and \CbV $\lam$-calculi (for suitable definitions of least-level in \CbN and \CbV), by exploiting properties of their translations into the bang calculus. A single proof suffices.
It is two-for-one! Or even better, three-for-one.

The rewriting study of the least level strategy in the bang calculus is based on simple techniques for factorization and normalization we developed recently 
with Accattoli \cite{AccattoliFaggianGuerrieri19}, which simplify and generalize Takahashi's method \cite{Takahashi95}.

\paragraph{Subtleties of the Embeddings.}  Transferring factorization and normalization results via translation is highly non-trivial, 
\eg in CPS translations \cite{Plotkin75}. This applies also to   
transferring least-level factorizations from the bang calculus to the \CbN and \CbV $\lambda$-calculi.
To transfer the property smoothly, the translations should preserve levels and normal forms, which is delicate, in particular for \CbV. 
The embedding of \CbV into the bang calculus defined in \cite{GuerrieriManzonetto18,SantoPintoUustalu19} 
does not preserve levels and normal forms (see \Cref{rmk:least-level-normal}). 
As a consequence, the \CbV translation studied in \cite{GuerrieriManzonetto18,SantoPintoUustalu19} cannot be used to derive 
least-level factorization or \emph{any} 
normalization result in a \CbV setting from the corresponding result in the bang calculus.

Here we adopt the refined \CbV embedding of Bucciarelli et al. \cite{BucciarelliKesnerRiosViso20} which does preserve levels and normal forms. While the preservation of normal forms is already stressed in \cite{BucciarelliKesnerRiosViso20}, the preservation of levels is proved here for the first time, and it is based on non-trivial properties of the embedding.

%

\paragraph{Beyond pure.} 
Our \emph{second aim} is to show that the developed technique for the joined factorization and normalization of \CbN and \CbV via the bang calculus is \emph{robust}. We do so, by studying extensions of all three calculi with operators (or, in general, with extra rules) which model  some additional computational features, such as non-deterministic  or probabilistic choice. We then show that the technique scales up smoothly, under mild assumptions on the extension. 

\paragraph{A Motivating Example.} Let us illustrate our approach on a simple case, which we  will use as running example.
De' Liguoro' and Piperno's \CbN non-deterministic $\lam$-calculus $\PLambda^\cbn$ \cite{deLiguoroP95} extends  the CbN $\lam$-calculus with an operator $\oplus$ whose reduction  models  \emph{non-deterministic choice}:
$\oplus (\tm, \tms)$ 
rewrites to either  $\tm$ or  $\tms$. 
It admits a standardization result, from which if follows that the  leftmost-outermost reduction strategy (noted $\loredx{\beta\oplus}$) is \emph{complete}: if  $\tm$ has a  \textit{normal form} $\tmu$ then  	$  \tm \loredx{\beta\oplus}^*  \tmu$.
In \cite{deLiguoro91}, de' Liguoro considers also a \CbV variant $ \PLambda^\cbv $, extending with an operator $\oplus$  the \CbV $\lam$-calculus. One may prove standardization and completeness---again---from scratch, even though  the proofs are similar.

The  approach we propose here is to work  in the bang calculus enriched with the operator $\oplus$, it is denoted by ${\BangSet}_{\oplus}$.
We show that the calculus satisfies \emph{least-level factorization}
from which it follows that the least-level strategy is \emph{complete}, \ie if   $\tm$ has a   \textit{normal form} $\tmu$, then 
$ \tm \llredx{\bbeta\oplus}^* \tmu $. The translation then  guarantees that  analogous results hold also in $\PLambda^\cbn$ and $\PLambda^\cbv$.

\paragraph{The Importance of Being Modular.} The bang calculus with operators is actually a general formalism for several calculi, one calculus for each kind of computational feature modeled by operators. 
Concretely, the reduction $\red$ consists of $\ToTot$ (which subsumes \CbN $\ToBeta$ and \CbV $\ToBetav$) and other reduction rules $\ToRule$.

We decompose the proof  of factorization of $\red$ in  modules, by using the \emph{modular approach} recently introduced by the authors together with Accattoli \cite{AccattoliFaggianGuerrieri21}. 

The key module is the  least-level factorization of $\redbb$, because it is where the higher-order comes into play---this is done, once for all.
Then, we consider a generic reduction rule $\ToRule$ to add to $\ToTot$. Our general result is that if $\ToRule$ has `good properties' and interacts well with $\ToTot$ (which amounts to an easy test, combinatorial in nature), then we have least-level factorization for $\ToTot \cup \ToRule$. 

Putting all together, when $\ToRule$ is instantiated to a  concrete reduction (such as $\redo$), the user of our method only has to verify 
a simple test (namely \Cref{prop:test_ll}), to conclude that $\ToTot \cup \redo$ has least-level factorization. In particular factorization for $\ToTot$ is a ready-to-use black box the user need not to worry about---our  proof is robust enough to hold  whatever the other rules are. Finally, the embedding automatically give least-level factorization for the corresponding \CbV and \CbN calculi.
In \Cref{sec:case_study}, we illustrate our method on this example.

\paragraph{Subtleties of the Modular Extensions.}
In order to adopt the modular approach presented in \cite{AccattoliFaggianGuerrieri21} we need to deal  
with  an important difficulty which appears when dealing with normalizing strategies and that it is not studied in \cite{AccattoliFaggianGuerrieri21}.


A  normalizing   strategies select   the redex to fire usually   through a property such as being a \emph{least level} redex or being the \emph{leftmost-outermost} (shortened to LO) redex---normalizing strategies are \emph{positional}.

The problem is that the---in general---  if $\red =\redb\cup\redx{\Rule}$, then $\loredx{}$ reduction is not the union of  $\loredx{\beta}$ and  $\loredx{\Rule}$. I.e., the normalizing strategy of the compound system is not obtained putting together the normalizing strategies of the components.
Let us explain the issue on our running example $\redx{\beta \oplus}$, in the familiar case   of leftmost-outermost reduction.

\renewcommand{\lid}{\mathsf{I}}

\begin{example}\label{ex:issue}
	Let us first consider head reduction with respect to  $\beta$ (written $\hredb $) and with respect to  $\beta\oplus$ (written $\hredx{\beta\oplus} $).
	Consider the term $\tms=(\lid\lid)(x \oplus y)$, where $\lid=\lam x.x$. The subterm $\lid\lid$ (which is a $\beta$-redex) is in head position whenever we consider the reduction $\redb$ or its extension $\redx{\beta \oplus}$. So $\tms \hredb \lid(x\oplus y)$ and $\tms \hredx{\beta\oplus} \lid(x\oplus y)$.  Conversely, 
	given  $\tm=(x \oplus y)(\lid\lid)$ the head position is occupied by $ (x\oplus y) $, which is a $\oplus$-redex, but not a $\beta$-redex. Therefore,
	$(\lid\lid) $ is not the head-redex in $\tm$, neither for $\beta$ nor for $\beta\oplus$.
	Otherwise stated:  \[\hredx{\beta\oplus}~ = ~\hredb \cup \hredo.\]
	
	In contrast, if  we consider leftmost-outermost reduction $\lored$, which reduces a redex in the leftmost-outermost position, it is easy to see that \[\loredx{\beta\oplus} ~\not= ~\loredb \cup \loredx{\oplus}.\]
	Consider again  the term $\tm=(x \oplus y){(\lid\lid)}$. 
	Since $ (x\oplus y) $ is not a $\beta$-redex, $(\lid\lid)$ is the leftmost redex for  $\redb$. Instead, $(\lid\lid)$ is not  the $\lo$-redex for  $\redx{\beta \oplus}$ (here  the leftmost redex is $(x\oplus y)$). 
	 So  $\tm \loredb (x\oplus y)\lid$ but  $\tm \not\loredx{\beta\oplus} (x\oplus y)\lid$. 
\end{example}

The least-level factorization for $\redbb$, $\redb$, and $\redbv$ we prove here is robust enough to make it ready to be used as a module in a larger proof, where it may 
  combine with operators and other rules.
The key point is to define the least-level reduction from the very beginning as a reduction firing a redex at minimal level with respect to a general set of redexes (containing $\Tot$, $\beta$ or $\Betav$, respectively), so that it is ``ready'' to be extended with other reduction rules (see \Cref{sec:ll}).

\paragraph{Proofs.} All proofs are available in \url{https://www.irif.fr/~giuliog/fact.pdf}
%

\section{Background in Abstract Rewriting}\label{sec:background}

An (\emph{abstract}) \emph{rewriting system}, \cite[Ch. 2]{Terese03}  is a pair $(\AA, \to)$ 
consisting of a  set $A$ and a binary 
relation $\to \subseteq \AA\times \AA$ (called reduction) whose pairs are written  $t \to s$ and called \emph{steps}. 
A \emph{$\to$-sequence} from $\tm$ is a sequence of $\to$-steps. 
As usual, 
$\red^*$ (resp. $\red^=$) denotes the transitive-reflexive (resp. reflexive) closure of $\red$.

A relation $\red$ 
is 
\emph{confluent} if  $s \MRevTo{} r\red^{*} t$ implies
$s \red^{*}u \MRevTo{} t$ for some $u$.
We  say that $u$ is $\red$-\emph{normal} (or a $\red$-normal form) if there is no $t$ such that 
$u\red t$. 

In general, a term may or may not reduce to a normal form. 
If it does, not all reduction sequences necessarily lead to normal form. 
A term is 
\emph{weakly} or \emph{strongly normalizing}, depending on if it  may or must reduce to normal form.
If a term $\tm$ is strongly normalizing, any choice of steps will eventually lead to a normal form. 
However, if $\tm$ is weakly normalizing, how do we compute a normal form? This is the problem tackled by \emph{normalization}:  by repeatedly performing \emph{only specific  steps},  a normal form will be computed, provided that $\tm $ can reduce to~any.

%

A \emph{strategy} $\toe \ \subseteq \ \to$ is a way to control 
that in a term there are different possible choices of reduction.
A  \emph{normalizing strategy}  for $\red$, is a reduction strategy which, given a term $\tm$, is guaranteed to reach its $\red$-normal form, if any exists (a key tool to show that certain terms are not $\red$-normalizable).

\begin{definition}[Normalizing and complete strategy]
	\label{def:strategy}
A reduction $\ered \,\subseteq \, \red$ is a  \emph{strategy for} $\red$  if it has the same normal forms as $\red$. 
A strategy $\ered$ for~$\red$~is:
\begin{itemize}
 \item \emph{complete} if 
 $t\ered^*u$ whenever $t\to^*u$ with   $u$ $\red$-normal;
 \item \emph{normalizing} if \emph{every} maximal $\ered$-sequence from $t$ ends in a normal form, whenever  $t\to^*u$ for some $\red$-normal form $u$. 
\end{itemize}
\end{definition}
Note that if  the strategy $\ered$ is complete and  \emph{deterministic} (\ie for every $t\in \AA$, $t\ered s$ for at most one $s\in \AA$), then  $\ered$ is a normalizing strategy for $\red$.

\begin{definition}[Factorization]
Let $(A,\to)$ be a rewriting system with $\red \eq \ered \cup  \ired $.  The relation   $\red$  
satisfies  \emph{$\ex$-factorization}, written $\F{\ered}{\ired}$, if
	\begin{equation}\tag{\textbf{Factorization}}
		\F{\ered}{\ired}: \quad (\ered \cup  \ired)^*~ \subseteq ~\ered^* \cdot \ired^*  
	\end{equation}	
\end{definition}	

\subsubsection{Proving Normalization.}
Factorization provides a simple technique to establish that a strategy is normalizing.
\begin{lemma}[Normalization \cite{AccattoliFaggianGuerrieri19}]\label{prop:abs_normalization} Let $\red \eq \ered  \cup  \nered$, and $\ered$ be a \mbox{strategy for $\red$.}
	
	The strategy $\ered$  is \emph{complete} for $\red$ if  the following hold:
	\begin{enumerate}
		\item \emph{Persistence:} If $\tm \nered \tm'$  then $\tm'$ is not normal. 
		\item  \emph{Factorization}:   $\tm\to^* \tmu$ implies $ \tm\ered^* \!\cdot\! \nered^*\tmu$.
	\end{enumerate}

 The strategy $\ered$  is \emph{normalizing} for $\red$ if it is complete and: 
 \begin{enumerate}\setcounter{enumi}{2}
		\item 
		\emph{Uniformity:} all weakly $\ered$-normalizing terms are strongly $\ered$-normalizing.
 \end{enumerate} 
\end{lemma}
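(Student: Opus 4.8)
The plan is to assemble the statement directly from the definitions, using Factorization together with Persistence for completeness, and then Uniformity together with the strategy hypothesis for normalization. Both parts are essentially a disciplined bookkeeping of the hypotheses, so I would keep the two implications separate.

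\textbf{Completeness.} Assume $\tm \to^* \tmu$ with $\tmu$ being $\red$-normal. First I would apply Factorization (hypothesis~2) to obtain an intermediate term $\tmThree$ with $\tm \ered^* \tmThree \nered^* \tmu$. The crux is to show that the $\nered^*$-tail is empty. I would argue by contradiction: if it contained at least one step, then its \emph{final} step would have the form $\tmThree' \nered \tmu$, and Persistence (hypothesis~1) would force $\tmu$ to be non-normal, contradicting that $\tmu$ is $\red$-normal. Hence $\tmThree = \tmu$ and $\tm \ered^* \tmu$, which is exactly the completeness of $\ered$.

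\textbf{Normalization.} Now assume $\ered$ is complete and that Uniformity (hypothesis~3) holds, and suppose $\tm \to^* \tmu$ for some $\red$-normal $\tmu$. By completeness $\tm \ered^* \tmu$, so $\tm$ is weakly $\ered$-normalizing. Uniformity then upgrades this to strong $\ered$-normalization of $\tm$, i.e.\ every $\ered$-sequence from $\tm$ is finite. Consequently every maximal $\ered$-sequence from $\tm$ terminates, ending in a term with no $\ered$-successor, that is, in an $\ered$-normal form. Finally, since $\ered$ is a strategy for $\red$ and thus has the same normal forms as $\red$, every such $\ered$-normal form is in fact $\red$-normal. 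Hence every maximal $\ered$-sequence from $\tm$ ends in a $\red$-normal form, which is precisely the normalization statement.

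\textbf{Main obstacle.} Neither part hides a deep difficulty; the proof is a careful chaining of the three hypotheses. The only points that require attention are (i) getting the direction of the Persistence argument right, namely reading off that it is the \emph{target} of the last $\nered$-step—which is $\tmu$ itself—that is forced to be non-normal, rather than some intermediate term; and (ii) the tacit identification of $\ered$-normal forms with $\red$-normal forms, which is licensed only because $\ered$ is assumed to be a strategy for $\red$. Dropping either Persistence or the strategy assumption would break the corresponding step, so these are the two places where I would be most explicit.
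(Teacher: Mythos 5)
Your proof is correct and is essentially the intended argument: the paper does not reprove this lemma but imports it from \cite{AccattoliFaggianGuerrieri19}, where the proof is exactly your chaining --- Factorization plus Persistence (applied to the last step of the internal tail, whose target is the normal form $\tmu$) to conclude the $\nered$-tail is empty, and then completeness plus Uniformity plus the strategy hypothesis to get that every maximal $\ered$-sequence terminates in an $\ered$-normal, hence $\red$-normal, form. The two points you flag as delicate are precisely the ones that matter, and you handle both correctly.
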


A sufficient condition for uniform normalization and confluence \mbox{is the following}:

\begin{fact}[Newman \cite{Newman42}]\label{fact:diamond}
	A reduction  is \emph{quasi-diamond} if   ($\tm_1\leftarrow \tm \rightarrow \tm_2$) implies ($\tm_1=\tm_2$ or $\tm_1\rightarrow \tmu \leftarrow \tm_2$ for some $\tmu$).
	If $\red$ is quasi-diamond 
	then $\red$ is uniformly normalizing and confluent.
		
		
\end{fact}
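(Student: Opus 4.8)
The plan is to reread the quasi-diamond hypothesis as a \emph{subcommutativity} property of $\red$ with respect to its reflexive closure $\red^=$, and then to extract the two conclusions—confluence and uniform normalization—by two essentially independent arguments whose only input is this property. Concretely, quasi-diamond says that a local peak $\tm_1 \leftarrow \tm \red \tm_2$ always closes in \emph{at most one} step on each side (taking the two reducts equal when ``at most one'' happens to be zero).

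\textbf{Confluence.} First I would prove a \emph{strip lemma}: if $\tm \red \tm_1$ and $\tm \red^* \tm_2$, then there is $\tmu$ with $\tm_1 \red^* \tmu$ and $\tm_2 \red^= \tmu$. This goes by induction on the length of $\tm \red^* \tm_2$, using quasi-diamond at the new peak to propagate the single step along the sequence (the key is that the conclusion keeps the right-hand side at a single $\red^=$-step, so the induction reproduces its own hypothesis). Full confluence—$\tm \red^* \tm_1$ and $\tm \red^* \tm_2$ imply a common reduct—then follows by a second induction, on the length of $\tm \red^* \tm_1$, each step of which is discharged by the strip lemma. This is the routine ``tiling'' of a one-step diamond into a many-step one; nothing here is specific to our setting.

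\textbf{Uniform normalization.} The interesting part is proving that every weakly normalizing term is strongly normalizing. Given a weakly normalizing $\tm$, let $d(\tm) \in \Nat$ be the length of a \emph{shortest} reduction of $\tm$ to a normal form. The heart of the proof is the claim: if $\tm \red \tmTwo$ then $\tmTwo$ is again weakly normalizing and $d(\tmTwo) = d(\tm) - 1$. Weak normalization of $\tmTwo$ is immediate from the strip lemma: from $\tmTwo \leftarrow \tm \red^* \tmThree$ with $\tmThree$ normal, the strip lemma yields some $\tmFour$ with $\tmTwo \red^* \tmFour$ and $\tmThree \red^= \tmFour$, and normality of $\tmThree$ forces $\tmFour = \tmThree$, so $\tmTwo \red^* \tmThree$. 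The inequality $d(\tmTwo) \ge d(\tm)-1$ is trivial, since $\tm \red \tmTwo \red^* \tmThree$ exhibits a reduction of $\tm$ of length $d(\tmTwo)+1$. The converse $d(\tmTwo) \le d(\tm)-1$ is proved by induction on $d(\tm)$: fix a shortest reduction $\tm \red \tm_1 \red^* \tmThree$ of $\tm$, so that $d(\tm_1) = d(\tm)-1$; if $\tmTwo = \tm_1$ we are done, and otherwise quasi-diamond applied to $\tm_1 \leftarrow \tm \red \tmTwo$ gives $\tmu$ with $\tm_1 \red \tmu$ and $\tmTwo \red \tmu$, whereupon the induction hypothesis at $\tm_1$ (whose distance is strictly smaller) yields $d(\tmu) = d(\tm_1)-1 = d(\tm)-2$, and hence $d(\tmTwo) \le 1 + d(\tmu) = d(\tm)-1$.

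Once the claim is established, uniform normalization follows at once: along any reduction issued from a weakly normalizing $\tm$, the quantity $d$ strictly decreases by exactly one at each step and stays non-negative, so every reduction from $\tm$ has length exactly $d(\tm)$ and ends at a normal form (unique by the confluence already proved); in particular $\tm$ is strongly normalizing. I expect the main obstacle to be precisely this ``exactly one'' claim for $d$, and within it the case $\tmTwo \ne \tm_1$: there one cannot compare $\tmTwo$ with the chosen shortest reduction directly, but must first use quasi-diamond to produce a common one-step reduct $\tmu$ and only then invoke the induction hypothesis on the strictly smaller value $d(\tm_1)$. Setting up the induction measure and this case split correctly is the sole delicate point; everything else is bookkeeping.
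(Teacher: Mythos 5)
Your proof is correct, but there is no in-paper argument to compare it against: the paper imports this statement as a known property, citing Newman \cite{Newman42}, and never proves it. Your reconstruction is essentially the classical one—strip lemma plus tiling for confluence, and the invariant that one step decreases the distance $d$ to normal form by exactly one for uniform normalization—and it in fact establishes slightly more than asked, namely random descent: all maximal reductions from a weakly normalizing term have the same length $d(\tm)$. Two remarks. First, be careful with your opening gloss of quasi-diamond as ``a subcommutativity property'': genuine subcommutativity, where a peak $\tm_1 \revred \tm \red \tm_2$ closes with $\tm_1 \red^= \tmu$ and $\tm_2 \red^= \tmu$ independently, is strictly weaker than the quasi-diamond of the statement and does \emph{not} imply uniform normalization (the system $a \red a$, $a \red b$ is subcommutative, and $a$ is weakly but not strongly normalizing). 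Your proof is nonetheless sound, because at both points where the hypothesis is invoked—the strip lemma, and the case $\tmTwo \neq \tm_1$ of the claim $d(\tmTwo) = d(\tm)-1$—you use the exact dichotomy of the statement (equality, or one full step on each side), and that second case is precisely where the full strength is indispensable. Second, ``every reduction from $\tm$ has length exactly $d(\tm)$'' should read ``every \emph{maximal} reduction''; proper prefixes are shorter. Neither point affects correctness.
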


%
%
%
%
%
%

\subsubsection{Proving Factorization.}\label{sec:Hindley}

Hindley\cite{HindleyPhD} first noted  that a local 
property implies 
factorization. Let $\red \, = \, \ered \cup \ired$.
We say that  $\ired$ \emph{strongly postpones} after $\ered$,
if
\begin{equation}\label{eq:SP}\tag{\textbf{Strong Postponement}}
\LP{\ered}{\ired}:
 \qquad	\ired \cdot \ered ~\subseteq~\ered^*\cdot  \ired^=
\end{equation}


\begin{lemma}[Hindley \cite{HindleyPhD}]
\label{l:SP} 
	$\LP{\ered}{\ired}$ implies $\F{\ered}{\ired}$.
\end{lemma}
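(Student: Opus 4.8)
The plan is to follow Hindley's original argument, which proceeds by two nested inductions. The heart of the matter is to upgrade the single-step postponement $\LP{\ered}{\ired}$ (which pushes one $\ired$-step past one $\ered$-step) into a statement pushing one $\ired$-step past an entire block of $\ered$-steps; once that is available, $\F{\ered}{\ired}$ follows by an easy induction on the length of a mixed reduction sequence.

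First I would prove the auxiliary claim
\[
\ired \cdot \ered^* ~\subseteq~ \ered^* \cdot \ired^=
\]
by induction on the length $n$ of the $\ered^*$-block. The base case $n = 0$ is immediate, since $\ired \subseteq \ired^=$. For the inductive step, given $t \ired s \ered s' \ered^{n-1} u$, I apply $\LP{\ered}{\ired}$ to the prefix $t \ired s \ered s'$, obtaining $t \ered^* w \ired^= s'$. Here I split on whether the reflexive closure $\ired^=$ is a genuine step or the identity: if $w = s'$ then $t \ered^* s' \ered^{n-1} u$ closes the case with an empty trailing $\ired^=$; if instead $w \ired s'$, then I apply the induction hypothesis to $w \ired s' \ered^{n-1} u$ (a block of length $n-1$) to get $w \ered^* \cdot \ired^= u$, and prepend $t \ered^* w$.

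With the auxiliary claim in hand, I would prove $\F{\ered}{\ired}$, i.e. $(\ered \cup \ired)^* \subseteq \ered^* \cdot \ired^*$, by induction on the length $k$ of a mixed sequence $t \mathrel{(\ered \cup \ired)^k} u$. The case $k = 0$ is trivial. For $k \geq 1$, I peel off the first step $t \mathrel{(\ered \cup \ired)} t'$ and apply the induction hypothesis to obtain $t' \ered^* v \ired^* u$. If the first step is an $\ered$-step, it merely extends the leading $\ered^*$-block. If it is an $\ired$-step, I push it through the block $t' \ered^* v$ using the auxiliary claim, obtaining $t \ered^* w \ired^= v \ired^* u$, and then absorb $\ired^= \cdot \ired^*$ into $\ired^*$.

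The only subtle point — and the reason strong postponement is stated with $\ired^=$ rather than $\ired^*$ on the right-hand side — is that each postponement must create \emph{at most one} $\ired$-step; this is precisely what keeps the inner induction (on the length of the $\ered$-block) well-founded, since the number of $\ired$-steps still to be pushed never grows. The remaining work is the routine bookkeeping of the case analysis on whether $\ired^=$ contributes an actual step, which I have isolated in the two inductions above.
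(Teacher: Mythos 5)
Your proof is correct: the auxiliary claim $\ired \cdot \ered^* \subseteq \ered^* \cdot \ired^=$ (by induction on the $\ered$-block, using that $\LP{\ered}{\ired}$ yields \emph{at most one} residual $\ired$-step) followed by induction on the length of the mixed sequence is exactly Hindley's classical argument. The paper itself does not prove \Cref{l:SP} but cites it from Hindley's thesis, and your reconstruction — including the observation that the $\ired^=$ (rather than $\ired^*$) in the right-hand side of \eqref{eq:SP} is what keeps the inner induction well-founded — is the standard proof of that cited result.
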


Strong postponement can rarely be used \emph{directly}, because several  interesting 
reductions---including $\beta$-reduction---do not  satisfy it. 
However, it is
at the heart of Takahashi's 
method \cite{Takahashi95} to prove head  factorization of $\ToBeta$, via the following immediate property that can be used also to prove other factorizations (see \cite{AccattoliFaggianGuerrieri19}).
\begin{fact}[Characterization of factorization]
	\label{fact:fact}
	Factorization	$\F \ered \ired$ holds 
	\emph{if and only if} 
	there is a reduction $\iparred$ such that $\iparred^* \, = \ \ired^*$ and $\LP{\ered}{\iparred}$.
	
\end{fact}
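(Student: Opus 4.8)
The plan is to establish the two implications separately; in both directions the only substantial ingredient is Hindley's Lemma (\Cref{l:SP}), and everything else reduces to elementary bookkeeping on reflexive-transitive closures.

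For the right-to-left implication I would start from a reduction $\iparred$ satisfying $\iparred^*=\ired^*$ and $\LP{\ered}{\iparred}$, and first apply \Cref{l:SP} to the pair $(\ered,\iparred)$ to obtain $\F{\ered}{\iparred}$, i.e.\ $(\ered\cup\iparred)^*\subseteq\ered^*\cdot\iparred^*$. The crux is then to observe that the hypothesis $\iparred^*=\ired^*$ makes $\ired$ and $\iparred$ interchangeable inside closures: from $\ired\subseteq\ired^*=\iparred^*\subseteq(\ered\cup\iparred)^*$ and, symmetrically, $\iparred\subseteq\iparred^*=\ired^*\subseteq(\ered\cup\ired)^*$ one derives the equality $(\ered\cup\ired)^*=(\ered\cup\iparred)^*$. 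Combining this with $\ered^*\cdot\iparred^*=\ered^*\cdot\ired^*$ yields $(\ered\cup\ired)^*\subseteq\ered^*\cdot\ired^*$, which is exactly $\F{\ered}{\ired}$.

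For the left-to-right implication I would assume $\F{\ered}{\ired}$ and exhibit the witness $\iparred\deff\ired^*$. It is a reduction and satisfies $\iparred^*=(\ired^*)^*=\ired^*$ for free, so only $\LP{\ered}{\iparred}$ needs checking, namely $\ired^*\cdot\ered\subseteq\ered^*\cdot(\ired^*)^=$. Since $\ired^*$ is reflexive, $(\ired^*)^==\ired^*$, and then $\ired^*\cdot\ered\subseteq(\ered\cup\ired)^*\subseteq\ered^*\cdot\ired^*$, where the last inclusion is precisely factorization. Thus strong postponement of $\iparred$ after $\ered$ holds, and $\iparred$ is the required witness.

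I do not anticipate a genuine obstacle: the statement is an abstract-rewriting equivalence and both directions are short. The point demanding the most care is the closure manipulation in the right-to-left direction, where one must verify that $\iparred^*=\ired^*$ forces the \emph{equality} $(\ered\cup\ired)^*=(\ered\cup\iparred)^*$ and not merely a one-sided inclusion. The left-to-right direction is essentially forced once one notices that the candidate $\ired^*$ turns strong postponement into a literal instance of factorization restricted to sequences of shape $\ired^*\cdot\ered$; conceptually this explains why the extra freedom of passing to an auxiliary reduction $\iparred$ (rather than working with $\ired$ directly) costs nothing, and why the characterization is genuinely an equivalence.
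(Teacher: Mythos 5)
Your proof is correct and follows essentially the route the paper intends: the right-to-left direction is Hindley's \LocalPostponement lemma (\Cref{l:SP}) applied to $(\ered,\iparred)$ plus the observation that $\iparred^*=\ired^*$ forces $(\ered\cup\iparred)^*=(\ered\cup\ired)^*$, and the left-to-right direction takes the canonical witness $\iparred\deff\ired^*$, for which \localpostponement degenerates into factorization itself. The paper states this fact as immediate (deferring to the cited reference), and your two-implication argument is exactly the standard proof behind it.
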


The core of Takahashi's method 
\cite{Takahashi95} is to 
introduce a  relation $\iPRed$, called \emph{internal parallel reduction}, which verifies 
the hypotheses above.
We will follow a similar path  in  \Cref{sect:factorization}, to prove \textit{least-level} factorization.

\subsubsection{Compound systems: proving  factorization in a modular way.}
\label{sec:modular}

In this paper, we will consider compound systems  that are  obtained by   extending the 
$\lam$-calculus with extra rules to model  advanced features. 

In an abstract setting, let us consider a   rewrite system $(A,\red)$ where $\red \eq\reda \cup \redc$. 
Under which condition $\red$ admits factorization, assuming that both $\reda$ and $\redc$ do?
To deal with this question,  a technique   for proving factorization  for \emph{compound systems}  in a 
\emph{modular} way  has been introduced in \cite{AccattoliFaggianGuerrieri21}.
 The approach  can be seen as an  analogous for factorization of the classical 
technique for confluence based on Hindley-Rosen lemma \cite{Barendregt84}: 
if   $\reda,\redc$ are $\ex$-factorizing reductions, their union  $\reda \cup \redc$ also is, 
 provided that two \textit{local} conditions of commutation hold.

\begin{lemma}[Modular factorization \cite{AccattoliFaggianGuerrieri21}]\label{thm:modular}
	Let  $\reda \eq \ereda \cup \ireda$ and $\redc \eq \eredc \cup \iredc$ be  $\ex$-factorizing relations.
	Let $\ered  \eqdef \ereda \cup\eredc$, and $\ired  \eqdef  \ireda \cup\iredc$. 
	The  union
	$\reda\cup \redc$  fulfills factorization 
	$\F{\ered}{\ired}$  if the following  swaps hold
	\begin{equation}\label{eq:LS}\tag{\textbf{Linear Swaps}}
\ireda \cdot   \eredc  \ \subseteq\  \eredc \cdot \reda^*     \quad\text{ and }\quad \iredc \cdot   \ereda  \ \subseteq\  \ereda \cdot \redc^* 
	\end{equation}
	
\end{lemma}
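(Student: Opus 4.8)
\emph{Proof plan.} The argument follows the Hindley--Rosen pattern already used for confluence in the paper: the two given factorizations are glued together by the swaps. Since $\ered \cup \ired = (\ereda\cup\eredc)\cup(\ireda\cup\iredc) = \reda \cup \redc$, the goal $\F\ered\ired$ is literally $(\reda\cup\redc)^* \subseteq \ered^*\cdot\ired^*$. First I would reduce this to a \emph{postponement} of the combined internal reduction after the combined essential one. Instantiating the characterization of factorization (\Cref{fact:fact}) with the auxiliary reduction $\ired^*$ — which satisfies $(\ired^*)^* = \ired^*$ and, being reflexive, $(\ired^*)^{=} = \ired^*$ — the strong-postponement premise of \Cref{fact:fact} for this auxiliary reduction reads exactly
\[
  \ired^* \cdot \ered \ \subseteq\ \ered^* \cdot \ired^* ,
\]
so it suffices to prove this one postponement.

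\emph{Combined one-step swaps.} The first genuine step is to promote the cross-system swaps \eqref{eq:LS} into swaps against the whole of $\ered$. The individual factorization $\reda^* \subseteq \ereda^*\cdot\ireda^*$ gives the intra-system postponement $\ireda\cdot\ereda \subseteq \ereda^*\cdot\ireda^*$, while composing $\ireda\cdot\eredc \subseteq \eredc\cdot\reda^*$ with the same factorization gives $\ireda\cdot\eredc \subseteq \eredc\cdot\ereda^*\cdot\ireda^* \subseteq \ered^*\cdot\ireda^*$. Hence $\ireda\cdot\ered \subseteq \ered^*\cdot\ireda^*$, and symmetrically $\iredc\cdot\ered \subseteq \ered^*\cdot\iredc^*$, so that
\[
  \ired \cdot \ered \ \subseteq\ \ered^* \cdot \ired^* .
\]
The feature to single out is \emph{subsystem preservation}: postponing an internal step of one subsystem past \emph{any} essential step yields essential steps followed only by internal steps of the \emph{same} subsystem; moreover a cross-swap emits exactly one foreign essential step ($\eredc$ resp. $\ereda$) and otherwise stays inside a single subsystem ($\reda^*$ resp. $\redc^*$).

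\emph{From local to global (the crux).} Upgrading the one-step swap to $\ired^*\cdot\ered \subseteq \ered^*\cdot\ired^*$ is where I expect the real difficulty. In contrast with Hindley's strong postponement (\Cref{l:SP}), our swap produces $\ired^*$ and $\ered^*$ rather than single steps, so the naive induction is circular. Concretely I would prove the slightly more general $w\cdot u \subseteq \ered^*\cdot\ired^*$ for every internal block $w$ and essential block $u$, which contains the target as the case of a one-step $u$. The obstacle is that applying the one-step swap to the rightmost internal/essential adjacency consumes one internal step but emits \emph{several} essential steps, so a plain count of internal/essential inversions may grow. The resolution I would adopt rests on the two highlighted invariants: processing the misplaced essential steps from left to right, I move the leftmost one fully into the essential prefix by repeated one-step swaps, and each such swap strictly decreases the number of internal steps lying to the left of that particular essential step (the internal steps it spawns are emitted to its right), so this inner phase terminates and never places internal steps before essentials already settled. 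Since each cross-swap emits exactly one foreign essential step, the number of essentials still to be placed is controlled, and the lexicographic measure \emph{(number of still-misplaced essential steps, then number of internal steps preceding the leftmost of them)} is well-founded; note that the argument never needs to commute two internal steps, consistently with the absence of any internal/internal hypothesis among \eqref{eq:LS}. This is the step requiring care and is the heart of the modular argument.

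\emph{Conclusion.} With $\ired^*\cdot\ered \subseteq \ered^*\cdot\ired^*$ established, \Cref{fact:fact} (equivalently \Cref{l:SP} applied to the auxiliary reduction $\ired^*$) yields $\F\ered\ired$, i.e. $(\reda\cup\redc)^* \subseteq \ered^*\cdot\ired^*$, as required. It is worth observing that the precise shape of \eqref{eq:LS} — one foreign essential step and then only same-subsystem reductions — is exactly what makes subsystem preservation hold and the lifting terminate; a weaker premise allowing cross-system internal steps to be created on the right-hand side would break the measure.
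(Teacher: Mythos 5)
Your opening reduction (instantiating \Cref{fact:fact} with the auxiliary reduction $\ired^*$, so that it suffices to prove $\ired^* \cdot \ered \ \subseteq\ \ered^* \cdot \ired^*$) and your combined one-step swaps $\ireda \cdot \ered \ \subseteq\ \ered^* \cdot \ireda^*$ and $\iredc \cdot \ered \ \subseteq\ \ered^* \cdot \iredc^*$, together with the subsystem-preservation observation, are all correct. The gap is exactly at the step you flag as the heart of the argument: the lifting to sequences. Your lexicographic measure \emph{(number of still-misplaced essential steps, then number of internal steps preceding the leftmost of them)} does not decrease along your own procedure. A single swap may emit $m \geq 2$ essential steps --- the hypotheses allow this, since the right-hand side $\eredc \cdot \reda^*$ of a linear swap refactorizes as $\eredc \cdot \ereda^* \cdot \ireda^*$ with $\ereda^*$ of arbitrary length, and likewise the intra-system swap yields $\ereda^*\cdot\ireda^*$ --- and whenever $m \geq 2$ while the tracked essential step still has $k-1\geq 1$ internal steps on its left, the first component of your measure jumps from $1+R$ to $m+R$ and only the second drops by one: lexicographically, an increase. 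Consequently termination of the outer loop is unjustified; each inner phase settles one essential step but may create arbitrarily many new misplaced ones, so no measure that counts essentials-still-to-place can close the argument.

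The proof in \cite{AccattoliFaggianGuerrieri21} (the source of this lemma) avoids ever confronting mixed internal sequences, and shows how to repair your draft with two moves. First, lift the linear swap over \emph{internal sequences of a single subsystem}: $\ireda^* \cdot \eredc \ \subseteq\ \eredc \cdot \reda^*$, by induction on the number of $\ireda$-steps; this induction goes through precisely because the swap is \emph{linear}, i.e.\ exactly one essential step appears on the right, so the inductive hypothesis can be applied to the prefix (this lifted swap never appears in your proposal). Second, instantiate \Cref{fact:fact} not with $\ired^*$ but with the auxiliary relation $\ireda^* \cup \iredc^*$, which also satisfies $(\ireda^* \cup \iredc^*)^* = \ired^*$. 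Strong postponement of this auxiliary relation after $\ered$ then splits into four cases, each involving only one subsystem's internals, and each closes in one line: for instance $\ireda^* \cdot \eredc \ \subseteq\ \eredc \cdot \reda^* \ \subseteq\ \eredc \cdot \ereda^* \cdot \ireda^*$ by the lifted swap and $\ex$-factorization of $\reda$, and $\ireda^* \cdot \ereda \ \subseteq\ \reda^* \ \subseteq\ \ereda^* \cdot \ireda^*$ directly by $\ex$-factorization. All global rearrangement is thereby delegated, once and for all, to Hindley's \Cref{l:SP}; no bespoke termination measure is needed. In short: your choice of $\ired^*$ as the auxiliary relation forces you to prove the mixed-block postponement by hand, which is where the argument breaks; choosing $\ireda^* \cup \iredc^*$ instead makes every case single-subsystem and elementary.
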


The subtlety here 
is to set $\ereda$ and $\eredc$ 
so that $\ered\eq  \eredc \cup \iredc$. As already shown in \Cref{sect:intro},  when dealing with normalizing strategies one needs extra~care.



\newcommand{\oterm}{$\op$-redex\xspace}
\newcommand{\oterms}{$\op$-redexes\xspace}

\renewcommand{\VarSet}{\mathsf{Var}}
\renewcommand{\ValSet}{\mathsf{Val}}
\renewcommand{\LambdaVal}{\ValSet}
\renewcommand{\LambdaOpVal}{\LambdaVal}
\renewcommand{\BangSetOpCtx}{\BangSetCtx}
\renewcommand{\LambdaOpCtx}{\LambdaCtx}

\section{ \texorpdfstring{$\lambda$}{lambda}-calculi: \CbN, \CbV, and bang}
\label{sect:lambda-calculi}

We present here a generic syntax for $\lambda$-calculi, possibly containing  operators.  
All the variants of the $\lambda$-calculus we shall study use this language. 
We assume some familiarity with the $\lam$-calculus, and  refer to \cite{Barendregt84,HindleySeldin86} for  details.

Given a countable set $\VarSet$ of variables, denoted by $\var, \varTwo, \varThree, \dots$, \emph{terms} and \emph{values} (whose sets are denoted by $\LambdaOp$ and $\LambdaOpVal$, respectively) are defined 
as~follows: 
\begin{align*}
	\LTm, \LTmTwo, \LTmThree &\Coloneqq \LVal \mid \LTm\LTmTwo \mid \op(\LTm_1, \dots, \LTm_k) \ \text{ \emph{Terms}:~ } \LambdaOp 
	&&&&&
	\LVal &\Coloneqq \var \mid \La{\var}{\LTm} \ \text{ \emph{Values}:~ }\LambdaOpVal
\end{align*}
where 
$\op$ ranges over a set $\OpSet$ of function symbols called \emph{operators},
each one with its own arity $k \in \Nat$.
If the operators  are $\op_1,\dots, \op_n$, the set of terms is  
indicated as $\Lambda_{\op_1...\op_n}$. When the set $\OpSet$ of operators is empty,  the 
calculus is called \emph{pure}, and the sets of terms is denoted by $\Lambda$. Otherwise, the calculus is \emph{applied}.

Terms are identified up to renaming of bound variables, where abstraction is the only binder. 
We denote by $\LTm\Sub{\LTmTwo}{\var}$ the capture-avoiding substitution of $\LTmTwo$ for the free occurrences of $\var$ in $\LTm$.
\emph{Contexts} (with exactly one hole $\HoleCtx$) are generated by the grammar below, and $\LCtx\Hole{\LTm}$ stands for the term obtained from the context $\LCtx$ by replacing the  hole with the term $\LTm$ (possibly capturing free variables).
\begin{align*}
				\LCtx & \Coloneqq \Hole{\cdot}   \mid \LTm\LCtx \mid \LCtx \LTm \mid \La{\var}\LCtx \mid \op(\LTm_1, \dots, \LCtx,\dots, \LTm_k)
				& \text{\emph{Contexts:~}} \LambdaCtx	
	\end{align*}

Let $ \Rule$ be a binary relation on $\LambdaOp$; we call it \emph{$\Rule$-rule} and denote it also by $\Root{\Rule}$, writing $\tm \Root{\Rule} \tm'$ rather than $(\tm,\tm')\in \rho$.
	A $\Rule$-\emph{reduction step}  $\To{\Rule}$ is 
the contextual closure of $\Rule$. 
 Explicitly,
$\LTm \ToRule \LTm'$
holds if  $\LTm = \LCtxp{\LTmThree}$ and $\LTm' = \LCtxp{\LTmThree'}$ for some context $\LCtx$ with $ \LTmThree \Root{\Rule} \LTmThree'$.
The term $\LTmThree$ is called a $\Rule$-\emph{redex}. The set of 
$\Rule$-\emph{redexes} is denoted~by~$\R_{\Rule}.$
 
Given a set of rules $\RulesSet$, the relation $\red = \bigcup_\Rule \ToRule$ ($\Rule \in \RulesSet$)  can equivalently be defined as the contextual closure of $\mapsto = \bigcup_\Rule \Root{\Rule}$. 

\condinc{}{
Given a binary relation $\Rule$ on $\LambdaOp$, called \emph{$\Rule$-rule},
 a $\Rule$-\emph{reduction step}  $\To{\Rule}$ is 
the contextual closure of $\Rule$.  
Explicitly, 
$\LTm \ToRule \LTm'$
holds if    $ (\LTmThree,\LTmThree')\in \Rule$,  $\LTm = \LCtxp{\LTmThree}$, and $\LTm' = \LCtxp{\LTmThree'}$,  for some context $\LCtx$.

The term $\LTmThree$ is called a $\Rule$-\emph{redex}. The set of all $\Rule$-\emph{redex} is denoted $\R_{\Rule}.$
We write the pair   $ (\LTmThree,\LTmThree')\in \Rule $ also as  $\LTmThree \Root{\Rule}\LTmThree'$. 

Given a set of rules $\RulesSet$, the relation $\red = \bigcup \ToRule$ ($\Rule \in \RulesSet$)  can equivalently be defined as the contextual closure of $\bigcup \Rule$. We also write $\red =\{\ToRule \mid \Rule \in \RulesSet \}$.
}




\pink{\paragraph{General properties of the contextual closure.}
We  recall a basic but key  property of contextual closure.
We say that $\tm$ and $\tm'$ have \emph{the same shape} if  both terms are an 
	application (resp. an abstraction, a variable, a constant,  or a term of shape  $\op (\tmp_1, ...,\tmp_k)$ ).
\begin{fact}[Shape preservation]\label{fact:shape} 
	Assume $\tm=\LCtxp{\tmr}\redx{\Rule} \LCtxp {\tmr'}=\tm'$ and that the  context   $\ctx$ is \emph{non-empty}. Then $\tm$ and $\tm'$ have the same shape.
\end{fact}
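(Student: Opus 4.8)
The plan is to observe that the statement is really a property of the context $\LCtx$ rather than of the rule $\Rule$: whenever $\LCtx$ is non-empty, the outermost term constructor of $\LCtxp{u}$ is fixed by $\LCtx$ alone and does not depend on the term $u$ filling the hole. First I would isolate and prove exactly this observation, by a case analysis on the topmost production used to generate $\LCtx$, exploiting that $\LCtx \neq \HoleCtx$ forces $\LCtx$ to be built by one of the four non-trivial clauses of the context grammar. No induction is needed, since only the root production of $\LCtx$ matters.

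Concretely, I would distinguish the four cases. If $\LCtx = \LTm\LCtx'$ or $\LCtx = \LCtx'\LTm$, then $\LCtxp{u} = \LTm\,(\LCtx'\Hole{u})$ or $\LCtxp{u} = (\LCtx'\Hole{u})\,\LTm$, respectively, so $\LCtxp{u}$ is an application for every $u$. If $\LCtx = \La{\var}{\LCtx'}$, then $\LCtxp{u} = \La{\var}{\LCtx'\Hole{u}}$ is an abstraction for every $u$. Finally, if $\LCtx = \op(\LTm_1,\dots,\LCtx',\dots,\LTm_k)$, then $\LCtxp{u} = \op(\LTm_1,\dots,\LCtx'\Hole{u},\dots,\LTm_k)$ has shape $\op(\dots)$ for every $u$. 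In each case the outermost constructor---application, abstraction, or the operator symbol $\op$---is read off from $\LCtx$ and is the same no matter what fills the hole.

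To conclude, since $\tm = \LCtxp{\tmr}$ and $\tm' = \LCtxp{\tmr'}$ share the same non-empty context $\LCtx$, I would instantiate the observation with $u \defeq \tmr$ and with $u \defeq \tmr'$: both $\tm$ and $\tm'$ carry the outermost constructor dictated by $\LCtx$, hence they have the same shape. I do not expect any genuine obstacle here, as this is a purely structural fact; the only point deserving a word of care is the bookkeeping of the classification, namely that a non-empty context can never have a variable (nor a nullary constant) at its root, so those two shapes simply do not arise from such a $\LCtx$, which is harmless since the two reducts are forced to agree on whatever shape $\LCtx$ imposes. It is also worth stressing that the non-emptiness hypothesis is essential: if $\LCtx = \HoleCtx$, then $\tm = \tmr$ and $\tm' = \tmr'$, and a redex and its contractum may well differ in shape.
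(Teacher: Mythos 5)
Your proof is correct: the paper states this fact without proof, treating it as a basic property of contextual closure, and your case analysis on the root production of the non-empty context $\LCtx$ is precisely the intended justification. You also rightly note that no induction is needed and that non-emptiness of $\LCtx$ is exactly what excludes the degenerate case $\LCtx = \HoleCtx$, where a redex and its contractum may indeed differ in shape.
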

That is,  if a step $\redc$ is obtained by closure under \emph{non-empty context}, then it preserve the shape of the term.

Please notice that we will often write $\rredc$ to indicate the step $\redc$ which is obtained by \emph{empty contextual closure}.
}

\subsection{Call-by-Name and Call-by-Value $\lam$-calculi}

\paragraph{Pure \CbN and Pure \CbV $\lambda$-calculi.}
	\label{ex:cbn-cbv-calculi}
	The \emph{pure call-by-name} (\CbN for short) $\lambda$-calculus \cite{Barendregt84,HindleySeldin86} is  $(\Lambda, \redb)$, the set of terms $\Lambda$ together  with the $\beta$-reduction  $\redb$, 
defined as the contextual closure of the  usual  $\beta$-rule, which we recall in \eqref{eq:rule-beta} below.

The \emph{pure call-by-value} (\CbV for short) $\lambda$-calculus \cite{Plotkin75} is the set $\Lambda$ endowed with the reduction $\ToBetav$, defined as the contextual closure of the  $\Betav$-rule 
in \eqref{eq:rule-betav}.

\noindent
	\begin{minipage}{0.38\linewidth}
		\begin{equation}
		\label{eq:rule-beta}
		\textup{\CbN: }  \, (\La{\var}\LTm)\LTmTwo \Root{\beta} \LTm\Sub{\LTmTwo}{\var}
		\end{equation}
	\end{minipage}
	\quad
	\begin{minipage}{0.56\linewidth}
		\begin{equation}
		\label{eq:rule-betav}
		\textup{\CbV: } \,	(\La{\var}{\LTm})\LVal \mapsto_{\Betav} \LTm\Sub{\LVal}{\var} \mbox{ \ with } \LVal \!\in\! \LambdaVal
		\end{equation}
	\end{minipage} 

\paragraph{\CbN and \CbV $\lambda$-calculi.}
A \CbN (resp. \CbV) $\lambda$-calculus   is the set  of terms endowed with a reduction $\To{} $ which extends $\redb$ (resp. $\redbv$).

In particular, the   \emph{applied}  setting with operators (when $\OpSet \neq \emptyset$) models in the $\lam$-calculus richer computational features,
 allowing $\op$-reductions as the contextual closure of  $\op$-rules of the form  $\op (\LTm_1, \dots, \LTm_k) \Root{\op} \LTmTwo$.

\begin{example} [Non-deterministic $\lam$-calculus]\label{ex:NDext} Let  $\OpSet =\{\oplus\}$ where $\oplus$ is a binary operator; let  $\redo$  be the contextual  
	closure of   the (non-deterministic) rule below:
	\[	\oplus (\tm_1,\tm_2)\mapsto_{\oplus} \tm_1 \quad \text{ and }   \quad   \oplus (\tm_1,\tm_2)\mapsto_{\oplus} \tm_2\]
	
The \emph{non-deterministic \CbN $ \lam $-calculus} 
 $\PLambda^\cbn=(\Lambda_{\oplus},\redx{\beta\oplus})$ is the set $\Lambda_{\oplus}$ 
 with the reduction $\redx{\beta\oplus} \ = \ \tob \cup \red_{\oplus} $. 
 The \emph{non-deterministic \CbV $ \lam $-calculus}
 $\PLambda^\cbv=(\Lambda_{\oplus},\redx{\betav\oplus})$ is the set $\Lambda_{\oplus}$ 
 with the reduction $\redx{\betav\oplus} \ = \ \tobv \cup \red_{\oplus} $.
\end{example}

\subsection{Bang calculi}
\label{sect:bang-calculus}

The bang calculus \cite{EhrhardGuerrieri16,GuerrieriManzonetto18} is a variant of the $\lambda$-calculus inspired by linear logic. 
An operator $\oc$ plays the role of a marker for duplicability and discardability.
Here we allow also the presence of operators other than $\Bang$, ranging over a set $\OpSet$.
So, terms and contexts of the bang calculus (denoted by capital letters) are: 
\begin{align*}
\Tm, \TmTwo, \TmThree &\Coloneqq \var \mid \La{\var}{\Tm} \mid \Tm\TmTwo \mid \Bang{\Tm} \mid \op(\Tm_1, \dots, \Tm_k) 
&
\text{\emph{Terms:~}}\BangSetOp
\\
\Ctx & \Coloneqq \Hole{\cdot}  \mid \La{\var}\Ctx \mid \Tm\Ctx \mid \Ctx \Tm \mid \Bang{\Ctx} \mid \op(\Tm_1, \dots, \Ctx,\dots, \Tm_k)
& 
\text{\emph{Contexts:~}}\BangSetOpCtx
\end{align*}
Terms of the form $\Bang{\Tm}$ are called \emph{boxes} and their set is denoted by $\BoxSetOp$.
When there are no operators other than $\oc$ (\ie $\OpSet = \emptyset$), the sets of terms,  boxes and contexts are denoted by $\BangSet$, $\BoxSet$ and $\BangSetCtx$, respectively.
This syntax can be expressed in the one of \Cref{sect:lambda-calculi}, where $\oc$ is an unary operator called \emph{bang}. 

\paragraph{The pure bang calculus.}
The  \emph{pure} bang calculus  $(\BangSet, \redbb)$ is the set of  terms $\BangSet$  endowed with reduction $\ToTot$, the closure under contexts in $\BangSetOpCtx$ of the \emph{$\Tot$-rule}: 
	\begin{equation}
			(\La{\var}{\Tm})\,\Bang{\TmTwo} \Root{\Tot} \Tm \Sub{\TmTwo}{\var}
	\end{equation}

Intuitively, in the bang calculus the bang-operator $\oc$ marks the only 
terms that can be erased and duplicated.
Indeed, a \emph{$\beta$-like redex} $(\La{\var}{\Tm}){\TmTwo}$ can be fired by $\Root{\Tot}$ only when its argument $\TmTwo$ is a \valu, \Ie$\TmTwo = \Bang{\TmThree}$: 
if it is so, the content $\TmThree$ of the box $\TmTwo$ (and not $\TmTwo$ itself) replaces any free occurrence of $\var$ in $\Tm$.\footnotemark
\footnotetext{Syntax and reduction rule of the bang calculus 
	follow \cite{GuerrieriManzonetto18}, which 
	is slightly different from 
	\cite{EhrhardGuerrieri16}. 
	Unlike \cite{GuerrieriManzonetto18} (but akin to \cite{SantoPintoUustalu19}), here we do not use  $\Der{\!}$ (aka $\mathsf{der}$) as a primitive, since $\Der{\!}$ and its associated rule $\Root{\Derel}$ can be simulated, see \Cref{ex:identity}~and~\eqref{eq:der}.}

A proof of confluence of $\Tot$-reduction $\ToTot$ is in \cite{GuerrieriManzonetto18}.
\newcommand{\bid}{I}
\begin{notation}
	\label{note:terms}
	We use the following notations to denote some notable terms.
	\begin{align*}
 \Der{\!} := \La{\var}{\var} &&  \delta:=\La{\var}{\var}{\var} && \bid := \La{\var}{\Bang{\var}} && \Delta := \La{\var}{\var\,\Bang{\var}}.
 \end{align*} 
 \end{notation}
 \begin{remark}[Notable terms]\label{ex:delta}\label{ex:identity}
 	The 
 	term $\bid = \La{\var}{\Bang{\var}}$ plays the role of the identity in the bang calculus: $\bid \, \Bang{\Tm} \ToTot \Bangp{\var\Sub{\Tm}{\var}} = \Bang{\Tm}$ for any term $\Tm$.
 	Instead, the term $\Der{\!} = \La{\var}{\var}$, when applied to a box $\Bang{\Tm}$, opens the box, \Ie returns its content $\Tm$: $\Der{}\Bang{\Tm} \ToTot \var\Sub{\Tm}{\var} = \Tm$.
 	Finally,  $\Delta\, \Bang{\Delta} \ToTot \Delta\, \Bang{\Delta} \ToTot \dots$ is a diverging term.
 \end{remark}

%


\paragraph{A bang calculus.}
A \emph{bang calculus}  $(\BangSetOp, \To{})$    is the set $\BangSetOp$ of terms endowed with a reduction $\To{} $ which extends $ \ToTot$.
In this paper  we shall consider  calculi where 
 $\to$ contains  $\ToTot$ and  $\op$-reductions $\ToOp$ ($\op\in \OpSet $) defined from $\op$-rules of the form  $\op (\Tm_1, \dots, \Tm_k) \Root{\op} \TmTwo$, and possibly other rules.
  So,  $\red \,=\bigcup_{\Rule}\ToRule  (\Rule\in \RulesSet)$, with $\RulesSet \supseteq \{\Bang{\beta}, \op \mid \op\in \OpSet\}$.
 We set $\redx{\OpSet} \,= \bigcup_{\op\in \OpSet}\redx{\op}$.

\subsection{\CbN and \CbV translations into the bang calculus}
\label{subsect:translations}

Our  motivation to study the bang calculus is 
to have a general framework 
where both \CbN \cite{Barendregt84} and 
\CbV \cite{Plotkin75} $\lambda$-calculi 
can be embedded, via two distinct translations. 
Here we  show how these translations work.
We extend the simulation results
in \cite{GuerrieriManzonetto18,SantoPintoUustalu19,BucciarelliKesnerRiosViso20} 
for the pure case to the case with operators (\Cref{thm:embedding}).

Following \cite{BucciarelliKesnerRiosViso20}, the \CbV translation defined here differs from \cite{GuerrieriManzonetto18,SantoPintoUustalu19} in the application case.
\Cref{sect:embedding} will show why this optimization is crucial.

\emph{\CbN} and \emph{\CbV} \emph{translations} are two maps $\Cbn{(\cdot)} \colon \LambdaOp \to \BangSetOp$ and $\Cbv{(\cdot)} \colon \LambdaOp \to \BangSetOp$, respectively, translating terms of the $\lambda$-calculus into terms of the bang~calculus: 
\begin{align*}
\Cbn{\var} &\Defeq \var 						& \ \Cbn{(\La{\var\,}{\LTm})} &\Defeq \La{\var\,}{\Cbn{\LTm}} 
& \Cbn{(\op(\LTm_1, \dots, \LTm_k))} &\Defeq \op(\Cbn{\LTm_1}, \dots, \Cbn{\LTm_k})
& \ \Cbn{(\LTm\LTmTwo)} &\Defeq \Cbn{\LTm} \,{\Bang{{\Cbn{\LTmTwo}}}} \,; \\ 
\Cbv{\var} &\Defeq \Bang{\var} & \	\Cbv{(\La{\var\,}{\LTm})} &\Defeq \Bang{(\La{\var}{\Cbv{\LTm}})} 
& \Cbv{(\op(\LTm_1, \dots, \LTm_k))} &\Defeq \op(\Cbv{\LTm_1}, \dots, \Cbv{\LTm_k})
& \ \Cbv{(\LTm\LTmTwo)} &\Defeq 
\begin{cases}
\Tm \,\Cbv{\LTmTwo} & \text{if } \Cbv{\LTm} = \Bang{\Tm}
\\
(\Der{\Cbv{\LTm}}){\Cbv{\LTmTwo}} &\text{otherwise}.
\end{cases}
\end{align*}

\begin{example}
	\label{ex:delta-translated}
	Consider the $\lambda$-term  $\omega \Defeq \delta\delta$: 
	then, $\Cbn{\delta} = \Delta$, $ \Cbv{\delta} = \Bang{ \Delta}$ and 
	$\Cbn{\omega} = \Delta\,{\Bang{\Delta}} = \Cbv{\omega}$ 
	($\delta$ and $\Delta$ are defined in \Cref{note:terms}).
	The $\lambda$-term $\omega$ is diverging in \CbN and \CbV $\lambda$-calculi, and so is $\Cbn{\omega} = \Cbv{\omega}$ in the bang calculus, see \Cref{ex:delta}.
\end{example}

For any term $\LTm \in \LambdaOp$, $\Cbn{\LTm}$ and $\Cbv{\LTm}$ are just different decorations of $\LTm$ by means of the bang-operator $\oc$ (recall that $\Der{\!} = \La{\var}{\var}$). 
The translation $\Cbn{(\cdot)}$ puts the argument of any application into a box: in \CbN 
any term is duplicable or discardable.
On the other hand, only \emph{values} (\Ie abstractions and variables) are translated by $\Cbv{(\cdot)}$ into boxes, as they are the only terms duplicable or discardable in~\CbV. 

As in \cite{GuerrieriManzonetto18,SantoPintoUustalu19}, we prove that the \CbN translation $\Cbn{(\cdot)}$ (\Resp \CbV translation $\Cbv{(\cdot)}$) from the pure \CbN (\Resp \CbV) $\lambda$-calculus into the bang calculus is \emph{sound} and \emph{complete}: it maps $\beta$-reductions (\Resp $\Betav$-reductions) of the $\lambda$-calculus into $\Tot$-reductions of the bang calculus, and conversely $\Tot$-reductions\,---\,when  restricted to the image of the translation\,---\,into $\beta$-reductions (\Resp $\Betav$-reductions). 
The same holds if we consider any $\op$-reduction for operators.

In the simulation, $\ToDer$ denotes the contextual closure of the rule:
	\begin{equation}
	\label{eq:der}
		\Der{\Bang{\Tm}} \Root{\Derel} \Tm  \quad \text{(this is nothing but } (\La{\var}{\var}) \Bang{\Tm} \Root{\Tot} \Tm \text{)}
	\end{equation}
	Clearly, $\ToDer \, \subseteq \, \ToTot$ (\Cref{ex:identity}).
	We write $\Tm \ToDerNorm \TmTwo$ if $\Tm \ToDer^* \TmTwo$ and $\TmTwo$ is $\Derel$-normal.  

\begin{restatable}[Simulation of \CbN and \CbV]{proposition}{embedding}
	\label{thm:embedding}
	Let
	$\LTm \in \LambdaOp$ and $\op \in \OpSet$. 
	\begin{enumerate}
		\item\label{p:embedding-cbn} 
		\emph{\CbN soundness:} If $\LTm \ToBeta \LTm'$ then 
		$\Cbn{\LTm} \ToTot \Cbn{{\LTm'}}$.  
		If $\LTm \ToOp \LTm'$ then 
		$\Cbn{\LTm} \ToOp \Cbn{{\LTm'}}$.
		\\
		\emph{\CbN completeness:} 
		If $\Cbn{\LTm} \ToTot \TmTwo$ then $\TmTwo = \Cbn{{\LTm'}}$ and $\LTm \ToBeta \LTm'$, for some $\LTm' \in \LambdaOp$.
		If $\Cbn{\LTm} \ToOp \TmTwo$ then $\TmTwo = \Cbn{{\LTm'}}$ and $\LTm \ToOp \LTm'$, for some $\LTm'\in \LambdaOp$.
		\item\label{p:embedding-cbv}
		\emph{\CbV soundness:} If $\LTm \ToBetav \LTm'$ then $\Cbv{\LTm} \ToVal\ToDer^= \Cbv{{\LTm'}}$ 
		with $\Cbv{{\LTm'}}$ $\Derel$-normal.
		If $\LTm \ToOp \LTm'$ then $\Cbv{\LTm} \ToOp\ToDer^= \Cbv{{\LTm'}}$ 
		with $\Cbv{{\LTm'}}$ $\Derel$-normal.
		\\
		\emph{\CbV completeness:} 
		If $\Cbv{\LTm} \ToVal \ToDerNorm \TmTwo$ 
		then $\Cbv{\LTm} \ToVal \ToDer^= \TmTwo$ with $\TmTwo = \Cbv{{\LTm'}}$ and $\LTm \ToBetav \LTm'$, for some $\LTm' \in \LambdaOp$.
		If $\Cbv{\LTm} \ToOp \ToDerNorm \TmTwo$ 
		then $\Cbv{\LTm} \ToOp \ToDer^= \TmTwo$ with $\TmTwo = \Cbv{{\LTm'}}$ and $\LTm \ToOp \LTm'$, for some $\LTm' \in \LambdaOp$.
	\end{enumerate}
\end{restatable}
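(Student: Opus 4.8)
The plan is to reduce each clause to the contextual closure of the corresponding root rule, so that once we have two \emph{substitution lemmas} and one \emph{syntactic invariant}, every statement follows by induction on the reducing step. The substitution lemmas are $\Cbn{\LTm\Sub{\LTmTwo}{\var}} = \Cbn{\LTm}\Sub{\Cbn{\LTmTwo}}{\var}$ for \CbN, and $\Cbv{\LTm\Sub{\LVal}{\var}} = \Cbv{\LTm}\Sub{\Tm}{\var}$ for \CbV whenever $\LVal \in \LambdaVal$ and $\Cbv{\LVal} = \Bang{\Tm}$ (recall that values, and only values, translate to boxes). Both are proved by induction on $\LTm$; crucially the \CbV one requires \emph{no} $\Derel$-step, because substituting a value for a variable preserves the value/non-value status of the head of every application, hence preserves the branch (box vs.\ $\Der{\!}$) chosen by $\Cbv{(\cdot)}$ on applications. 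The invariant is that $\Cbv{\LTm}$ is $\Derel$-normal for every $\LTm$, again by induction on $\LTm$: the only unboxed abstraction the translation ever creates is $\Der{\!} = \La{\var}{\var}$ from the $\Der{\!}$-branch, where it is applied to $\Cbv{\LTmThree}$ with $\LTmThree$ a non-value, hence to a non-box, so no $\Root{\Derel}$-redex is formed. This invariant discharges all the ``$\Cbv{\LTm'}$ is $\Derel$-normal'' side conditions for free.

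For \CbN the translation is uniform (every argument of an application is boxed), so no $\Derel$-bookkeeping is needed and both directions are routine. Soundness: the root case gives $\Cbn{(\La{\var}{\LTm})\LTmTwo} = (\La{\var}{\Cbn{\LTm}})\Bang{\Cbn{\LTmTwo}} \Root{\Tot} \Cbn{\LTm}\Sub{\Cbn{\LTmTwo}}{\var} = \Cbn{\LTm\Sub{\LTmTwo}{\var}}$ by the substitution lemma, and each context constructor lifts a single $\ToTot$-step without creating new redexes. Completeness: a $\Root{\Tot}$-redex in $\Cbn{\LTm}$ must have shape $(\La{\var}{\Cbn{\LTmThree}})\Bang{\Cbn{\LTmFour}}$, the image of the $\beta$-redex $(\La{\var}{\LTmThree})\LTmFour$ of $\LTm$; firing it and reading the substitution lemma backwards produces $\LTm'$. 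Operator steps lift in both directions since $\Cbn{(\cdot)}$ is a homomorphism on operator symbols and the operator rules commute with it (for choice-style operators such as $\oplus$, whose contractum is one of the arguments, this is immediate).

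The real work is \CbV soundness, and the delicate point is exactly the two application branches. The root case $(\La{\var}{\LTm})\LVal \Root{\Betav} \LTm\Sub{\LVal}{\var}$ (with $\LVal$ a value, $\Cbv{\LVal} = \Bang{\Tm}$) gives $\Cbv{(\La{\var}{\LTm})\LVal} = (\La{\var}{\Cbv{\LTm}})\Bang{\Tm} \Root{\Tot} \Cbv{\LTm}\Sub{\Tm}{\var} = \Cbv{\LTm\Sub{\LVal}{\var}}$ by the substitution lemma, with \emph{zero} $\Derel$-steps. I then induct on the context. Every case lifts the single $\ToVal$-step transparently except one: a step $\LTmThree\LTmFour \ToBetav \LTmThree'\LTmFour$ in the function position of an application where $\LTmThree$ is \emph{not} a value but $\LTmThree'$ \emph{is}. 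Here $\Cbv{\LTmThree\LTmFour}$ uses the $\Der{\!}$-branch $(\Der{\Cbv{\LTmThree}})\Cbv{\LTmFour}$ while $\Cbv{\LTmThree'\LTmFour}$ uses the box branch $\Tm'\Cbv{\LTmFour}$ with $\Cbv{\LTmThree'} = \Bang{\Tm'}$, so reducing inside reaches $(\Der{\Bang{\Tm'}})\Cbv{\LTmFour}$ and one step $\Der{\Bang{\Tm'}} \Root{\Derel} \Tm'$ recovers $\Cbv{\LTmThree'\LTmFour}$. What keeps the $\Derel$-steps down to \emph{at most one} is a short lemma: if $\LTmThree \ToBetav \LTmThree'$ with $\LTmThree$ a non-value and $\LTmThree'$ a value, then the step is at the root of $\LTmThree$ --- a deeper redex would lie either under an abstraction (forcing $\LTmThree$ to be a value) or inside an application (forcing $\LTmThree'$ to be a non-value). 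Thus the inductive hypothesis on $\LTmThree \ToBetav \LTmThree'$ is the base case and contributes zero $\Derel$-steps, so the only $\Derel$-step is the newly exposed one. This is the main obstacle, as it is where the non-uniformity of the \CbV embedding surfaces; the $\op$-clause is identical, noting that an operator term can reach a value only by a root $\op$-step.

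Finally, \CbV completeness is the converse bookkeeping, resting on the structural fact that the $\Root{\Tot}$-redexes of $\Cbv{\LTm}$ are \emph{exactly} the images of the $\Betav$-redexes of $\LTm$: an unboxed abstraction applied to a box can arise only from the box branch $(\La{\var}{\Cbv{\LTmThree}})\Cbv{\LTmFour}$ with $\LTmThree$ an abstraction and $\LTmFour$ a value, i.e.\ from a $\Betav$-redex $(\La{\var}{\LTmThree})\LTmFour$ of $\LTm$ (the $\Der{\!}$-branch produces no such redex, its argument being a non-box). Given $\Cbv{\LTm} \ToVal P \ToDerNorm \TmTwo$, the fired $\ValScript$-step therefore matches a step $\LTm \ToBetav \LTm'$; \CbV soundness applied to it gives $\Cbv{\LTm} \ToVal\ToDer^= \Cbv{\LTm'}$ firing the same redex, so $P \ToDer^= \Cbv{\LTm'}$. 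Since $\ToDer$ is orthogonal it is confluent, hence $\Derel$-normal forms are unique; as $\Cbv{\LTm'}$ is $\Derel$-normal and $P \ToDerNorm \TmTwo$, we conclude $\TmTwo = \Cbv{\LTm'}$. The $\op$-clause is handled the same way, with the $\op$-redex in place of the $\Betav$-redex.
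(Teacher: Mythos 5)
Your proof is correct, and its scaffolding coincides with the paper's: your two substitution lemmas and the invariant that $\Cbv{\LTm}$ is $\Derel$-normal are exactly the paper's substitution lemma and its remark on $\Derel$-normality, and the \CbN part is the same induction. Where you genuinely diverge is in the two delicate \CbV points. For soundness, the paper proves a \emph{strengthened} statement by induction (``if moreover $\LTm'$ is a value and $\LTm$ is not, then $\Cbv{\LTm} \ToVal \Cbv{{\LTm'}}$ with no $\Derel$-step'') and threads that extra clause through the application cases; you instead isolate a structural lemma---a non-value can become a value only by a root step, by shape preservation of contextual closure under non-empty contexts---so that in the problematic application case the inner reduction is the base case and contributes no $\Derel$-step. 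The two are interderivable (your lemma immediately yields the paper's strengthened clause), but your packaging keeps the induction statement unchanged and makes the underlying reason explicit. For completeness the difference is more substantial: the paper runs a second induction on the reduct, with a case analysis driven by its ``unique $\Derel$-redex'' observation, whereas you derive completeness from (i) the characterization of the $\Tot$-redexes of $\Cbv{\LTm}$ as exactly the images of the $\Betav$-redexes of $\LTm$, (ii) soundness applied to the matching $\Betav$-step, and (iii) uniqueness of $\Derel$-normal forms, via orthogonality (hence confluence) of $\ToDer$. This is shorter and more modular; its one extra obligation is the positional correspondence you invoke when asserting that the soundness sequence fires ``the same redex,'' so that its intermediate term is the given $P$---worth making explicit, since the paper's hands-on induction gets it for free.

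One justification needs repair, although the claim it supports is true. You assert that ``the only unboxed abstraction the translation ever creates is $\Der{\!}$.'' This is false: the box branch of the application case unboxes translated abstractions, as in the paper's own \Cref{ex:delta-translated}, where $\Cbv{(\delta\delta)} = \Delta\,\Bang{\Delta}$ and $\Delta = \La{\var}{\var\Bang{\var}}$ is unboxed and distinct from $\Der{\!}$. The invariant nevertheless holds, because a $\Derel$-redex requires specifically the \emph{identity} $\La{\var}{\var}$ applied to a box: an abstraction exposed by the box branch has body $\Cbv{\LTmThree}$ for some $\LTmThree \in \LambdaOp$, and such a translation is never a bare variable (it is a box, an application, or an operator term), so this abstraction is never the identity; while the identity introduced by the $\Der{\!}$-branch is applied to the translation of a non-value, hence to a non-box. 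Replacing ``unboxed abstraction'' by ``occurrence of the identity applied to a box'' makes your induction go through.
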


\pink{Note that
\emph{one step} of $\beta$-reduction corresponds exactly, via $\Cbn{(\cdot)}$, to \emph{one step} of  
$\Tot$-reduction, and vice-versa; 
\RED{and \emph{one step} of $\Betav$-reduction corresponds exactly, via $\Cbv{(\cdot)}$, to \emph{one step} of  $\ValScript$-reduction, and vice-versa.} The same holds for $\op$-reduction.}

\begin{example}
	\label{ex:simulation}
	Let $\LTm = ((\La{\varThree}{\varThree})\var)\varTwo$ and $\LTm' = \var\varTwo$. Then 
	 $\LTm \ToBeta \LTm'$ while $\Cbn{\LTm} = ((\La{\varThree}{\varThree})\Bang{\var})\Bang{\varTwo} \ToTot \var\, \Bang{\varTwo} = \Cbn{{\LTm'}}$;
	and  $\LTm \ToBetav \LTm'$ while $\Cbv{\LTm} = (\Derp{(\La{\varThree}{\Bang{\varThree}})\Bang{\var}})\Bang{\varTwo} \ToVal (\Der{\Bang{\var}})\Bang{\varTwo} \ToDer \var\, \Bang{\varTwo} = \Cbv{{\LTm'}}$.
\end{example}

\section{The least-level strategy}\label{sec:ll}
\newcommand{\good}{good\xspace}

The bang calculus $\BangSet$ has a natural normalizing 
strategy, issued  by linear logic (where it was first used in \cite{CarvalhoPF11}), namely  the \emph{least-level reduction}. It reduces only redexes at \emph{least level}, where the   \emph{level} of a redex $R$ in a term $\Tm$ is  the number of boxes $\oc$ 
 in which  $R$ is nested. 
 
Least-level reduction is easily extended to  a general bang calculus $(\BangSetOp, \To{})$.
The level of a redex $R$ is then the number of boxes $\oc$  and  operators $\op$ in which  $R$ is nested; 
intuitively, least-level reduction fires a redex which is \emph{minimally nested}. 

Below, we formalize the  reduction in a way that is independent of the specific shape of the redexes, and even of  specific definition of level one chooses.
The interest of least-level reduction is in the properties it satisfies. All our developments will rely on such properties, rather than the specific definition of least level.

In this section, 
$\red=\bigcup_\Rule \ToRule,  \text{for } \Rule\in \RulesSet $ a set of rules.  We write $\R = \bigcup_\Rule \R_{\Rule}$ for   the set of  \emph{all}  redexes.

\subsection{Least-level reduction in bang calculi}\label{sec:ll_explicit}\label{sec:ll_def}

%
%

 The \emph{level} of an occurrence of redex $R$ in a term $\Tm$ is a measure of its depth. 
Formally, we indicate the \emph{occurrence of   a subterm} $R$ in $\Tm$ with  the context $\Ctx$ such that $\Ctxp{R} = \Tm$.
Its  level   then corresponds to the \emph{level} $\lev{\Ctx}$ of the hole in $\Ctx$.
The definition of \emph{level} in a bang calculus $\BangSetOp$ is formalized  as follows.
\begin{equation}
	\label{eq:level-bang}
	\begin{gathered}
	\lev{\Hole{\cdot}}  = 0 \qquad
	\lev{\La{\var}\Ctx} = \lev {\Ctx} \qquad
	\lev {\Ctx \Tm} = \lev {\Ctx} \qquad
	\lev {\Tm \Ctx} = \lev {\Ctx}
	\\
	\lev {\Bang{\Ctx}} = \lev {\Ctx} + 1 \qquad
	\lev {\op (\dots, \Ctx, \dots)} = \lev{\Ctx} +1
	\end{gathered}
	\end{equation}
Note that  the  level increases by $1$ in the scope of $\oc$, and 
of any operator $\op\in \OpSet$.

 A reduction step $\Tm  \ToRule  \TmTwo$  
 is \emph{at level} $k$ 
 if it fires a $\Rule$-redex at level $k$; it is 
 \emph{least-level } 
 if it reduces a  redex whose level is minimal.

 The \emph{least level} $\llBang{\Tm}$ of a term $\Tm$ expresses the minimal level of any  occurrence of redexes 
 in $\Tm$;  if no redex is in $\Tm$, we set $\llBang{\Tm} = \infty$. 
%
Formally
\begin{definition}[Least-level reduction] \label{def:ll}
	\label{def:depth-bang}
	Let   $\red\eq   \bigcup_\Rule \redx{\Rule}$ ($ \Rule\in \RulesSet$) and   $\R = \bigcup_\Rule \R_{\Rule}$  the   set of   redexes.  
 Given a function $\lev{-}$ from contexts into   $\Nat $:
\begin{itemize}
	
	\item The  \emph{least level} 
of a term $\Tm$ is  defined as 
\begin{align} 
\label{eq:least-level}
\llb{}{\Tm} \Defeq
\inf \{\lev{\Ctx} \mid  \Tm = \Ctxp{\TmThree} \text{ for some }  \TmThree\in \R \} \in (\Nat \cup \{\infty\}).\footnotemark
\end{align}
\footnotetext{Recall  that $\inf \emptyset = \infty$, when $\emptyset$ is seen as the empty subset of $\Nat$ with the usual order.}

 \item A $\Rule$-reduction step $\Tm  \ToRule  \TmTwo$  is:
\begin{enumerate}
	\item \emph{at level $k$}, written $\Tm \ToRuleAt{k} \TmTwo$, if  $\Tm \Defeq \Ctxp{\TmThree} \ToRule \Ctxp{\TmThree'} \Eqdef \TmTwo$ and $\lev{\Ctx} = k$. 
	\item \emph{least-level},  written $\Tm \llredRule \TmTwo$, if $\Tm \ToRuleAt{k} \TmTwo$ and  $k=\llb{}{\Tm}$.
	\item   \emph{internal}, written $\Tm \nllredRule \TmTwo$, if  $\Tm \ToRuleAt{k} \TmTwo$ and  $k>\llb{}{\Tm}$.
\end{enumerate}

\item \emph{Least-level reduction}  is $\llred \eq \bigcup_\Rule \llredRule$  ($\Rule\in \RulesSet$).	
\item \emph{Internal reduction}  is $\nllred \eq \bigcup_\Rule \nllredRule $   ($\Rule\in \RulesSet$).

\end{itemize}
\end{definition}

Note that $\red \ = \ \llred \!\cup \nllred $.  
Note also that the definition of least level of a term  depends on the set  $\R = \bigcup_\Rule \R_{\Rule}$ of redexes associated with $\red$.\footnotemark
\footnotetext{
	We should write  $ \llBangR{\Tm} $, $\ll_{\R}$ and $\xredx{\ll_{\R}}{\Rule}$, but we avoid it for the sake of readability.}

\paragraph{Normal Forms.}
It is immediate that $\llred\subset \red $ is a \emph{strategy} for $\red$. Indeed, $\llred$ and $\red$ have the \emph{same normal forms} because if $M$ has a $\red$-redex, it has a redex at least-level, \ie it has a $\llred$-redex.

 \begin{remark}[Least level of normal forms] 
 	\label{rmk:least-level-normal}
	Note that  $\llb{}{\Tm} = \infty$ if and only if $\Tm$ is $\red$-normal, because $\lev{\Ctx} \in \Nat$ for all contexts $\Ctx$.
\end{remark}

\paragraph{A \good least-level reduction.}


The beauty of least-level reduction for the bang calculus, is that  it satisfies some elegant properties, which allow for neat proofs, in particular 
monotonicity and internal invariance (in \Cref{def:good}).
The developments  in the rest of the paper rely on such properties,  and in fact will apply to any  calculus whose reduction $\red$ has the properties described below.

\begin{definition}[Good least-level]\label{def:good} A reduction $\red$ has a \emph{\good least-level} if:
	\begin{enumerate}
		\item\label{p:ll-properties-monotone}
		\emph{Monotonocity:}  $\Tm \red \TmTwo$ implies $\llBang{\Tm} \leq \llBang{\TmTwo}$.
		
		\item\label{p:ll-properties-invariance}
		\emph{Internal invariance:}  $\Tm \nllred \TmTwo$ implies $\llBang \Tm = \llBang{\TmTwo}$. 
			\end{enumerate}
		
\end{definition}
%
%
%
%
%
%
%
Point 1.  states that no step can decrease the least level of a term. Point 2.  says that internal steps cannot change the least level of a term. 
Therefore, only least-level steps may increase the least level.
Together, they  imply  persistence:  only least-level steps can approach normal forms.
%
%
\begin{fact}[Persistence] 
	\label{p:ll-properties-persistence}
	If $\red$ has a good least-level, then $\Tm \nllred \TmTwo$ 
		implies $\TmTwo$ is not $\red$-normal.
	
\end{fact}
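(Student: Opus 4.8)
The plan is to reduce everything to the least-level characterization of normal forms from \Cref{rmk:least-level-normal}: a term $\TmThree$ is $\red$-normal if and only if $\llBang{\TmThree} = \infty$. Hence, to prove that $\TmTwo$ is not $\red$-normal, it suffices to show that $\llBang{\TmTwo}$ is finite, i.e.\ $\llBang{\TmTwo} \in \Nat$.

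First I would observe that the hypothesis $\Tm \nllred \TmTwo$ already forces $\llBang{\Tm}$ to be finite. Indeed, unfolding \Cref{def:ll}, an internal step has the form $\Tm = \Ctxp{\TmThree} \ToRuleAt{k} \Ctxp{\TmThree'} = \TmTwo$ for some context $\Ctx$ and some redex $\TmThree \in \R$, with $k = \lev{\Ctx}$ and, crucially, $k > \llBang{\Tm}$. In particular $\Tm$ contains a redex, so the set over which the infimum in \eqref{eq:least-level} is taken is nonempty, and moreover $\llBang{\Tm} < k$ with $k \in \Nat$. Therefore $\llBang{\Tm} \in \Nat$.

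Next I would apply internal invariance, that is, point~2 of \Cref{def:good}: since $\Tm \nllred \TmTwo$, we get $\llBang{\Tm} = \llBang{\TmTwo}$. Combining this with the previous paragraph yields $\llBang{\TmTwo} = \llBang{\Tm} \in \Nat$, so $\llBang{\TmTwo} \neq \infty$. By \Cref{rmk:least-level-normal} this means precisely that $\TmTwo$ is not $\red$-normal, which is the claim.

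I do not expect any genuine obstacle here: the statement is an immediate consequence of internal invariance together with the normal-form characterization, and it is worth noting that monotonicity (point~1 of \Cref{def:good}) is not even used. The only point requiring a little care is justifying that $\llBang{\Tm}$ is finite rather than $\infty$, and this is exactly what the strict inequality $k > \llBang{\Tm}$ built into the definition of an internal step provides.
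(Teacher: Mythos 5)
Your proof is correct and takes essentially the same route as the paper, which only sketches the argument: internal invariance (\Cref{def:good}, point~2) keeps $\llBang{\Tm} = \llBang{\TmTwo}$, the strict bound $k > \llBang{\Tm}$ in the definition of an internal step forces this common value to be finite, and \Cref{rmk:least-level-normal} then gives that $\TmTwo$ is not $\red$-normal. Your observation that monotonicity is never used is also accurate; the paper's phrase that the two properties ``together imply persistence'' is looser than necessary, since internal invariance alone (plus the normal-form characterization) suffices.
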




The  pure bang calculus $({\BangSet}, \redbb) $  has  a \good least-level; the same holds true when extending the reduction with operators.
\begin{restatable}[Good least-level of  bang calculi]{proposition}{llproperties}\label{prop:ll-properties}	
	Given ${\BangSetOp}$, let 
	  $\red  \eq \redbb\cup \redx{\OpSet}$, where each  $\op\in \OpSet$ has a redex of shape $\op(P_1, \dots, P_k)$. The reduction  $\red$ has  a \good least-level.
\end{restatable}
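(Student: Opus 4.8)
The plan is to establish the two clauses of \Cref{def:good} together, by analysing a single step $\Tm = \Ctxp{R} \red \Ctxp{R'} = \TmTwo$ in which $R \Root{} R'$ is the fired redex and $k \Defeq \lev{\Ctx}$ is its level. Put $m \Defeq \llBang{\Tm}$; since $R$ is itself a redex of $\Tm$ we have $k \ge m$. Everything hinges on two elementary observations about the level function \eqref{eq:level-bang}: it is \emph{monotone under nesting} (if one redex occurrence is a subterm of another, its level is at least that of the larger one), and the contractum $R'$ is plugged back exactly at the level-$k$ position freed by $R$, so every redex occurrence strictly inside $R'$ has level $\ge k$.

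For \textbf{monotonicity} I would prove that every redex occurrence of $\TmTwo$ has level $\ge m$, which gives $\llBang{\TmTwo} \ge m$ by \eqref{eq:least-level}. Classify such an occurrence by the position of its root relative to the hole of $\Ctx$. If its root lies (weakly) inside $R'$, its level is $\ge k \ge m$ by the observation above; this single clause absorbs all the level-shifting residuals, in particular the copies of redexes living in the box content (or in the arguments of an operator), whose level the $\Tot$-rule may lower from $k{+}1{+}i$ to $k{+}i$ when the substituted variable occurs at level $0$, but never below $k$. If its root lies in $\Ctx$ and is disjoint from the hole, the occurrence is untouched and keeps its original level $\ge m$; if its pattern sits above the hole, then firing $R$ strictly below preserves its top shape (\Cref{fact:shape}), so it is a residual of a $\Tm$-redex of the same level $\ge m$. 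The only remaining case is a redex \emph{created} by the step, which can occur solely at the application node immediately containing the hole and hence has level exactly $k \ge m$. Thus all redexes of $\TmTwo$ are at level $\ge m$.

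For \textbf{internal invariance} assume in addition that the step is internal, i.e. $k > m$. Monotonicity already gives $\llBang{\TmTwo} \ge m$, so it suffices to produce a redex of $\TmTwo$ at level $m$. Choose a redex $R_0$ of $\Tm$ of level $m$; one exists because $m$ is the infimum of a nonempty subset of $\Nat$ (\eqref{eq:least-level}), hence attained. Since the level is monotone under nesting, while $R_0$ has level $m$ and $R$ has level $k > m$, $R_0$ cannot be a subterm of $R$; hence $R_0$ is either disjoint from $R$ or contains $R$. In the first case $R_0$ is left untouched; in the second, $R$ is fired strictly inside $R_0$, which preserves the top redex pattern of $R_0$ (\Cref{fact:shape}, and, for an operator redex, the fact that it is recognised by its shape $\op(P_1,\dots,P_k)$ alone). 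Either way a residual of $R_0$ is a redex of $\TmTwo$ at level $m$, so $\llBang{\TmTwo} \le m$ and therefore $\llBang{\TmTwo} = m$.

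The crux of the argument, and the only genuinely combinatorial point, is the monotonicity bookkeeping of created, duplicated and erased redexes: substitution can both lower the level of a redex sitting in a box and create new redexes at the contraction site. The clean way to tame this is precisely the remark that the contractum occupies the level-$k$ slot of the fired redex, so nothing it contains can descend below $k \ge m$; this also lets a single argument cover the $\oc$-rule and every operator rule uniformly. The shape hypothesis on operators is used exactly twice: to guarantee that an operator redex is detected by its shape, so that a containing operator redex survives the reduction of one of its arguments (needed in internal invariance), and, through \eqref{eq:level-bang}, to place operator arguments one level deeper so that firing an operator leaves its contractum at level $k$. Conjoining the two proved clauses yields \Cref{def:good}, as required.
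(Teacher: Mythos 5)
Your proof is correct, but it takes a genuinely different route from the paper's. The paper abstracts the argument into a reusable sufficient condition: a reduction is \emph{redex-preserving} when a step fired at level $>0$ neither creates nor destroys a redex at the root of a term; it then proves by structural induction on terms (using the inductive characterization of $\llBang{\cdot}$, which is $0$ on redexes and otherwise follows the term structure) that any redex-preserving reduction has a good least-level, and finally verifies that $\redbb\cup\redx{\OpSet}$ is redex-preserving, since internal steps preserve the shape of terms. You instead analyze a single step $\Ctxp{R}\red\Ctxp{R'}$ globally, classifying the redex occurrences of the reduct by position: weakly inside $R'$ (hence at level $\ge k$), untouched disjoint residuals, residuals whose pattern lies entirely in $\Ctx$, and created redexes, which you correctly confine to the application node immediately above the hole, at level $k$; internal invariance then follows by tracking the residual of a redex attaining the infimum $m<k$. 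Each approach has its merits: the paper's induction never needs to formalize occurrences, residuals, or creation, and its intermediate lemma applies verbatim to any redex-preserving reduction; your argument is one-pass and induction-free, and it makes explicit the combinatorics that the induction hides (substitution can lower the level of residuals coming from the box content, but never below the level $k$ of the fired redex, and creation only happens at level $\ge k$). Two of your steps are asserted rather than proved and would each need a line in a polished write-up: the additivity of levels under context composition (behind ``level is monotone under nesting''), and the claim that a non-created redex rooted strictly above the hole has its whole pattern inside $\Ctx$ --- your appeal to shape preservation is slightly off target there, since what is actually used is the more elementary fact that the pattern nodes themselves lie in the untouched context; both facts are routine to check.
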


\subsection{Least-level for a bang calculus: examples.}
Let us examine more closely least-level reduction for a  bang calculus $(\BangSetOp, \To{})$.
%
For concreteness, we  consider  $\RulesSet =\{\bbeta, \op\mid \op\in \OpSet\}$, hence the set of redexes is  $\R=\R_{\betab} \cup \R_\OpSet$, where $\R_\OpSet$ is a set of terms of shape $\op(P_1,\dots, P_k)$.

%
We observe  that the least level $\llb{}{\Tm}$ of a term $\Tm\in \BangSetOp$ 	can easily be defined in a direct way,  inductively.
\begin{itemize}
	\item  $\llb{}{\Tm} = 0 \text{ if }\Tm \in \R\eq\R_{\betab} \cup \R_\OpSet$, 
	\item 	otherwise:\\
	$ 	
	\llb{}{\var} = \infty \quad 
	\llb{}{\La{\var}\Tm} = \llb{}{\Tm}
	\quad \llb{}{\Bang{\Tm}} = \llb{}{\Tm} + 1 \quad
	\llb{}{\Tm \TmTwo} = \min\{\llb{}{\Tm} , \llb{}{\TmTwo} \} $
	
\end{itemize}

\begin{example}[Least level of a term]\label{ex:ll}
  Let $R \in \R_{\bbeta} $.    	If $T_0:=R\,(!R$), then $\llb{}{T_0}= 0$. 
	If  $T_1:= x!R$ then $\llb{}{T_2}= 1$.
	If  $T_2 := \op(x,y)!R$ then $\llb{}{T_2}=0$, as $\op(x,y) \in \R_{\op}$.

\end{example}

Intuitively, least-level reduction fires a redex that is \emph{minimally nested}, where a redex is any subterm whose form is in $\R \eq\R_{\betab} \cup \R_\OpSet$.
 Note  that least-level reduction can choose to fire one among possibly \emph{several} redexes at minimal level.
 
\begin{example}\label{ex:llred1}Let us revisit \Cref{ex:ll} with $R= (\lam x.x) \Bang{z}\in \R_{\bbeta} $ ($R\Root{\bbeta} z$). Then 
	$T_1:= x\,\Bang{R} \llredbb x\,\Bang{z}$ but   $T_0:= R\,(\Bang{R}) \not\llred R\, \Bang{z}$ and similarly
	$T_2 := \op(x,y)\,\Bang{R} \not\llredbb \op(x,y) !z$.
Observe also that $\op(x,\underline{R}) \not\llredbb \op(x,z ) $.
\end{example}

\begin{example}\label{ex:llred2}  Let $R = (\lam x.x) \Bang{z}$.   Two least-level steps are possible in ${(\lam z. R)\Bang{R} }$: 
	$\underline{(\lam z. R)\Bang{R} }\llredbb (\lam x.x)\Bang{R}$, and  $(\lam z. \underline{R})\Bang{R} \llredbb (\lam z. z)\Bang{R}$. 
	But $(\lam z. R) \Bang{\underline{R}} \not\llredbb (\lam z. R)!z$.

\end{example}

\SLV{}{Finally, let us revisit Examples \ref{ex:llred1} and \ref{ex:llred2} making explicit the level.

\begin{example*}[\ref{ex:llred1}, revisited]
	$T_1:= x!(\id !z) \redx{\bbeta:1} x!z$ and $\llb{}{T_1}=1$.
	$T_2 := \op(x,y)!(\id !z) \redx{\bbeta:1} \op(x,y) !z$ and $\llb{}{T_2}=0$.
	Note  that $\op(x,\underline{R}) \redx{\bbeta:1} \op(x,z ) $, while $\llb{}{\op(x,\underline{R}) }=0$.
\end{example*}

\begin{example*}[\ref{ex:llred2}, revisited] Let $R,S$ as in \Cref{ex:llred2}. Then 
	$\underline{(\lam z. R)!S }\redx{\bbeta:0} \id!S$,   $(\lam z. \underline{R})!S \redx{\bbeta:0} (\lam x. z)!S$, and 
	$(\lam z. R)!\underline{S} \redx{\bbeta:1} (\lam z. R)!z$, with $\llb{}{(\lam x. R)!S} =0$.

\end{example*}

}

\subsection{Least-level for \CbN and \CbV $\lambda$-calculi}
\label{subsect:level-cbn-cbv}
The definition of least-level reduction in \Cref{sec:ll_def} is independent from the specific notion of level that is chosen, and also from the specific calculus. 
The idea is that the reduction strategy  persistently   fires a redex at minimal level, once such a notion is set.

Least-level reduction can indeed be defined also for the  \CbN and \CbV $\lambda$-calculi, given  an opportune definition of level. 
In \CbN, we count the number of nested arguments and operators containing the occurrence of redex.
In \CbV, we count the number of nested operators and \emph{unapplied} abstractions containing the redex, where an abstraction is unapplied if it is not the right-hand side of an application.
Formally, an occurrence of redex  is identified by a context (as explained in \Cref{sec:ll_def}), and we define the following $ \levCbn{\cdot} $ and $ \levCbv{\cdot} $ functions from $\LambdaCtx$ to $\Nat$, the \emph{level} in \CbN and \CbV $\lam$-calculi.
 {\footnotesize \begin{align*}
	\label{eq:level-cbn-cbv}
	\levCbn{\Hole{\cdot}} &= 0 
	& 
	\levCbv{\Hole{\cdot}} &= 0
	\\
	\levCbn {\La{\var}\LCtx}&= \levCbn {\LCtx} 
	& 
	\levCbv {\La{\var}\LCtx}&= \levCbv {\LCtx} + 1 
	\\
	\levCbn {\LCtx \LTm} &= \levCbn {\LCtx} 
	& 
	\levCbv {\LCtx \LTm} &= 
	\begin{cases}
		\levCbv{\LCtxTwo} &\text{if } \LCtx = \La{\var}{\LCtxTwo} \\
		\levCbv{\LCtx} &\text{otherwise}
	\end{cases}
	\\
	\levCbn {\LTm \LCtx} &= \levCbn {\LCtx} + 1 
	& 
	\levCbv {\LTm \LCtx} &= \levCbv {\LCtx}  
	\\
	\levCbn {\op (\dots, \LCtx, \dots)}&= \levCbn {\LCtx} +1 
	& 
	\levCbv {\op (\dots, \LCtx, \dots)}&= \levCbv {\LCtx} +1
\end{align*}}
In both \CbN and \CbV $\lambda$-calculi, the \emph{least level} of a term (denoted by $\llCbn{\cdot}$ and $\llCbv{\cdot}$) and \emph{least-level} and \emph{internal} reductions are given by \Cref{def:depth-bang} (replace $\lev{\cdot}$  with $\levCbn{\cdot}$ for \CbN and $\levCbv{\cdot}$ for \CbV).

In \Cref{sect:embedding} we will see that the definitions of \CbN and \CbV least level are not arbitrary, but induced by the \CbN and \CbV translations defined in \Cref{subsect:translations}.

\section{Embedding of \CbN and \CbV by level}
\label{sect:embedding}

Here we refine the analysis of the \CbN and \CbV translations given in \Cref{subsect:translations}, by showing two new results: translations preserve normal forms (\Cref{prop:preservation-normal}) and least-level (\Cref{prop:preservation-reduction}), back and forth. 
This way, to obtain  least-level  \emph{factorization} or least-level \emph{normalization} results, it suffices to prove them   in the bang calculus. The translation transfers the results into the \CbN and \CbV $\lambda$-calculi  (\Cref{thm:translation}).
We use here the expression ``translate'' in a strong sense: the results for \CbN and \CbV $\lambda$-calculi are obtained from the corresponding results in the bang calculus almost for free, just via \CbN and \CbV translations.

\paragraph{Preservation of normal forms.}
The targets of the \CbN translation $\Cbn{(\cdot)}$  and \CbV translation $\Cbv{(\cdot)}$
into the bang calculus can be \emph{characterized syntactically}. 
A fine analysis of these fragments of the bang calculus (see 
\cite{bangLong} for details)
proves that both \CbN and \CbV translations
preserve normal forms, back and forth.

\begin{proposition}[Preservation of normal forms]
	\label{prop:preservation-normal}
	Let $\LTm, \LTmTwo \in \Lambda$ and $\op \in \OpSet$. 
	\begin{enumerate}
		\item\label{p:preservation-normal-cbn} \emph{\CbN:} $\LTm$ is $\beta$-normal 
		iff $\Cbn{\LTm}$ is $\Tot$-normal; $\LTm$ is $\op$-normal 
		iff $\Cbn{\LTm}$ is $\op$-normal.
		\item\label{p:preservation-normal-cbv} \emph{\CbV:} $\LTm$ is $\Betav$-normal 
		iff $\Cbv{\LTm}$ is $\Tot$-normal; $\LTm$ is $\op$-normal 
		iff $\Cbv{\LTm}$ is $\op$-normal.
	\end{enumerate}
\end{proposition}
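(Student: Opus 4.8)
The plan is to derive both statements from the simulation already established in \Cref{thm:embedding}, treating the \CbN part as an immediate corollary and concentrating the real work on the \CbV part. In every case I argue the equivalence by contraposition, using that a term fails to be normal for a reduction exactly when it exhibits one step of that reduction.

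\textbf{\CbN case.} Here the simulation is step-for-step, so the claim is essentially free. If $\LTm$ is not $\beta$-normal, then $\LTm\ToBeta\LTm'$ and by \CbN soundness $\Cbn{\LTm}\ToTot\Cbn{\LTm'}$, so $\Cbn{\LTm}$ is not $\Tot$-normal. Conversely, if $\Cbn{\LTm}$ is not $\Tot$-normal, then $\Cbn{\LTm}\ToTot\TmTwo$ and by \CbN completeness $\TmTwo=\Cbn{\LTm'}$ with $\LTm\ToBeta\LTm'$, so $\LTm$ is not $\beta$-normal. The same two lines, using the $\op$-clauses of \Cref{thm:embedding}, give the $\op$-normal equivalence.

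\textbf{\CbV, the key lemma.} The obstacle is that the \CbV simulation is not step-for-step: it produces administrative $\Der$-steps. The device that tames them is the lemma that \emph{every image $\Cbv{\LTm}$ is $\Derel$-normal}, which I prove by induction on $\LTm$. Variables, abstractions, and operator terms are immediate from the induction hypothesis. The only interesting case is an application $\LTm=\LTmThree\LTmFour$. If $\Cbv{\LTmThree}=\Bang{\Tm}$ then $\Cbv{\LTm}=\Tm\,\Cbv{\LTmFour}$, and the fresh subterm is a $\Der$-redex only if $\Tm=\Der{\!}=\La{x}{x}$ and $\Cbv{\LTmFour}$ is a box; but $\Cbv{\LTmThree}=\Bang{\Tm}$ forces $\LTmThree$ to be a value, so $\Tm$ is either a variable or of the form $\La{y}{\Cbv{\LTmFive}}$, and the latter cannot equal $\La{x}{x}$ because $\Cbv{(\cdot)}$ never returns a bare variable; hence no $\Der$-redex is created. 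If instead $\Cbv{\LTmThree}$ is not a box, then $\Cbv{\LTm}=(\Der{\Cbv{\LTmThree}})\Cbv{\LTmFour}$ and $\Der{\Cbv{\LTmThree}}=(\La{x}{x})\Cbv{\LTmThree}$ is a $\Der$-redex only if $\Cbv{\LTmThree}$ is a box, which by assumption it is not. This is exactly the point where the optimized application clause of the \CbV translation (following \cite{BucciarelliKesnerRiosViso20}) is indispensable.

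\textbf{\CbV, conclusion.} Granted the lemma, the equivalence follows as in \CbN modulo the $\Der$-cleanup. If $\LTm$ is not $\Betav$-normal, then $\LTm\ToBetav\LTm'$ and \CbV soundness gives $\Cbv{\LTm}\ToVal\ToDer^=\Cbv{\LTm'}$, whose first step is a $\Tot$-step, so $\Cbv{\LTm}$ is not $\Tot$-normal. Conversely, if $\Cbv{\LTm}$ is not $\Tot$-normal, fire one redex $\Cbv{\LTm}\ToTot\TmThree$; by the lemma this step is a genuine step and not a $\Der$-step, and since $\ToDer$ strictly shrinks the term it is terminating, so we may $\Der$-normalize $\TmThree\ToDerNorm\TmTwo$, obtaining $\Cbv{\LTm}\ToVal\ToDerNorm\TmTwo$; \CbV completeness then yields $\LTm\ToBetav\LTm'$, so $\LTm$ is not $\Betav$-normal. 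The $\op$-normal equivalence is identical, with an $\ToOp$-step in place of the $\Tot$-step, and is even simpler since an $\ToOp$-step is never administrative. The main difficulty of the whole proposition is thus concentrated in the $\Derel$-normality lemma and the bookkeeping of administrative steps, both specific to \CbV; once these are in place the statement drops out of \Cref{thm:embedding}.
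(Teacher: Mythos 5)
Your proof is correct and takes essentially the same route as the paper's: the \CbN half is read off directly from the soundness/completeness of \Cref{thm:embedding}, and the \CbV half combines that simulation with the key fact that every image $\Cbv{\LTm}$ is $\Derel$-normal (the paper's \Cref{rmk:cbv-val-normal}), so that firing a $\Tot$-redex and then $\Derel$-normalizing puts you in the exact shape $\ToVal\ToDerNorm$ required by \CbV completeness. The only difference is presentational: you prove the $\Derel$-normality lemma by induction (correctly pinpointing that the optimized application clause and the fact that $\Cbv{(\cdot)}$ never yields $\La{\var}{\var}$ are what make it work), whereas the paper asserts it as immediate.
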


By \Cref{rmk:least-level-normal}, \Cref{prop:preservation-normal} can be seen as the fact that \CbN and \CbV translations preserve the least-level of a term, back and forth, when the least-level is infinite.
Actually, this holds more in general for any value of the least-level.


\paragraph{Preservation of levels.} 
We aim to show that least-level steps in \CbN and \CbV $\lambda$-calculi correspond to least-level steps in the bang calculus, back and forth, via \CbN and \CbV translations respectively (\Cref{prop:preservation-reduction}).
This result is subtle, one of the main technical contributions of this paper. 

First, we extend the definition of translations to contexts.
The \emph{\CbN and \CbV translations for contexts} are two functions $\Cbn{(\cdot)} \colon \LambdaCtx \to \BangSetCtx$ and $\Cbv{(\cdot)} \colon \LambdaCtx \to \BangSetCtx$, respectively, mapping contexts of the $\lambda$-calculus into contexts of the bang calculus: 

{\footnotesize \vspace{-2\baselineskip}
\begin{center}
	\small
	\begin{align*}
	\Cbn{\Hole{\cdot}} &= \Hole{\cdot}				
	& 
	\Cbv{\Hole{\cdot}} &= \Hole{\cdot} 
	\\
	\Cbn{(\La{\var}{\LCtx})} &= \La{\var}{\Cbn{\LCtx}} 
	&
	\Cbv{(\La{\var}{\LCtx})} &= \Bang{(\La{\var}{\Cbv{\LCtx}})} 
	\\
	\Cbn{(\op(\LTm_1, ... , \LCtx, ... , \LTm_k))} &= \op(\Cbn{\LTm_1}, ... , \Cbn{\LCtx}, ... , \Cbn{\LTm_k})
	&\quad
	\Cbv{(\op(\LTm_1, ..., \LCtx, ..., \LTm_k))} &= \op(\Cbv{\LTm_1}, ..., \Cbv{\LCtx}, ..., \Cbv{\LTm_k})
	\\
	\Cbn{(\LCtx\LTm)} &= \Cbn{\LCtx}\,{\Bangp{{\Cbn{\LTm}}}} 
	& 
	\Cbv{(\LCtx\LTm)} &= 
	\begin{cases}
	\Ctx \, \Cbv{\LTm} &\!\!\text{if } \Cbv{\LCtx} = \Bang{\Ctx}
	\\
	(\Der{\Cbv{\LCtx}}){\Cbv{\LTm}} &\!\!\text{otherwise}
	\end{cases} 
	\\
	\Cbn{(\LTm\LCtx)} &= \Cbn{\LTm}\,{\Bangp{{\Cbn{\LCtx}}}} \,; 
	& 
	\Cbv{(\LTm\LCtx)} &= 
	\begin{cases}
	\Tm \, \Cbv{\LCtx} &\!\!\text{if } \Cbv{\LTm} = \Bang{\Tm}
	\\
	(\Der{\Cbv{\LTm}}){\Cbv{\LCtx}} &\!\!\text{otherwise.}
	\end{cases}
	\end{align*}
\end{center}}

Note that \CbN (\Resp \CbV) level of a context defined in \Cref{subsect:level-cbn-cbv} increases by $1$ whenever the \CbN (\Resp \CbV) translation for contexts add $\oc$.
Thus, \CbN and \CbV translations preserve, back and forth, 
the level of a redex and the least-level of a term.
Said differently, the level for \CbN and \CbV is defined in \Cref{subsect:level-cbn-cbv} so as to enable the preservation of level via \CbN and \CbV translations. 

\begin{lemma}[Preservation of level via \CbN translation]
	\label{lemma:preservation-level-cbn}
	\hfill
	\begin{enumerate}
		\item\label{p:preservation-level-cbn-context}  \emph{For contexts:} For any context $\LCtx \in \CtxSet$, one has $\levCbn{\LCtx} = \lev{\Cbn{\LCtx}}$.
		
		\item\label{p:preservation-level-cbn-reduction} \emph{For reduction:} For any term $\LTm \in \LambdaOp$: $\LTm \ToBetaInd{k} \LTmTwo$ if and only if $\Cbn{\LTm} \ToTotInd{k} \Cbn{\LTmTwo}$;
		and $\LTm \ToOpAt{k} \LTmTwo$ if and only if $\Cbn{\LTm} \ToOpAt{k} \Cbn{\LTmTwo}$, for any $\op \in \OpSet$.
		
		\item\label{p:preservation-level-cbn-ll} \emph{For least-level of a term:} For any term $\LTm \in \LambdaOp$, one has $\llCbn{\LTm} = \llBang{\Cbn{\LTm}}$.
	\end{enumerate}
\end{lemma}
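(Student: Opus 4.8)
The plan is to establish the three items in order, as each relies on the previous one.

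\textbf{First item (contexts).} I would argue by structural induction on $\LCtx \in \CtxSet$, matching the clauses defining $\levCbn{\cdot}$ against those defining $\lev{\cdot}$ read through the translation $\Cbn{\cdot}$ of contexts. The base case $\LCtx = \Hole{\cdot}$ and the cases $\La{\var}{\LCtx}$ and $\LCtx\LTm$ are immediate, since in each the translation introduces no new $\oc$ above the hole and $\levCbn{\cdot}$ does not increment. The two cases that matter are those where $\levCbn{\cdot}$ adds $1$: for $\LCtx = \LTm\LCtxTwo$ we have $\Cbn{(\LTm\LCtxTwo)} = \Cbn{\LTm}\,\Bangp{\Cbn{\LCtxTwo}}$, which buries the hole under a fresh $\oc$, so $\lev{\Cbn{(\LTm\LCtxTwo)}} = \lev{\Cbn{\LCtxTwo}} + 1$, equal by the induction hypothesis to $\levCbn{\LCtxTwo} + 1 = \levCbn{\LTm\LCtxTwo}$; the operator case $\op(\dots,\LCtxTwo,\dots)$ is identical, with the $\op$ supplying the $+1$ on both sides. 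This is precisely the place where the \CbN level of \Cref{subsect:level-cbn-cbv} is seen to be tailored to the translation.

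\textbf{Second item (reduction).} I would first record two routine commutation lemmas, each by induction. First, the translation commutes with hole-filling: $\Cbn{(\LCtxp{\LTm})} = \Cbn{\LCtx}\Hole{\Cbn{\LTm}}$. Second, it commutes with substitution: $\Cbn{(\LTmFour\Sub{\LTmFive}{\var})} = \Cbn{\LTmFour}\Sub{\Cbn{\LTmFive}}{\var}$. The forward implication is then direct: decompose $\LTm \ToBetaInd{k} \LTmTwo$ as $\LCtxp{\LTmThree} \ToBeta \LCtxp{\LTmThree'}$ with $\LTmThree = (\La{\var}{\LTmFour})\LTmFive \Root{\beta} \LTmFour\Sub{\LTmFive}{\var} = \LTmThree'$ and $\levCbn{\LCtx} = k$; the first commutation lemma gives $\Cbn{\LTm} = \Cbn{\LCtx}\Hole{\Cbn{\LTmThree}}$, where $\Cbn{\LTmThree} = (\La{\var}{\Cbn{\LTmFour}})\Bangp{\Cbn{\LTmFive}}$ is a $\Tot$-redex whose contractum, by the second lemma, is $\Cbn{\LTmThree'}$; hence $\Cbn{\LTm} \ToTot \Cbn{\LTmTwo}$ at level $\lev{\Cbn{\LCtx}}$, which equals $k$ by the first item. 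The operator case is the same, invoking the soundness clause of \Cref{thm:embedding} to recognize $\op(\Cbn{\LTm_1}, \dots, \Cbn{\LTm_k})$ as an $\op$-redex reducing to the translation of its reduct.

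For the converse implications, and hence the two ``if and only if''s, I would combine the completeness clauses of \Cref{thm:embedding} --- which already force a $\Tot$- or $\op$-step out of $\Cbn{\LTm}$ to target some $\Cbn{\LTmTwo}$ with a matching source step --- with a \emph{redex-occurrence bijection}: the assignment $\LCtx \mapsto \Cbn{\LCtx}$ maps the $\beta$- (resp.\ $\op$-) redex occurrences of $\LTm$ bijectively onto the $\Tot$- (resp.\ $\op$-) redex occurrences of $\Cbn{\LTm}$, level-preservingly by the first item. Its forward half is exactly the first commutation lemma; its surjectivity --- that every $\Tot$-redex $(\La{\var}{T'})\Bangp{T''}$ and every $\op$-redex occurring in $\Cbn{\LTm}$ factors as $\Cbn{\LCtx}\Hole{\Cbn{\LTmThree}}$ for a redex occurrence $\LTmThree$ at $\LCtx$ in $\LTm$ --- is the genuinely delicate point, and I expect it to be the main obstacle. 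It relies on the syntactic characterization of the image of $\Cbn{\cdot}$ (in a \CbN-translated term every application carries a box as argument and boxes occur nowhere else), the same shape analysis that underlies \Cref{prop:preservation-normal}; once this constraint is in place, the factorization of any redex through the translation is forced, and with it the level is pinned to $k$ on both sides.

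\textbf{Third item (least level).} This then falls out of the bijection. Since $\llCbn{\LTm} = \inf\{\levCbn{\LCtx} \mid \LTm = \LCtxp{\LTmThree},\ \LTmThree \in \R\}$ and the bijection identifies this index set, term by term and level-preservingly via the first item, with $\{\lev{\Ctx} \mid \Cbn{\LTm} = \Ctxp{\TmThree},\ \TmThree \text{ a redex}\}$, the two infima coincide, giving $\llCbn{\LTm} = \llBang{\Cbn{\LTm}}$. The boundary case where both index sets are empty --- both least levels being $\infty$ --- is exactly the preservation of normal forms recorded in \Cref{prop:preservation-normal} together with \Cref{rmk:least-level-normal}.
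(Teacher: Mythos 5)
Your proposal is correct and follows essentially the same route as the paper: item 1 by structural induction on contexts (the paper's observation that the \CbN level is tailored to increment exactly where the translation inserts $\oc$), item 2 by combining hole-filling and substitution commutation with the level-preserving correspondence of redex occurrences (which is what the completeness direction of \Cref{thm:embedding}, together with the syntactic characterization of the \CbN image, provides), and item 3 as the resulting identity of infima, with the $\infty$ case covered by \Cref{prop:preservation-normal} and \Cref{rmk:least-level-normal}. The surjectivity point you flag as delicate is indeed the crux, and your justification (boxes occur only as application arguments in the image and are never redexes, so every redex occurrence of $\Cbn{\LTm}$ sits at a translated position) is the right one.
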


\begin{lemma}[Preservation of level via \CbV translation]
	\label{lemma:preservation-level-cbv}
	\hfill
	\begin{enumerate}
		\item\label{p:preservation-level-cbv-context}  \emph{For contexts:} For any context $\LCtx \in \CtxSet$, one has $\levCbv{\LCtx} = \lev{\Cbv{\LCtx}}$.
		
		\item\label{p:preservation-level-cbv-reduction} \emph{For reduction:} For any term $\LTm \in \LambdaOp$: $\LTm \ToBetavAt{k} \LTmTwo$ if and only if $\Cbv{\LTm} \ToBangAt{k} \ToDerInd{k}^= \Cbv{\LTmTwo}$; and $\LTm \ToOpAt{k} \LTmTwo$ if and only if $\Cbv{\LTm} \ToOpAt{k} \ToDerInd{k}^= \Cbv{\LTmTwo}$, for any $\op \in \OpSet$. 
		
		\item\label{p:preservation-level-cbv-ll} \emph{For least-level of a term:} For any term $\LTm \in \LambdaOp$, one has $\llCbv{\LTm} = \llBang{\Cbv{\LTm}}$. 
	\end{enumerate}
\end{lemma}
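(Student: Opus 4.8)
The plan is to prove the three items in the order stated: item~\ref{p:preservation-level-cbv-context} is the combinatorial core, item~\ref{p:preservation-level-cbv-reduction} is derived from it together with the simulation of \Cref{thm:embedding}, and item~\ref{p:preservation-level-cbv-ll} follows as a corollary.

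For item~\ref{p:preservation-level-cbv-context} I would argue by induction on the context $\LCtx$, after recording two shape facts about the translation: for a term, $\Cbv{\LTm}$ is a box iff $\LTm$ is a value, and for a context, $\Cbv{\LCtx}$ is a box iff $\LCtx = \La{\var}{\LCtxTwo}$. The hole, abstraction and operator cases are immediate, since every $\oc$ introduced by $\Cbv{(\cdot)}$ is matched by a $+1$ in $\levCbv{\cdot}$, and conversely. The two application cases are the only delicate ones, as they branch on box-ness: when the hole lies in function position and $\LCtx \neq \La{\var}{\LCtxTwo}$, the translation inserts the wrapper $\Der{(\cdot)}$; since this adds an application but no $\oc$, the level of the hole is unchanged, matching the ``otherwise'' clause of $\levCbv{\LCtx \LTm}$. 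The box branch and the argument-position case likewise leave the level unchanged on both sides, and the induction closes.

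For item~\ref{p:preservation-level-cbv-reduction} the key auxiliary step is a \emph{commutation lemma}: $\Cbv{\LCtxp{\LTmThree}} = \Cbv{\LCtx}\Hole{\Cbv{\LTmThree}}$ whenever $\LTmThree$ is \emph{not} a value. Indeed, the translations of $\LCtx$ and of $\LCtxp{\LTmThree}$ can select different application branches only where the hole occurs directly in function position, and there a non-value filler makes both select the ``otherwise'' branch; an easy induction then gives the identity. Combining this with the substitution lemma $\Cbv{\LTmFour\Sub{\LVal}{\var}} = \Cbv{\LTmFour}\Sub{\Tm}{\var}$ (where $\Cbv{\LVal} = \Bang{\Tm}$), a step $\LTm = \LCtxp{(\La{\var}{\LTmFour})\LVal} \ToBetav \LCtxp{\LTmFour\Sub{\LVal}{\var}} = \LTmTwo$ translates as follows. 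Since the redex $(\La{\var}{\LTmFour})\LVal$ is not a value, $\Cbv{\LTm} = \Cbv{\LCtx}\Hole{(\La{\var}{\Cbv{\LTmFour}})\Bang{\Tm}}$ contracts the $\Tot$-redex at the hole, whose level is $\lev{\Cbv{\LCtx}} = \levCbv{\LCtx} = k$ by item~\ref{p:preservation-level-cbv-context}; this yields $\Cbv{\LTm} \ToBangAt{k} \Cbv{\LCtx}\Hole{\Cbv{\LTmThree'}}$ with $\LTmThree' = \LTmFour\Sub{\LVal}{\var}$. If $\LTmThree'$ is not a value, the commutation lemma identifies the right-hand side with $\Cbv{\LTmTwo}$ and the $\ToDerInd{k}^=$ phase is empty; if $\LTmThree'$ is a value, the only residual mismatch is again at a hole in function position, where $\Cbv{\LCtx}\Hole{\Cbv{\LTmThree'}}$ exhibits a single $\Derel$-redex at level $k$, and one $\ToDerInd{k}$ step produces $\Cbv{\LTmTwo}$. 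The $\op$-case is analogous: an $\op$-redex is never a value, so the forward half of the commutation lemma applies unchanged, and a level-$k$ $\Derel$-step is needed exactly when the contractum is a value in function position. For the converse implication I would invoke the completeness half of \Cref{thm:embedding}: every $\Tot$- or $\op$-redex of $\Cbv{\LTm}$ is the image of a $\Betav$- or $\op$-redex of $\LTm$, whose level is transferred back by item~\ref{p:preservation-level-cbv-context}.

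Item~\ref{p:preservation-level-cbv-ll} is then a corollary: reading item~\ref{p:preservation-level-cbv-reduction} at the level of redexes rather than of steps shows that $\LTm$ has a redex at level $k$ iff $\Cbv{\LTm}$ does, so the two sets of redex-levels coincide, and hence so do their infima $\llCbv{\LTm}$ and $\llBang{\Cbv{\LTm}}$ (the empty-set case $\inf\emptyset = \infty$ being exactly the normal-form statement of \Cref{prop:preservation-normal}). The main obstacle is concentrated in item~\ref{p:preservation-level-cbv-reduction}: the \CbV translation commutes with context plugging only for non-value fillers, so a $\Betav$-step whose contractum is a value sitting in function position creates a spurious $\Derel$-redex in the target. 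The two subtle points are to isolate this mismatch as strictly local, and\,---\,above all\,---\,to check that the spurious $\Derel$-redex lives at \emph{exactly} the same level $k$ as the $\Tot$-step; this level coincidence is what makes the bookkeeping go through, and is precisely why the refined translation of \cite{BucciarelliKesnerRiosViso20}, rather than the one of \cite{GuerrieriManzonetto18}, is needed.
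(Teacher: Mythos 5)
Your proof is correct, and one thing should be said up front: in this version of the paper the lemma is stated without proof (details are deferred to the long version), the only visible justification being the remark that the \CbV level of a context increases by $1$ exactly where the \CbV context translation inserts a $\oc$ --- which is precisely your item~\ref{p:preservation-level-cbv-context} induction. So the fair comparison is with the paper's proofs of the neighbouring simulation results. Those (\Cref{thm:embedding}) proceed by induction on the term with a case analysis on the step, threading a strengthened claim about values through the application cases; you instead organize everything around contexts: the commutation identity $\Cbv{(\LCtxp{\LTmThree})} = \Cbv{\LCtx}\Hole{\Cbv{\LTmThree}}$ for non-value fillers $\LTmThree$, together with the substitution lemma, reduces a step to a single local analysis at the redex position. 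This context-centric route is arguably the better fit for this particular statement, since levels are by definition attributes of contexts, and it isolates cleanly the one delicate configuration: a contractum $\LTmThree'$ that is a value sitting directly in function position, where $\Cbv{\LCtx}\Hole{\Cbv{\LTmThree'}}$ exhibits a single $\Derel$-redex whose level is exactly $k$ because $\Der{\Hole{\cdot}}$ adds an application but no $\oc$ and no operator. Your closing point, that this level coincidence is what the refined translation of \cite{BucciarelliKesnerRiosViso20} buys, is exactly the paper's own remark following \Cref{prop:preservation-reduction}.

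Two places would need tightening in a full write-up, though neither is a real gap. For the right-to-left direction of item~\ref{p:preservation-level-cbv-reduction}, completeness in \Cref{thm:embedding} alone does not track levels: you also need that \emph{every} $\Tot$-redex occurrence of $\Cbv{\LTm}$ has the form $\Cbv{\LCtx}\Hole{(\La{\var}{\Cbv{\LTmFour}})\Bang{\Tm}}$ for some $\Betav$-redex occurrence $\LCtxp{(\La{\var}{\LTmFour})\LVal}$ of $\LTm$, which relies on the fact that translated terms are $\Derel$-normal (so no $\Tot$-redex of $\Cbv{\LTm}$ is a translation artifact); only then does item~\ref{p:preservation-level-cbv-context} transfer the level back. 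Likewise, item~\ref{p:preservation-level-cbv-ll} really rests on this redex-occurrence correspondence rather than on the step statement read literally (firing an arbitrary redex of $\Cbv{\LTm}$ must be seen to land, after the optional $\Derel$-step, on a translated term). Since that correspondence is exactly what your argument for item~\ref{p:preservation-level-cbv-reduction} establishes, this is a matter of phrasing, not of substance.
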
	

From the two lemmas above it follows that \CbN and \CbV translations preserve least-level and internal reductions, back and forth.

\begin{proposition}[Preservation of least-level and internal reductions]
	\label{prop:preservation-reduction}
	Let $\LTm$ be a \lam-term and $\op \in \OpSet$.
	\begin{enumerate}
		\item\label{p:preservation-reduction-ll-cbn} \emph{\CbN least-level:} $\LTm \llredb \LTmTwo$ iff $\Cbn{\LTm} \llredBang \Cbn{\LTmTwo}$; and $\LTm \llredOp \LTmTwo$ iff $\Cbn{\LTm} \llredOp \Cbn{\LTmTwo}$.
		\item\label{p:preservation-reduction-nll-cbn} \emph{\CbN internal:} $\LTm \nllredb \LTmTwo$ iff $\Cbn{\LTm} \nllredBang \Cbn{\LTmTwo}$; and $\LTm \nllredOp \LTmTwo$ iff $\Cbn{\LTm} \nllredOp \Cbn{\LTmTwo}$.
		\item\label{p:preservation-reduction-ll-cbv} \emph{\CbV least-level:} $\LTm \llredbv \LTmTwo$ iff $\Cbv{\LTm} \llredBang\llredDer^= \Cbv{\LTmTwo}$; and $\LTm \llredOp \LTmTwo$ iff $\Cbv{\LTm} \llredOp\llredDer^= \Cbv{\LTmTwo}$.
		\item\label{p:preservation-reduction-nll-cbv} \emph{\CbV internal:} $\LTm \llredbv \LTmTwo$ iff $\Cbv{\LTm} \nllredBang \nllredDer^= \Cbv{\LTmTwo}$; and $\LTm \llredOp \LTmTwo$ iff $\Cbv{\LTm} \nllredOp \nllredDer^= \Cbv{\LTmTwo}$.
	\end{enumerate}
\end{proposition}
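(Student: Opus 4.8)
The plan is to derive \Cref{prop:preservation-reduction} directly from \Cref{lemma:preservation-level-cbn} and \Cref{lemma:preservation-level-cbv}, by unfolding the definition of least-level and internal reduction (\Cref{def:ll}). Recall that $\LTm\llredRule\LTmTwo$ means $\LTm\ToRuleAt{k}\LTmTwo$ with $k=\llb{}{\LTm}$, and $\LTm\nllredRule\LTmTwo$ means the same with $k>\llb{}{\LTm}$; thus each equivalence to be proved splits into a level-indexed reduction correspondence plus a side condition on $k$. The two ingredients, for each translation, are the \emph{reduction clause} (a step $\ToRuleAt{k}$ on one side matches a level-$k$ step on the other) and the \emph{least-level clause} ($\llCbn{\LTm}=\llBang{\Cbn{\LTm}}$, resp. $\llCbv{\LTm}=\llBang{\Cbv{\LTm}}$). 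Combining them rewrites the side condition ``$k=\llCbn{\LTm}$'' as ``$k=\llBang{\Cbn{\LTm}}$'' (and likewise for $>$ and for \CbV), which is exactly what the right-hand sides demand.

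The \CbN cases are then immediate. For least level: $\LTm\llredb\LTmTwo$ iff $\LTm\ToBetaInd{k}\LTmTwo$ with $k=\llCbn{\LTm}$, iff (reduction clause of \Cref{lemma:preservation-level-cbn}) $\Cbn{\LTm}\ToTotInd{k}\Cbn{\LTmTwo}$ with $k=\llCbn{\LTm}$, iff (least-level clause) the same with $k=\llBang{\Cbn{\LTm}}$, iff $\Cbn{\LTm}\llredBang\Cbn{\LTmTwo}$. The internal case is identical with ``$=$'' replaced by ``$>$'', and the operator cases are identical because a single $\op$-step corresponds to a single level-$k$ $\op$-step. Since no auxiliary steps intervene, nothing beyond this bookkeeping is needed.

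The \CbV cases follow the same pattern, the new feature being the administrative step $\ToDer^=$. By the reduction clause of \Cref{lemma:preservation-level-cbv}, $\LTm\ToBetavAt{k}\LTmTwo$ corresponds to $\Cbv{\LTm}\ToTotInd{k}\ToDerInd{k}^=\Cbv{\LTmTwo}$ with both steps at the \emph{same} level $k$, and by the least-level clause $k=\llCbv{\LTm}=\llBang{\Cbv{\LTm}}$ exactly when $k$ is least. For the left-to-right direction of the least-level case the $\Tot$-step is at level $k=\llBang{\Cbv{\LTm}}$, hence least-level; and for the $\Derel$-step, writing $M$ for the intermediate term, monotonicity of the good least-level (\Cref{def:good}, \Cref{prop:ll-properties}) gives $\llBang{M}\ge\llBang{\Cbv{\LTm}}=k$, while the $\Derel$-redex at level $k$ gives $\llBang{M}\le k$, so $\llBang{M}=k$ and the step is $\llredDer$. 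The internal case ($\LTm\nllredbv\LTmTwo$) is the same with ``$=$'' replaced by ``$>$'', using internal invariance to keep the least level fixed across the first step.

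The main obstacle is the right-to-left direction of the \CbV cases: from $\Cbv{\LTm}\llredBang\llredDer^=\Cbv{\LTmTwo}$ one must recover that the $\Derel$-step, when present, lies at the \emph{same} level $k$ as the $\Tot$-step, so that \Cref{lemma:preservation-level-cbv} applies backwards. Here one exploits that a translation is $\Derel$-normal (as recorded in \Cref{thm:embedding}): the $\Derel$-redex in the intermediate term $M$ is therefore created by the $\Tot$-step, and by the shape of the \CbV translation the consumed $\Der{\!}$ and its box argument stem from the same application node, so the created $\Derel$-redex sits exactly at the level of the fired $\Tot$-redex, namely $k$; together with monotonicity this forces $\llBang{M}=k$, i.e. the $\Derel$-step is genuinely least-level at level $k$. (Equivalently, one may invoke \CbV completeness of \Cref{thm:embedding} together with injectivity of $\Cbv{(\cdot)}$, clear from its definition, to obtain $\LTm\ToBetav\LTmTwo$ and then transport levels back through \Cref{lemma:preservation-level-cbv}.) Once this synchronization of the administrative step is secured, all remaining verifications are routine unfoldings of the definitions.
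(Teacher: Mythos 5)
Your proposal is correct and follows essentially the same route as the paper, which obtains \Cref{prop:preservation-reduction} directly from \Cref{lemma:preservation-level-cbn} and \Cref{lemma:preservation-level-cbv} by unfolding \Cref{def:ll}; your main contribution beyond the paper's one-line derivation is to spell out the \CbV synchronization of the administrative $\Derel$-step (via monotonicity, internal invariance, and the fact that the unique $\Derel$-redex created by the $\Tot$-step sits at the level of the fired redex), which is exactly the implicit content of the $\ToDerInd{k}^=$ clause in \Cref{lemma:preservation-level-cbv}. Only your parenthetical alternative (\CbV completeness plus injectivity) is too weak as stated, since it recovers the step $\LTm \ToBetav \LTmTwo$ but not its level, so keep the creation argument as the actual proof.
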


As a consequence,  least-level reduction induces factorization in \CbN and \CbV $\lambda$-calculi as soon as it does in the bang calculus.
And, by \Cref{prop:preservation-normal}, it is a normalizing strategy in \CbN and \CbV 
as soon as it is so in the bang calculus.

\begin{theorem}[Factorization and normalization by translation]\label{thm:translation} 
Let 
$\LambdaOp^\cbn\eq (\LambdaOp, \redb \cup \redx{\OpSet})$ and  $\LambdaOp^\cbv\eq (\LambdaOp,\ \redbv \cup \redx{\OpSet})$.
	\begin{enumerate}
		\item If $\BangSetOp$ admits  least-level factorization $\F{\llred}{\nllred}$, then so do  
		$\LambdaOp^\cbn$ and  $\LambdaOp^\cbv$.
		\item If $\BangSetOp$ admits  least-level normalization, then so do $\LambdaOp^\cbn$ and $\LambdaOp^\cbv$.
\end{enumerate}
\end{theorem}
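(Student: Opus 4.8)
The plan is to prove both statements by transporting the corresponding bang-calculus property along the \CbN and \CbV translations, using the tight correspondences of \Cref{thm:embedding} (simulation), \Cref{prop:preservation-normal} (normal forms) and \Cref{prop:preservation-reduction} (least-level and internal steps). The guiding idea is that $\Cbn{(\cdot)}$ and $\Cbv{(\cdot)}$ are injective, and their images are closed under bang reduction (for \CbN, exactly; for \CbV, only up to $\Derel$-normalization); so a reduction in $\LambdaOp^\cbn$ or $\LambdaOp^\cbv$, its image in $\BangSetOp$, and the corresponding bang-factored (resp.\ normalizing) sequence stay in level-preserving, step-by-step correspondence.

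\textbf{\CbN factorization.} This is the clean case. Given $\LTm \to^* \LTmTwo$ in $\LambdaOp^\cbn$ (with $\to = \redb \cup \redx\OpSet$), I lift it step by step to $\Cbn\LTm \to^* \Cbn\LTmTwo$ in $\BangSetOp$ via the soundness direction of \Cref{thm:embedding}, then apply the hypothesis $\F{\llred}{\nllred}$ to get $\Cbn\LTm \llred^* T \nllred^* \Cbn\LTmTwo$. The key point is that the completeness direction of \Cref{thm:embedding} shows the image of $\Cbn{(\cdot)}$ is closed under $\ToTot \cup \ToOp$, so every term along this factored sequence is of the form $\Cbn{(\cdot)}$; in particular $T = \Cbn R$ for a unique $\lam$-term $R$. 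Reflecting each step with the ``iff'' of \Cref{prop:preservation-reduction} (which maps bang least-level, resp.\ internal, steps between images to \CbN least-level, resp.\ internal, steps) yields $\LTm \llred^* R \nllred^* \LTmTwo$, i.e.\ $\F{\llred}{\nllred}$ in $\LambdaOp^\cbn$.

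\textbf{\CbV factorization.} The skeleton is identical---lift, factor in $\BangSetOp$, reflect---but now each \CbV step is simulated by a main bang step \emph{followed by at most one} $\Derel$-step (recall $\ToDer \subseteq \ToTot$), as recorded in \Cref{prop:preservation-reduction}. Lifting $\LTm \to^* \LTmTwo$ still yields a genuine bang reduction $\Cbv\LTm \to^* \Cbv\LTmTwo$ between images (both $\Derel$-normal), to which the bang factorization applies, giving $\Cbv\LTm \llred^* T \nllred^* \Cbv\LTmTwo$. The difficulty here---the main obstacle of the theorem---is that the image of $\Cbv{(\cdot)}$ is closed under bang reduction only \emph{up to $\Derel$-normalization}: a bang reduct of $\Cbv\LTm$ may carry a pending $\Derel$-redex (see \Cref{ex:simulation}), so the turning point $T$ need not itself be an image. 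To close the gap I will $\Derel$-normalize it, setting $\hat T \Defeq \NfDer{T}$, and show that $\hat T = \Cbv R$ for some $\lam$-term $R$ and that the sequence rearranges as $\Cbv\LTm \llred^* \hat T \nllred^* \Cbv\LTmTwo$. This rearrangement is exactly where the tight block structure of \Cref{prop:preservation-reduction} is used: each simulated $\Derel$-step inherits the least-level/internal status of the main step it follows, and $\ToDer$ is terminating and confluent (each step strictly decreases size), so $\Derel$-normalizing the turning point respects the split. Reflecting the resulting image-to-image sequence then gives $\F{\llred}{\nllred}$ in $\LambdaOp^\cbv$.

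\textbf{Normalization (part 2).} That $\llred$ is a strategy (same normal forms as $\red$) in $\LambdaOp^\cbn$ and $\LambdaOp^\cbv$ follows from the corresponding fact in the bang calculus together with \Cref{prop:preservation-normal}. For normalization I argue on maximal sequences. Suppose $\LTm$ reduces to a normal form $u$; by soundness and \Cref{prop:preservation-normal}, $\Cbv\LTm \to^* \Cbv u$ with $\Cbv u$ bang-normal, so $\Cbv\LTm$ is weakly bang-normalizing. Any \CbV least-level sequence from $\LTm$ lifts, via \Cref{prop:preservation-reduction}, to a bang least-level sequence in which every \CbV step contributes at least one bang least-level step; hence an infinite \CbV least-level sequence would produce an infinite bang least-level sequence from a weakly bang-normalizing term, contradicting bang least-level normalization. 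Therefore every maximal \CbV least-level sequence from $\LTm$ is finite and ends in a least-level normal form, which is a \CbV normal form since $\llred$ is a strategy; the \CbN case is identical and simpler (no $\Derel$-steps). Alternatively, one may invoke \Cref{prop:abs_normalization}, transferring persistence (from \Cref{prop:preservation-reduction} and \Cref{prop:preservation-normal}), factorization (part 1), and uniformity (by reflecting strong normalization of $\llred$) from the bang calculus.
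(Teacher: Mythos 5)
Your overall strategy---lift along the translation, use the bang-calculus property, reflect back via \Cref{thm:embedding}, \Cref{prop:preservation-normal} and \Cref{prop:preservation-reduction}---is the paper's own route, and two of your three arguments are sound: the \CbN half of part 1 (the image of $\Cbn{(\cdot)}$ is closed under bang reduction, so the factored sequence reflects step by step), and your primary argument for part 2 (an infinite least-level sequence in \CbN or \CbV would lift, by \Cref{prop:preservation-reduction}, to an infinite bang least-level sequence from a weakly normalizing term, contradicting the hypothesis; maximal finite least-level sequences end in normal forms because $\llred$ is a strategy with the same normal forms).

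The gap is in the \CbV half of part 1, precisely at the step you yourself flag as the main obstacle. From $\Cbv{\LTm} \llredBang^* T \nllredBang^* \Cbv{\LTmTwo}$ you want $\Cbv{\LTm} \llred^* \NfDer{T} \nllred^* \Cbv{\LTmTwo}$, and you justify this by saying that each simulated $\Derel$-step inherits the least-level/internal status of the main step it follows. But that block structure is a property of the \emph{lifted} sequence produced by \Cref{prop:preservation-reduction}; \Cref{thm:factorize-bang} does not preserve it---its output is a globally rearranged sequence (built by parallel-reduction surgery), in which the pending $\Derel$-redexes of $T$ are not attached to any ``main step''. Two claims therefore remain unproven. (a) That the steps $T \ToDer^* \NfDer{T}$ are least-level, which is what appending them to the prefix requires: this can genuinely fail, since in the reachable fragment a pending $\Derel$-redex $\Der{\Bang{\ImVal}}$ may sit under a box, e.g.\ inside $\Bangp{\La{\varTwo}{(\Der{\Bang{\ImVal}})\TmTwo}}$, at a level strictly above $\llBang{T}$, so firing it is internal, not least-level. (b) That $\NfDer{T} \nllredBang^* \Cbv{\LTmTwo}$: termination and confluence of $\ToDer$ only give $\NfDer{T}$ and $\Cbv{\LTmTwo}$ a common reduct, and since $\Cbv{\LTmTwo}$ need not be bang-normal this yields no reduction from $\NfDer{T}$ to $\Cbv{\LTmTwo}$ at all, let alone one made of internal steps; replaying the suffix from $\NfDer{T}$ needs a residual/commutation argument that also tracks least levels (which substitution can relocate). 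A cleaner repair avoids requiring the turning point to be an image: strengthen the level analysis of \Cref{lemma:preservation-level-cbv} from translated terms to all terms reachable from them, showing that erasing the $\Der{}$-wrappers (the projection $\InvV{(\cdot)}$ back to $\lambda$-terms) sends least-level (resp.\ internal) bang steps to least-level (resp.\ internal) \CbV steps or to equality; projecting the factored sequence then gives $\LTm \llred^* \InvV{T} \nllred^* \LTmTwo$ directly. That level-tracking statement is the substantive content your sketch is missing.
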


A similar result will hold also when extending the pure calculi with a rule $\Root\Rule$ other than $\Root{\op}$, as long as the translation preserves redexes.

\begin{remark}[Preservation of least-level and of normal forms.]
Preservation of normal form and least-level is delicate.
For instance, it does not hold with the definition \CbV translation $\Cbv{(\cdot)}$ in \cite{GuerrieriManzonetto18,SantoPintoUustalu19}. There, the translation $\tm = \tmThree \tmTwo \in \Lambda$ would be $\Cbv{\tm} = (\Der{\Bangp{\Cbv{\tmThree}}}) \Cbv{\tmTwo}$ and then \Cref{prop:preservation-normal} and \Cref{prop:preservation-reduction} would not hold: $\Der{\Bangp{\Cbv{\tmThree}}}$ is a $\Tot$-redex in $\Cbv{\tm}$ (see \Cref{ex:identity}) and hence $\Cbv{\tm}$ would not be normal  even though so is $\tm$, and $\llBang{\Cbv{\tm}} = 0$ even though $\llCbv{\tm} \neq 0$.
This is why we defined two distinct case when defining $\Cbv{(\cdot)}$ for applications, akin to \cite{BucciarelliKesnerRiosViso20}. 
\end{remark}
\section{Least-level factorization via   bang calculus}
We have shown that  least-level factorization in a bang calculus $\BangSetOp$ implies least-level factorization in the corresponding \CbN and \CbV calculi,  via forth-and-back translation. 
The central question now is \emph{how to prove least-level factorization} for a bang calculus: the rest of the paper is devoted to that.

\subsubsection{Overview.}
\label{sec:overview}
 Let us overview our approach by considering $\OpSet=\{\op\}$, and $\red \eq \redbb \cup \redx{\op}$. 
Since by definition $\llred = \llredbb \cup \llredx{\op}$ (and $\nllred = \nllredbb \cup \nllredx{\op}$), 
 \Cref{thm:modular} states that we can \emph{decompose} least-level factorization of $\red$ in three modules:
 \begin{enumerate}
 	\item prove \ll-factorization of $\redbb$, \ie ~~
 	$\redbb^* ~\subseteq~ \llredbb^* \cdot \nllredbb$
 	 \item prove \ll-factorization of $\redx{\op}$, \ie ~~  	$\redx{\op}^* ~\subseteq~ \llredx{\op}^* \cdot \nllredx{\op}$
 	 \item prove the two linear swaps of \Cref{thm:modular}.
 \end{enumerate}
Please note that the least level for both $\llredbb$ and $\llredx{\op}$  is   defined  with respect to the  redexes  $\R=\R_{\bbeta} \cup \R_{\op}$, so  to have $\llred = \llredbb \cup \llredx{\op}$. This  addresses  the issue we mentioned in \Cref{ex:issue}.

Clearly, points 2. and 3. depend on the specific rule $\Root{\op}$. However, the beauty of a modular approach is that  point 1. can be established in general: we do not need to know $\Root{\op}$, only the shape of its redexes $\R_{\op}$. 
In  \Cref{sect:factorization} we  provide a general result of   \ll-factorization for  $\redbb$ (\Cref{thm:factorize-bang}). In fact, 
we shall show a bit more:   the   way of  decomposing the study of  factorization that we have sketched, can be applied to study least-level factorization of any reduction $\red \eq \redbb \cup \redc$, as long as  $\red$ has a \good least-level. 
	
Once (1.) is established 
(once and for all), to prove factorization of a reduction $\redbb\cup \redx{\op}$ we are only left with (2.) and (3.).  
In \Cref{sec:ll_modular}	 we show that  the proof of the two linear swaps can be reduced to a single, simple test, involving  only the $\Root{\op}$ step  (\Cref{prop:test_ll}).	
In \Cref{sec:case_study}, we will illustrate how  all elements play together on a concrete case, applying them to non-deterministic $\lam$-calculi.
	
%

\subsection{Factorization of $\redbb$ in a bang calculus}
\label{sect:factorization}

We  prove that $\redbb$-reduction \emph{factorizes} via least-level reduction (\Cref{thm:factorize-bang}).
%
The result holds for a definition of  $\llredbb$ (as in  \Cref{sec:ll}) where the set of redexes $\R$ is $\R_{\bbeta} \cup \R_{\op}$---this generalization has essentially no cost, and allows us to use \Cref{thm:factorize-bang} as a module in the factorization of a larger reduction. 

We prove factorization via  Takahashi's  Parallel Reduction method \cite{Takahashi95}. 
We  define a reflexive reduction $\ParallelNotLlBang$ (called parallel internal $\Tot$-reduction) which satisfies the conditions of \Cref{fact:fact}, \ie $ \ParallelNotLlBang^* \eq   \nllredBang^*$ and $\ParallelNotLlBang \!\cdot\! \llredBang \subseteq \llredBang^* \!\cdot\! \ParallelNotLlBang$.

	The tricky point is to prove 
	$\ParallelNotLlBang \!\cdot\! \llredBang \subseteq \llredBang^* \!\cdot\! \ParallelNotLlBang$
	We adapt the proof technique   in \cite{AccattoliFaggianGuerrieri19}. All details are \SLV{in \cite{bangLong}.}{in the appendix.}
	Here we just give the definition of $\ParallelNotLlBang$.

We first introduce 
$\ParallelBangAt{n}$ (the parallel version of $\ToBangAt{n}$), which fires simultaneously a number of $\Tot$-redexes at level at least $n$ 
(and $\ParallelBangAt{\infty}$ does not reduce any $\Tot$-redex: $\Tm \ParallelBangAt{\infty} \TmTwo$ implies $\Tm = \TmTwo$).
\begin{center}
	\small
	\begin{prooftree}
		\infer0{\var \ParallelBangAt{\infty} \var}
	\end{prooftree}
	\quad
	\begin{prooftree}
		\hypo{\Tm \ParallelBangAt{n} \Tm'}
		\infer1{\La{\var}{\Tm} \ParallelBangAt{n} \La{\var}{\Tm'}}
	\end{prooftree}
	\quad
	\begin{prooftree}[separation=1.2em]
		\hypo{\Tm \ParallelBangAt{m} \Tm'}
		\hypo{\TmTwo \ParallelBangAt{n} \TmTwo'}
		\infer2{\Tm\TmTwo \ParallelBangAt{\min\{m,n\}} \Tm'\TmTwo'}
	\end{prooftree}
	\quad
	\begin{prooftree}
		\hypo{\Tm \ParallelBangAt{n} \Tm'}
		\infer1{\Bang{\Tm} \ParallelBangAt{n\!+\!1} \Bang{\Tm'}}
	\end{prooftree}
	\quad
	\begin{prooftree}[separation=1.2em]
		\hypo{\Tm \ParallelBangAt{n} \Tm'}
		\hypo{\TmTwo \ParallelBangAt{m} \TmTwo'}
		\infer2{(\La{\var}{\Tm})\Bang{\TmTwo} \ParallelBangAt{0} \Tm'\Sub{\TmTwo'\!}{\var}}
	\end{prooftree}
\end{center}

The \emph{parallel internal $\Tot$-reduction} $\ParallelNotLlBang$ is the parallel version of $\nllredBang$, which fires simultaneously a number of $\Tot$-redexes that are not at minimal level.
Formally, 
\begin{center}
	\small
	$\Tm \ParallelNotLlBang \TmTwo$ \quad if $\Tm \ParallelBangAt{n} \TmTwo$ with $n = \infty$ or $n > \llBang{\Tm}$.
\end{center}


\begin{restatable}[Least-level factorization of $\redbb$]{theorem}{factorizebang}
	\label{thm:factorize-bang} 
	Assume 
	$\red =\redbb \cup \redx{\Rule}$ has \good least-level in $\BangSetOp$.  
	Then:
	$	 \Tm \ToTot^* \TmTwo \text{ implies } \Tm \llredBang^* \!\cdot\! \nllredBang^* \TmTwo.$
\end{restatable}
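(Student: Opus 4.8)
The plan is to run Takahashi's parallel-reduction method through the characterization of factorization (\Cref{fact:fact}): to obtain $\F{\llredBang}{\nllredBang}$ it suffices to exhibit a reduction $\iparred$ with $\iparred^* = \nllredBang^*$ and strong postponement $\LP{\llredBang}{\iparred}$, and I take $\iparred$ to be the parallel internal reduction $\ParallelNotLlBang$ defined above. The proof then splits into two claims: (a) $\ParallelNotLlBang^* = \nllredBang^*$, and (b) $\ParallelNotLlBang \cdot \llredBang \subseteq \llredBang^* \cdot \ParallelNotLlBang$. Since $\red$ is assumed to have a \good least-level, I use throughout monotonicity (\Cref{def:good}.\ref{p:ll-properties-monotone}) and internal invariance (\Cref{def:good}.\ref{p:ll-properties-invariance}). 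Note that the statement concerns only $\ToTot$-steps, whereas $\llredBang$ and $\nllredBang$ are measured against the full redex set $\R = \R_\bbeta \cup \R_\Rule$; so nothing about $\Root\Rule$ beyond the shape of its redexes enters the argument, which is exactly what makes the result reusable as a module.

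For claim (a), the inclusion $\nllredBang \subseteq \ParallelNotLlBang$ is immediate, since one internal step is the parallel firing of a single redex at level $> \llBang{\Tm}$, giving $\nllredBang^* \subseteq \ParallelNotLlBang^*$. For the converse I sequentialize a parallel internal step $\Tm \ParallelBangAt{n} \TmTwo$ (with $n = \infty$ or $n > \llBang{\Tm}$) into a sequence of one-at-a-time $\nllredBang$-steps, by induction on its derivation. The point to check is that each redex, fired in isolation, is still internal: by internal invariance the least level does not move while internal redexes are fired, so the remaining redexes stay at level strictly above the unchanged least level. This is the bookkeeping carried out in \cite{AccattoliFaggianGuerrieri19}, which I adapt.

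The core of the argument, and the main obstacle, is claim (b), where the higher-order substitution bites. Consider $\Tm \ParallelNotLlBang \TmTwo \llredBang \TmThree$. By claim (a) together with internal invariance, $\llBang{\Tm} = \llBang{\TmTwo}$, so the last step fires a redex $R$ at level exactly $\llBang{\Tm}$ in $\TmTwo$. Since $\ParallelNotLlBang$ fires only redexes at levels strictly greater than $\llBang{\Tm}$, the redex $R$ is a residual of a redex already occurring, at the same level, in $\Tm$ and untouched by the parallel step; I can therefore fire it first, obtaining $\Tm \llredBang \TmFour$, and then project the leftover parallel internal redexes onto $\TmFour$ to get $\TmFour \ParallelNotLlBang \TmThree$. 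I establish this by induction on the derivation of $\Tm \ParallelBangAt{n} \TmTwo$, with a sub-analysis of the position of $R$. The delicate rules are application and the $\Tot$-redex rule $(\La{\var}{\Tm})\Bang{\TmTwo} \ParallelBangAt{0} \Tm'\Sub{\TmTwo'}{\var}$: there a substitution may duplicate or relocate $R$, and one must verify, using monotonicity and internal invariance, that the levels of all residuals stay consistent, so that internal redexes remain internal and the least-level redex remains least-level.

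Putting the two claims together, \Cref{fact:fact} yields $\F{\llredBang}{\nllredBang}$ for the reduction $\ToTot$; since $\llredBang \cup \nllredBang = \ToTot$, this reads $\ToTot^* \subseteq \llredBang^* \cdot \nllredBang^*$, which is precisely the statement. As the whole development invokes only the \good least-level hypothesis on $\red$, the theorem holds uniformly for every extension $\redx{\Rule}$ of $\redbb$ with a good least level, so that it can serve as the first (once-and-for-all) module in the modular factorization of a compound reduction.
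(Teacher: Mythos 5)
Your overall skeleton --- invoking \Cref{fact:fact} with $\iparred \Defeq \ParallelNotLlBang$, then proving (a) $\ParallelNotLlBang^* = \nllredBang^*$ and (b) strong postponement --- is exactly the paper's plan, and your claim (a) is fine. The gap is in your proof of claim (b). You argue that after firing the least-level redex first you can ``project the leftover parallel internal redexes'' onto the result, i.e.\ your induction actually establishes the one-step swap $\ParallelNotLlBang \cdot \llredBang \ \subseteq\ \llredBang \cdot \ParallelNotLlBang$, with the invariant that ``internal redexes remain internal''. That invariant is false: a least-level $\Tot$-step opens a box, and redexes sitting inside that box drop one level, possibly landing exactly on the least level. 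Concretely, let $\TmThree \Defeq \Der{\Bang{\varThree}}$ (a redex, $\TmThree \ToTot \varThree$) and $\Tm \Defeq \Der{\Bangp{\TmThree}}$, so that $\llBang{\Tm} = 0$ and $\TmThree$ occurs at level $1$. Then $\Tm \ParallelNotLlBang \Der{\Bang{\varThree}} \llredBang \varThree$ (fire $\TmThree$ internally, then the outer redex). But the \emph{only} least-level step from $\Tm$ is $\Tm \llredBang \var\Sub{\TmThree}{\var} = \TmThree$, and from $\TmThree$ the only internal parallel step is the identity: the residual of the internally fired redex is now the whole term, at level $0 = \llBang{\TmThree}$, so firing it is a least-level step, not an internal one. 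Hence $\Tm \llredBang \cdot \ParallelNotLlBang \varThree$ fails; what holds is $\Tm \llredBang \TmThree \llredBang \varThree \ParallelNotLlBang \varThree$, with \emph{two} least-level steps. Your structural induction has no way to produce that second $\llredBang$-step, because it only appears after the substitution has re-levelled the residuals --- precisely the point where you assert nothing changes.

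This is why the paper does not prove postponement in one pass. Its proof (in the appendix, following the abstract scheme of \cite{AccattoliFaggianGuerrieri19}) factors the tricky inclusion into two lemmas: a \emph{Merge} lemma, $\ParallelNotLlBang \cdot \llredBang \ \subseteq\ \ParallelBang$, which absorbs the internal parallel step and the trailing least-level step into a single \emph{full} parallel step (in the example, $\Tm \ParallelBangAt{0} \varThree$ by the redex rule), and a \emph{Split} lemma, $\ParallelBang \ \subseteq\ \llredBang^* \cdot \ParallelNotLlBang$, which peels least-level steps off a parallel step one at a time. Crucially, Split is proved by induction not on the derivation but on the index of the indexed parallel reduction $\ParallelBangInd{n}$, which counts fired redexes with multiplicity (the redex rule carries index $n + \SizeP{\Tm'}{\var}\cdot m + 1$): this measure strictly decreases at each extraction even when substitution duplicates or re-levels redexes, which is what makes the argument close where yours does not. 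To repair your proof you should either adopt this Merge/Split decomposition or supply an induction measure playing the same role; the ``residuals stay internal'' route cannot be patched.
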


\begin{corollary}[Least-level factorization in  the pure bang calculus]	\label{cor:factorize-bang}
	In the pure bang calculus $(\BangSet, \ToTot)$,  if $\Tm \ToTot^* \TmTwo$ then $\Tm \llredBang^* \!\cdot\! \nllredBang^* \TmTwo$.

		
\end{corollary}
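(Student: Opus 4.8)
The plan is to obtain the corollary as the operator-free instance of \Cref{thm:factorize-bang}. In the pure bang calculus we take $\OpSet = \emptyset$, so that $\BangSetOp = \BangSet$, the set of redexes is $\R = \R_\bbeta$, and the reduction is $\red = \redbb$ (formally, instantiate the rule $\Rule$ of the theorem with the empty relation, so that $\redx{\Rule} = \emptyset$ and $\red = \redbb$). First I would invoke \Cref{prop:ll-properties} with $\OpSet = \emptyset$ to know that $\redbb$ has a \good least-level, which is exactly the hypothesis required by \Cref{thm:factorize-bang}. Applying the theorem then yields that $\Tm \ToTot^* \TmTwo$ implies $\Tm \llredBang^* \!\cdot\! \nllredBang^* \TmTwo$, which is the statement. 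Thus the corollary carries no obstacle of its own: all the content sits in \Cref{thm:factorize-bang}.

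For completeness I record the skeleton I would use for that theorem, since it is where the real difficulty lies. By \Cref{fact:fact} it suffices to exhibit a reduction whose reflexive–transitive closure equals $\nllredBang^*$ and which strongly postpones after $\llredBang$; the witness is the parallel internal $\Tot$-reduction $\ParallelNotLlBang$, so the two obligations become $\ParallelNotLlBang^* = \nllredBang^*$ and the strong postponement $\LP{\llredBang}{\ParallelNotLlBang}$, i.e. $\ParallelNotLlBang \!\cdot\! \llredBang \subseteq \llredBang^* \!\cdot\! \ParallelNotLlBang$. I would first note that every $\Tot$-step is either least-level or internal, since the contracted redex lies in $\R$ and hence occurs at level $\geq \llBang{\Tm}$; so $\ToTot = \llredBang \cup \nllredBang$ and the characterization indeed targets the right reduction.

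The equality $\ParallelNotLlBang^* = \nllredBang^*$ is the routine obligation: a single internal step is a parallel internal reduction, and conversely a parallel internal reduction sequentializes into internal steps, using the internal-invariance clause of \Cref{def:good} to guarantee that the least level never moves, so each step stays strictly above it. The hard obligation is the swap $\ParallelNotLlBang \!\cdot\! \llredBang \subseteq \llredBang^* \!\cdot\! \ParallelNotLlBang$, which I would prove by induction on the derivation of $\Tm \ParallelBangAt{n} \TmTwo$ (with $n > \llBang{\Tm}$). The guiding observation is that, by monotonicity and internal invariance (\Cref{def:good}), the least level is preserved along $\ParallelNotLlBang$, so the subsequent step $\TmTwo \llredBang \TmThree$ fires a redex at level $\llBang{\Tm} < n$, strictly \emph{below} all the redexes contracted in parallel. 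The main obstacle is the $\Tot$-rule case, where the least-level redex $(\La{\var}{\TmFour})\Bang{\TmFive}$ is fired: its argument $\Bang{\TmFive}$ and its body $\TmFour$ may host the internal redexes that $\ParallelNotLlBang$ duplicates or erases, so after permuting the least-level step to the front one must check that the residuals of those internal redexes in $\TmFour\Sub{\TmFive}{\var}$ still sit strictly above the new least level, hence remain internal. The \good-least-level bookkeeping is precisely what makes these levels line up, letting the fired least-level step be pushed before the parallel internal reduction and closing the induction.
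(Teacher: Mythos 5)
Your proof is correct and matches the paper's: the corollary is exactly the instance of \Cref{thm:factorize-bang} with $\OpSet = \emptyset$ (no extra rule $\Rule$, so $\red = \redbb$), whose good-least-level hypothesis is discharged by \Cref{prop:ll-properties}. Your supplementary skeleton for the theorem itself also follows the paper's route (Takahashi's method via the parallel internal reduction $\ParallelNotLlBang$ and the two obligations of \Cref{fact:fact}), the only organizational difference being that the paper proves the swap by factoring it through the full parallel reduction $\ParallelBang$ in two separate lemmas (merge and split) rather than by one direct induction.
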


\SLV{}{
\subsubsection{Surface Reduction.}
\emph{Surface reduction} $\sredbb$ (defined by Simpson in \cite{Simpson05}) is the reduction which only  reduces  a redex  at level $0$ (any such redex). 
It fires redexes that are not inside boxes.
It can be equivalently defined as the closure of $\Root{\bbeta}$ under contexts  $\WCtx$ defined by the grammar $\WCtx  	\Coloneqq \, \Hole{\cdot} \mid \La{\var}{\WCtx} \mid \WCtx\Tm \mid \Tm\WCtx  $.
We write 
$\Tm \sred\TmTwo$ for a step $\red$ which is surface; 
otherwise, we write $\Tm \nsred \TmTwo$. Clearly $\sred\subset \llred$.

\emph{Surface factorization} was already proven in \cite{Simpson05}:  
\begin{center}
	$\Tm \ToTot^* \TmTwo$ implies $\Tm \sredBang^* \!\cdot\! \nsredBang^* \TmTwo$.
\end{center}
We obtain this result as 
a consequence of least-level factorization (\Cref{cor:factorize-bang} and monotonicity (\Cref{prop:ll-properties}) of least-level reduction.
}

%
%
%


\subsection{Pure calculi and least-level normalization}
Least-level factorization of $\redbb$ implies in particular least-level factorization for $\redb$ and $\redbv$. As a consequence, 
least-level reduction is a normalizing strategy for all three pure calculi: the bang calculus, the \CbN, and the \CbV $\lam$-calculi. 
\subsubsection{The pure bang calculus.}\label{sec:bang_strategy}
 $\llredbb$  is a \textit{normalizing strategy} for $\redbb$. Indeed,  it satisfies all ingredients in \Cref{prop:abs_normalization}. 
%
%
%
%
Since we have least-level factorization (\Cref{cor:factorize-bang})), same normal forms, and  \emph{persistence}   (\Cref{prop:ll-properties}), 
$\llredbb$ is a \emph{complete strategy} for $\redbb$:
$\text{If } N \text{ is }\bbeta\text{-normal and }  M \redbb^* N,    \text{ then }   M \llredbb^* N.$

We already observed (\Cref{ex:llred2}) that the least-level reduction $\llredbb$ is non-deterministic, because several redexes at least level may be available. 
Such non-determinism is however inessential, because $\llredbb$ 
is \emph{uniformly normalizing}.

\begin{lemma}[Quasi-Diamond]\label{lemma:diamond}
	In  $(\BangSet, \ToTot)$, the reduction $\llredbb $ is quasi-diamond (\Cref{fact:diamond}), and therefore uniformly normalizing.
	
	
\end{lemma}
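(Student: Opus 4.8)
The plan is to prove the quasi-diamond property of $\llredbb$ directly, by a case analysis on the relative positions of the two fired redexes, and then invoke \Cref{fact:diamond} to get uniform normalization (and confluence) for free. So I would start from a local peak $\Tm_1 \llredrevBang \Tm \llredbb \Tm_2$, where $\Tm \llredbb \Tm_1$ contracts a redex $R_1$ and $\Tm \llredbb \Tm_2$ contracts a redex $R_2$, both fired at the least level $\ell \Defeq \llBang{\Tm}$. If $R_1$ and $R_2$ are the same occurrence, then $\Tm_1 = \Tm_2$ and we are done. Otherwise, since the root of a $\bbeta$-redex is an application node, any two distinct redex occurrences are either \emph{disjoint} or \emph{nested} (one a subterm of the other), with no proper overlap. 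The structural fact I would isolate first, and which is really the heart of the argument, is that \emph{a least-level redex can never occur inside the box-argument of another least-level redex}: the argument of $R_1 = (\La{\var}{P})\Bang{Q}$ is $\Bang{Q}$, whose content $Q$ sits at level $\geq \ell+1$, so no redex at level $\ell$ lies in $Q$. Since firing $R_1$ duplicates (or erases) only $Q$, a least-level redex distinct from $R_1$ is never duplicated.

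For the \textbf{disjoint case}, firing one redex leaves the other untouched, so $R_2$ has a unique residual in $\Tm_1$ (namely $R_2$ itself, still at level $\ell$ because $R_1$'s position is off the path to $R_2$). By monotonicity (\Cref{prop:ll-properties}) $\llBang{\Tm_1} \geq \ell$, and since $\Tm_1$ contains a redex at level $\ell$ we get $\llBang{\Tm_1} = \ell$; hence contracting it is a single least-level step $\Tm_1 \llredbb U$, where $U$ is $\Tm$ with both redexes contracted. Symmetrically $\Tm_2 \llredbb U$, so $\Tm_1 \llredbb U \llredrevBang \Tm_2$.

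For the \textbf{nested case}, by the structural observation a proper inclusion $R_2 \subsetneq R_1 = (\La{\var}{P})\Bang{Q}$ forces $R_2$ to occur inside the body $P$ at level $0$ relative to $P$; write $P = \CtxTwop{R_2}$ with $\CtxTwo$ box-free. Firing $R_1$ gives $\Tm_1 = \Ctxp{P\Sub{Q}{\var}}$, which contains the single residual $R_2\Sub{Q}{\var}$; this is again a $\bbeta$-redex at level $\ell$, since the substitution adds no box on the path $\CtxTwo$ to $R_2$ (the occurrences of $\var$ are leaves, off that path). Thus $\llBang{\Tm_1} = \ell$ and contracting it yields $\Tm_1 \llredbb U_1$. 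Firing $R_2$ first gives $\Tm_2 = \Ctxp{(\La{\var}{\CtxTwop{R_2'}})\Bang{Q}}$ with $R_2'$ the contractum of $R_2$; this is still an $R_1$-shaped redex at level $\ell$, and contracting it yields $\Tm_2 \llredbb U_2$. The two reducts coincide, $U_1 = U_2$, by the substitution lemma for $\bbeta$-reduction: contracting $R_2$ commutes with the substitution $\Sub{Q}{\var}$ (using the usual variable convention $\var \neq \varTwo$ and $\varTwo \notin \fv{Q}$ for the bound variable $\varTwo$ of $R_2$). Hence $\Tm_1 \llredbb U_1 = U_2 \llredrevBang \Tm_2$.

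In every case $\Tm_1 = \Tm_2$ or $\Tm_1 \llredbb U \llredrevBang \Tm_2$, so $\llredbb$ is quasi-diamond, and \Cref{fact:diamond} gives uniform normalization and confluence. The main obstacle, and the only part needing genuine care, is the nested case: one must verify that the inner least-level redex is never duplicated and that its residual stays at the least level, both of which hinge on the observation that least-level redexes avoid box-arguments together with monotonicity; the final coincidence $U_1 = U_2$ then reduces to the standard substitution lemma.
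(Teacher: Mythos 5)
The paper never actually proves this lemma: it is stated in Section 6 and used, but neither the main text nor the appendix contains an argument for it (the appendix proves the merge/split lemmas for factorization, the good-least-level properties, and the modular swaps, but not \Cref{lemma:diamond}). So there is no official proof to compare against; judged on its own, your proof is correct, and it is essentially the argument the lemma tacitly relies on. Your central structural observation is exactly the right one: in a $\Tot$-redex $(\La{\var}{P})\Bang{Q}$ the argument $Q$ sits at level $\geq \ell+1$, so a second redex at the least level $\ell$ can only be disjoint from the first or lie in its body $P$ under no box; consequently least-level redexes are never duplicated nor erased by a least-level step, each has a unique residual, and the residual's level is unchanged because substitution replaces variable leaves and thus adds no box on the path to the residual. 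Combining this with monotonicity (\Cref{prop:ll-properties}) to conclude $\llBang{\Tm_i}=\ell$, so that the closing steps are themselves least-level, and with substitutivity of $\Root{\Tot}$ plus the standard substitution lemma to get $U_1=U_2$ in the nested case, yields the quasi-diamond (indeed, a genuine one-step diamond up to syntactic equality), and \Cref{fact:diamond} then gives uniform normalization. Two cosmetic points only: ``$\CtxTwo$ box-free'' should read ``the hole of $\CtxTwo$ is at level $0$'' (boxes may well occur in $\CtxTwo$ off the path to the hole), and the nested case should be stated with a WLOG since $R_1$ inside $R_2$ is symmetric. It is also worth noting that your level-preservation and no-duplication arguments are the same ingredients the paper uses, in parallelized form, in its merge and split lemmas (\Cref{lemma:merge}, \Cref{lemma:split}), so your proof fits naturally alongside the paper's existing machinery.
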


Putting all the ingredients together, we have (by \Cref{prop:abs_normalization}):


\begin{restatable}[Least-level normalization]{theorem}{normalizebang}
	\label{thm:normalize-bang}
	In the pure bang calculus $\llredbb$ is a normalizing strategy for $\redbb$.
\end{restatable}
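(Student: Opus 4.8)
The plan is to invoke the abstract normalization lemma (\Cref{prop:abs_normalization}) instantiated with $\red \eq \redbb$, $\ered \eq \llredbb$, and $\nered \eq \nllredbb$. By that lemma it suffices to check that $\llredbb$ is a strategy for $\redbb$ and then to verify three conditions — persistence, factorization, and uniformity. Each ingredient has already been established (often in a more general form) among the earlier results, so the argument is essentially an assembly.

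First I would record that $\llredbb$ is indeed a strategy for $\redbb$, i.e.\ the two reductions have the same normal forms: this is immediate, since a term has a $\redbb$-redex if and only if it has one at minimal level, hence a $\llredbb$-redex. For \emph{persistence} (condition 1 of \Cref{prop:abs_normalization}) I would appeal to \Cref{p:ll-properties-persistence}: the pure bang calculus has a \good least-level by \Cref{prop:ll-properties}, and persistence is exactly the consequence drawn there, namely that an internal step $\Tm \nllredbb \TmTwo$ leaves $\TmTwo$ non-normal. For \emph{factorization} (condition 2) I would cite \Cref{cor:factorize-bang} directly: $\Tm \redbb^* \TmTwo$ implies $\Tm \llredbb^* \cdot \nllredbb^* \TmTwo$. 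Together, persistence and factorization give completeness of $\llredbb$ via \Cref{prop:abs_normalization}.

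To upgrade completeness to normalization, it remains to verify \emph{uniformity} (condition 3): every weakly $\llredbb$-normalizing term is strongly $\llredbb$-normalizing. This follows from \Cref{lemma:diamond}, which shows that $\llredbb$ is quasi-diamond; by Newman's observation (\Cref{fact:diamond}) a quasi-diamond reduction is uniformly normalizing, which is precisely the uniformity condition. Combining the three conditions through \Cref{prop:abs_normalization} yields that $\llredbb$ is a normalizing strategy for $\redbb$.

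The genuine content lies not in this final assembly but in the two results it rests on. The main obstacle is the factorization statement \Cref{cor:factorize-bang} (a corollary of \Cref{thm:factorize-bang}), whose proof goes through the parallel-reduction method and the delicate swap $\ParallelNotLlBang \cdot \llredBang \subseteq \llredBang^* \cdot \ParallelNotLlBang$; secondarily, the quasi-diamond property of \Cref{lemma:diamond}. Once those are in place, the normalization theorem is a direct corollary requiring no further combinatorial work.
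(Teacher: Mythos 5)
Your proof is correct and follows exactly the paper's own argument: the paper likewise assembles \Cref{thm:normalize-bang} by instantiating \Cref{prop:abs_normalization} with $\llredbb$, using the same-normal-forms observation, persistence from \Cref{prop:ll-properties} (via \Cref{p:ll-properties-persistence}), factorization from \Cref{cor:factorize-bang}, and uniformity from the quasi-diamond property of \Cref{lemma:diamond} together with \Cref{fact:diamond}. Nothing is missing, and your closing remark about where the real work lies (in \Cref{thm:factorize-bang} and \Cref{lemma:diamond}) matches the paper's structure.
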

 \Cref{thm:normalize-bang} means not only that if $\Tm$ is $\Tot$-normalizable then $\Tm$ can reach its normal form by just performing least-level steps, but also that performing \emph{whatever} least-level steps eventually leads to the normal form, if any.

\subsubsection{Pure \CbV and \CbN $\lam$-calculi.}
	By forth-and-back translation (\Cref{thm:translation}) the  least-level factorization and  normalization results for the pure bang calculus  immediately transfers to the \CbN and \CbV setting.
	
	\begin{theorem}[\CbV and \CbN least-level normalization]
		\begin{itemize}
			\item \CbN: In  $(\Lambda, \redb)$, $\llredb$ is a normalizing strategy for $\redb$.
			\item \CbV: In  $(\Lambda, \redbv)$, $\llredbv$ is a normalizing strategy for $\redbv$.
		\end{itemize}
	\end{theorem}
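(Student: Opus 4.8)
The plan is to obtain both statements as immediate instances of the transfer theorem \Cref{thm:translation}, fed with the pure bang-calculus normalization result \Cref{thm:normalize-bang}, taking the empty operator set $\OpSet = \emptyset$.

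First, \Cref{thm:normalize-bang} tells us that the pure bang calculus $(\BangSet, \ToTot)$ admits least-level normalization, \ie $\llredbb$ is a normalizing strategy for $\redbb$. I would then apply the second item of \Cref{thm:translation} with $\OpSet = \emptyset$: in that case the compound systems $\LambdaOp^\cbn$ and $\LambdaOp^\cbv$ specialize to exactly $(\Lambda, \redb)$ and $(\Lambda, \redbv)$, so the hypothesis of \Cref{thm:translation} is met by \Cref{thm:normalize-bang}, and its conclusion yields least-level normalization for both the \CbN and the \CbV $\lambda$-calculus.

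The only thing left to pin down is that the normalizing strategies delivered by the transfer are literally $\llredb$ and $\llredbv$. This is guaranteed by \Cref{prop:preservation-reduction}: via $\Cbn{(\cdot)}$ a step $\LTm \llredb \LTmTwo$ corresponds, back and forth, to $\Cbn{\LTm} \llredBang \Cbn{\LTmTwo}$, and via $\Cbv{(\cdot)}$ a step $\LTm \llredbv \LTmTwo$ corresponds to $\Cbv{\LTm} \llredBang\llredDer^= \Cbv{\LTmTwo}$, with analogous correspondences for the internal reductions. Combined with \Cref{prop:preservation-normal} (preservation of normal forms, back and forth), this ensures that any maximal $\llredb$-sequence (\Resp $\llredbv$-sequence) from $\LTm$ is mirrored by a maximal least-level sequence from $\Cbn{\LTm}$ (\Resp $\Cbv{\LTm}$) in the bang calculus, and that it ends in a normal form exactly when the latter does. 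Hence $\llredb$ and $\llredbv$ inherit the normalizing-strategy property from $\llredbb$.

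At this level there is no real obstacle: every step is a direct appeal to an already-established result. The genuine difficulty lies entirely upstream---in \Cref{thm:normalize-bang} (which itself rests on the quasi-diamond property \Cref{lemma:diamond} and the abstract criterion \Cref{prop:abs_normalization}) and, above all, in the level- and normal-form-preservation lemmas underlying \Cref{thm:translation}, whose soundness for \CbV crucially depends on the refined application case of the embedding $\Cbv{(\cdot)}$.
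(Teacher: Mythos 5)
Your proposal is correct and takes exactly the paper's route: the paper derives this theorem by combining \Cref{thm:normalize-bang} (least-level normalization in the pure bang calculus) with the forth-and-back transfer theorem \Cref{thm:translation}, which with $\OpSet = \emptyset$ specializes to $(\Lambda, \redb)$ and $(\Lambda, \redbv)$. Your closing remarks on \Cref{prop:preservation-normal} and \Cref{prop:preservation-reduction} are precisely the ingredients underlying \Cref{thm:translation}, so nothing is missing or superfluous.
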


\subsection{Least-level Factorization, Modularly. }\label{sec:ll_modular}

As anticipated at the beginning of this section, we can use  \Cref{thm:factorize-bang} also as part of the proof of factorization for  a more complex calculus.
We now introduce one more useful  tool: a simple test to establish least-level factorization of a reduction $\redbb \cup \redx{\Rule} $ (where $\redx{\Rule}$ is a new reduction added to $\redbb$). 
We shall give an example of its use  in \Cref{sec:case_study} (see the proof of \Cref{thm:ND_fact}).

The test embodies  \Cref{thm:modular}, and the fact that we already know (once for all) that $\redbb$ factorizes via $\llredbb$. 
It   turns out that the proof of the two linear swaps can be reduced to a single, simple test, which  only involves the $\Root{\Rule}$
step.

\begin{restatable}[Test for {modular} least-level factorization]{proposition}{testll}\label{prop:test_ll} 
Let  $\redc$ be   the contextual closure of a rule $\Root{\Rule}$, and 
assume 	
	  $\red \eq( \redbb \cup \redc)$ has a \good least-level.  Then 	$\red$  factorizes via $\llred \eq (\llredbb \cup \llredc)$ if the following hold:
	\begin{enumerate}
		\item \emph{$\ll$-factorization of $\redc$}: ~~
		$\redc^* ~\subseteq ~\llredc^* \cdot \nllredc^*$
		\item  $\rredc$ is  \emph{substitutive}:  ~~ 	$R \rredc R' 
		\text{ implies } R \subs x Q \rredc R'\subs x Q.$
		\item \emph{Root linear swap}: ~~ $\nllredbb \cdot \rredc  \ \subseteq \ \rredc \cdot\redbb^* $.
	\end{enumerate}
\end{restatable}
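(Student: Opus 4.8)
The plan is to obtain the statement from the modular factorization lemma (\Cref{thm:modular}) applied with $\reda \Defeq \redbb$ and the given $\redc$. First, since $\red$ has good least-level, \Cref{thm:factorize-bang} applies, taking the set of redexes to be $\R = \R_{\bbeta}\cup\R_{\Rule}$ so that its least level coincides with the one used for $\llred$; this gives that $\redbb$ is $\ex$-factorizing via $\ereda \Defeq \llredbb$ and $\ireda \Defeq \nllredbb$. By hypothesis~1, $\redc$ is $\ex$-factorizing via $\eredc \Defeq \llredc$ and $\iredc \Defeq \nllredc$. Since $\llred = \llredbb\cup\llredc$ and $\nllred = \nllredbb\cup\nllredc$, \Cref{thm:modular} delivers $\F{\llred}{\nllred}$ as soon as the two linear swaps $\nllredbb\cdot\llredc \subseteq \llredc\cdot\redbb^*$ and $\nllredc\cdot\llredbb \subseteq \llredbb\cdot\redc^*$ are established. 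All the work is in these two swaps.

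First I would fix a span $\Tm \to \TmTwo \to \TmThree$ in which the first step is internal (at some level $\ell > \llb{}{\Tm}$) and the second is least-level (at level $\llb{}{\TmTwo}$). By internal invariance (\Cref{def:good}) $\llb{}{\TmTwo} = \llb{}{\Tm}$, so the two contracted redexes sit at different levels, the internal one strictly deeper. A step at level $\ell$ creates new redexes only at level $\ge \ell > \llb{}{\Tm}$, hence the least-level redex fired second is the residual of a redex already occurring in $\Tm$ at level $\llb{}{\Tm}$; contracting that residual first provides the required leading least-level step. It then remains to reconcile the two firing orders. Since the deeper redex cannot contain the shallower one (that would force its level to be $\ge$ the shallower one's, contradicting strictness), only two positions occur: the two redexes are disjoint, or the deep one is nested inside the shallow one. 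The disjoint case is immediate, as the two steps commute, yielding $\llredc\cdot\redbb$ for the first swap and $\llredbb\cdot\redc$ for the second.

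The nested cases are where hypotheses 2 and 3 enter. For the first swap the shallow least-level redex is a $\redc$-redex $\TmD$ with $\TmD \rredc \TmD'$, and the deep $\beta_\oc$-redex lies inside $\TmD$; viewing $\TmD$ in isolation, its root is a $\redc$-redex, so $\llb{}{\TmD}=0$ and the inner $\beta_\oc$-step is internal for $\TmD$, i.e. $\TmD \nllredbb \widehat{\TmD}$ followed by the root step $\widehat{\TmD} \rredc \widehat{\TmD}'$. The root linear swap (hypothesis 3) gives $\TmD \rredc\cdot\redbb^* \widehat{\TmD}'$; restoring the surrounding context, whose hole is at level $\llb{}{\Tm}$, turns the leading root step into the desired $\llredc$-step and keeps the trailing reduction inside $\redbb^*$. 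For the second swap the shallow least-level redex is a $\beta_\oc$-redex $(\La{\var}{\tmTwo})\Bang{\tmThree}$ and the deep $\redc$-redex lies inside it: if it lies in the body $\tmTwo$, firing $\beta_\oc$ first applies the substitution $\Sub{\tmThree}{\var}$ to it, and substitutivity (hypothesis 2) ensures it remains a $\redc$-redex reducing correctly, giving one trailing $\redc$-step; if it lies in the box content $\tmThree$, the substitution merely duplicates (or erases) it, and contracting each of its copies realizes the trailing $\redc^*$. In both nested cases the two orders reach the same term, so the swaps hold.

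The hard part is the nested case of the first swap, namely faithfully reducing the \emph{contextual} least-level $\redc$-step to the \emph{root} rule through hypothesis 3. This requires checking that the inner $\beta_\oc$-redex is genuinely internal for the isolated redex $\TmD$, that the residuals produced by the root swap reassemble correctly once the context is plugged back (in particular that the leading step remains least-level), and, underlying the whole argument, the creation claim used above: a step strictly below the least level neither creates nor erases redexes at the least level, so that the least-level redex of $\TmTwo$ is always a residual of one in $\Tm$. This monotonicity-and-invariance bookkeeping is exactly what good least-level (\Cref{def:good}, \Cref{prop:ll-properties}) is there to supply.
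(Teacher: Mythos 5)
Your top-level route is exactly the paper's: invoke \Cref{thm:modular} with components $\redbb$ and $\redc$, obtain $\ll$-factorization of $\redbb$ from \Cref{thm:factorize-bang} (with least level computed w.r.t.\ $\R_{\bbeta}\cup\R_{\Rule}$, a subtlety you correctly make explicit), use hypothesis~1 for $\redc$, and reduce everything to the two linear swaps. The difference is in how the swaps are proved. The paper packages them into two reusable lemmas: a structural induction on the middle term (\Cref{l:ll_swaps}) showing that any linear swap follows from its \emph{root} version, and \Cref{l:swap_after_b}, showing that substitutivity of $\rredc$ alone gives $\nllredc\cdot\llredbb \subseteq \llredbb\cdot\redc^*$. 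You argue directly on positions instead: disjoint redexes commute, nested ones are handled by isolating the outer redex and invoking hypothesis~3 (first swap) or substitutivity (second swap). This is the same argument in different clothing---the paper's induction is precisely a way of performing your ``isolate the subterm'' step without ever speaking of positions or residuals, which keeps the formal overhead lower.

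There is, however, one genuinely flawed justification, in the nested case of the first swap. Your ``creation claim'' (a step at level $\ell$ creates redexes only at level $\geq\ell$), and the resulting assertion that the pullback $\TmD$ ``is a $\redc$-redex, so $\llb{}{\TmD}=0$'', do not follow from good least-level, and as written the reasoning is circular. For a \emph{generic} rule $\Root{\Rule}$, $\Rule$-redexhood may depend on structure at arbitrary depth, so an internal $\bbeta$-step can turn a non-redex into a $\Rule$-redex ``from above''; internal invariance only says the least \emph{level} is preserved, not that the particular redex fired in $\TmTwo$ pulls back to a redex of $\Tm$. Indeed, ruling out such upward creation at a distance is exactly the job of hypothesis~3: for a rule exhibiting it, $\nllredbb\cdot\rredc\subseteq\rredc\cdot\redbb^*$ fails, since the source of the span is not a $\Rule$-redex and $\rredc$ cannot fire from it. So you may not first declare $\TmD$ a $\redc$-redex and then apply hypothesis~3. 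The repair is short and uses only tools you already cite, but in the opposite order: from $\TmD\redbb\widehat{\TmD}$ and $\llb{}{\widehat{\TmD}}=0$ (it is a redex), \emph{monotonicity} (\Cref{def:good}) gives $\llb{}{\TmD}=0$; hence the inner step, at strictly positive level relative to $\TmD$, is internal for $\TmD$, so hypothesis~3 applies, and its conclusion \emph{a posteriori} certifies that $\TmD$ is a $\redc$-redex and supplies the leading root step; plugged back into the surrounding context, whose hole is at level $\llb{}{\TmTwo}=\llb{}{\Tm}$ by internal invariance, that step is least-level in $\Tm$. With this reordering your proof goes through. (The analogous pullback in the second swap is unproblematic: a step at strictly positive relative level inside $(\La{\var}{\tmTwo})\Bang{\tmThree}$ cannot alter the abstraction-applied-to-box shape, cf.\ \Cref{cor:bo_redex}, so there shape preservation suffices.)
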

Note that, as usual, at point (1.) 	 the least level is defined w.r.t. $\R=\R_{!\beta}\cup \R_{\rho}$.

\SLV{}{

\paragraph{A modular test for \CbN and \CbV least-level factorization.} A modular test analogous to \Cref{prop:test_ll} holds also in \CbN and \CbV: simply replace in the statement  $\bbeta$ with $\beta$ (for \CbN) or 
$\betav$ (for \CbV).
}

\section{Case study:  non-deterministic $\lam$-calculi}\label{sec:case_study}
\newcommand{\Bango}{{\BangSet}_{\oplus}}

To 
show how to use our framework, we apply the set of tools which we have developed  on our running example.
We extend the bang calculus with a non-deterministic operator, then considering $(\Bango, \red_{\bbeta\oplus})$ where $\red_{\bbeta\oplus}  \eq (\redbb\cup\redo)$,  
and $\redo$ is the contextual closure of the  (non-deterministic)  rules:
	\begin{equation}
	\op(P,Q)\Root{\oplus} P   \quad \quad	\op(P,Q) \Root{\oplus} Q
\end{equation}

 \paragraph{First step: non-deterministic bang calculus.} We   analyze  $\Bango$. 
 We use our modular test to prove least-level factorization for $\Bango$:
 if $\red_{\bbeta\oplus}^*U  \text{ then } T \llredx{\Tot\oplus}^* \cdot  \nllredx{\Tot\oplus}^* U$.
	By \Cref{prop:abs_normalization}, an 
immediate consequence of the factorization result  is that the least-level strategy is \emph{complete}, \ie 
if $U$ is  normal:	$T \red_{\bbeta\oplus}^*U$ implies $T \llredx{\beta\oplus}^*U$.
\SLV{}{We can apply \Cref{prop:abs_normalization} for the same reasons as in \Cref{sec:bang_strategy}  : $\red_{\bbeta\oplus}$ and $\llred_{\bbeta\oplus}$ have the same normal forms, and moreover  	 \emph{persistence} holds  (by \Cref{prop:ll-properties}).}

\paragraph{Second step: \CbN and \CbV non-deterministic calculi.} By  translation, we have \textit{for free}, that the analogous results hold in $\PLambda^\cbn$ and $\PLambda^\cbv$,
as defined in \Cref{ex:NDext}.  
So, least-level factorization holds  for both calculi, and moreover 
\begin{itemize}
	\item \emph{\CbN completeness}: in $\PLambda^\cbn$, if $\tmu$ is normal:~ $\tm \red_{\beta\oplus}^*\tmu$ implies $\tm \llredx{\beta\oplus}^*\tmu$.
	\item \emph{\CbV completeness}: in $\PLambda^\cbv$, if $\tmu$ is normal:~  $\tm \red_{\betav\oplus}^*\tmu$ implies  $\tm \llredx{\betav\oplus}^*\tmu$.
\end{itemize}

\paragraph{What do we really need to prove?}
The only result  we need to prove is least-level factorization of $\red_{\bbeta\oplus}$. 
Completeness then follows by \Cref{prop:abs_normalization} and the translations will automatically take care of transferring the results. 

To prove factorization of  $\red_{\bbeta\oplus}$, most of the work  is
done, as $\ll$-factorization of  $\redbb$ is already established; we then  use  our test (\Cref{prop:test_ll}) to extend $\redbb$ with $\redo$.
\SLV{}{
To expose how neat and compact this is, we give explicitly the full proof---it turns out to be just a few lines. }
The only ingredients  we need    are  substitutivity of $\oplus$ (which is an obvious property), and the following easy  lemma.
\begin{lemma}[Roots]\label{l:oplus_basic} Let $\gamma\in \{\bbeta,\oplus	\}$.
	If	$T \nllredc P \mapsto_{\oplus} \TmTwo$ then  $T \mapsto_{\oplus} \cdot  \red_{\gamma} ^= \TmTwo$.
\end{lemma}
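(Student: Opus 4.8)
The plan is to read off the shape of $P$ from the root $\oplus$-step, then use the fact that the preceding step is internal to deduce the shape of $T$, after which the conclusion is immediate. First I would analyse the second step: since $P \mapsto_{\oplus} \TmTwo$ is a \emph{root} $\oplus$-step, the rules $\op(P_1,P_2)\Root{\oplus}P_1$ and $\op(P_1,P_2)\Root{\oplus}P_2$ force $P = \op(P_1,P_2)$ with $\TmTwo \in \{P_1,P_2\}$. By the symmetry of the two $\oplus$-rules it suffices to treat $\TmTwo = P_1$ (the case $\TmTwo = P_2$ is identical, selecting the second argument below).

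Next I would pin down the shape of $T$. The step $T \nllredc P$ is \emph{internal}, hence by \Cref{def:ll} it is fired at a level $k$ with $k > \llb{}{T} \geq 0$, so $k \geq 1$. In particular it is \emph{not} a root step, because a root step is fired in the empty context, which has level $\lev{\Hole{\cdot}} = 0$. Therefore $T \red_\gamma P$ is the closure of a $\gamma$-redex under a \emph{non-empty} context, and such a closure preserves the top-level constructor of the term (the hole of a non-empty context lies strictly below the head, as is clear from the context grammar of \Cref{sect:bang-calculus}). From $P = \op(P_1,P_2)$ I therefore obtain $T = \op(T_1,T_2)$, with the fired redex inside exactly one of the two arguments; concretely either $T_1 \red_\gamma P_1$ and $T_2 = P_2$, or else $T_1 = P_1$ and $T_2 \red_\gamma P_2$.

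Finally I would fire the root $\oplus$-step on $T$ itself, selecting the first component: $T = \op(T_1,T_2) \mapsto_{\oplus} T_1$. In the first subcase $T_1 \red_\gamma P_1 = \TmTwo$, so I close with one $\gamma$-step, $T \mapsto_{\oplus} T_1 \red_\gamma \TmTwo$; in the second subcase $T_1 = P_1 = \TmTwo$, so I close with zero $\gamma$-steps, $T \mapsto_{\oplus} T_1 = \TmTwo$. In both cases $T \mapsto_{\oplus} \cdot \red_\gamma^{=} \TmTwo$, which is exactly the desired conclusion, and this covers both $\gamma = \bbeta$ and $\gamma = \oplus$ uniformly since the argument never inspects which rule $\gamma$ is.

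The only genuinely delicate point is the level observation that an internal step can never be a root step: this is what guarantees the reducing context is non-empty and hence that $T$ inherits the constructor $\op$ from $P$. Once $T = \op(T_1,T_2)$ is established the remainder is a one-line case split on where the internal redex sits, so I do not anticipate any further obstacle.
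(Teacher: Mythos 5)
Your proof is correct and takes essentially the same route as the paper's: read off $P = \oplus(P_1,P_2)$ from the root step, use internality of $T \nllredc P$ to conclude $T = \oplus(T_1,T_2)$ with the fired redex inside exactly one argument, then fire the root $\oplus$-step on $T$ and close with at most one $\redc$-step. The only (harmless) difference is that you inline the justification of the shape-preservation step—an internal step has level $k > \llb{}{T} \geq 0$, hence is closed under a non-empty context, which preserves the head constructor—whereas the paper delegates this to its general facts about internal steps preserving the shape of terms.
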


\SLV{}{\begin{proof}
Let $P=\oplus (P_1,P_2)$, and consider $P \mapsto_{\oplus} P_1 = \TmTwo$ (the 
case 	$P \mapsto_{\oplus} P_2 = \TmTwo$ is similar). 
Assume $T\nllredx{\Rule} P$. 
Hence $T = \oplus(T_1,T_2)$, and 
either    
$T_1\ToRule P_1$ with $T_2 = P_2$, or $T_2\ToRule P_2$ with $T_1 = P_1$. 
Therefore,     $\oplus( T_1,T_2) \mapsto_{\oplus}   T_1
\ToRule^=  P_1 $.	
\qed
\end{proof}}

\begin{theorem}[Least-level factorization]\label{thm:ND_fact} 
	\begin{enumerate}
		\item\label{thm:ND_fact-bang} In $(\Bango, \to)$, $\F{\llred}{\nllred}$ holds for $\red \eq \redo \cup \redbb$.
		\item Least-level factorization holds in $(\PLambda^\cbn,\redo \cup \redb)$, and in $(\PLambda^\cbv, \redo \cup \redbv)$.
	\end{enumerate} 
\end{theorem}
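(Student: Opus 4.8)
The plan is to prove the bang-calculus statement (point~\ref{thm:ND_fact-bang}) directly and then obtain the \CbN and \CbV statements with no extra higher-order reasoning, purely by translation. For the bang calculus, the point is \emph{not} to redo any work about $\redbb$: since least-level factorization of $\redbb$ is already available as the black box \Cref{thm:factorize-bang}, the whole task reduces to feeding the new rule $\redo$ into the modular test \Cref{prop:test_ll}. So the first move is to check that $\red \eq \redbb \cup \redo$ has a \good least-level, which is immediate from \Cref{prop:ll-properties}, since $\oplus$ is a binary operator whose redexes have exactly the required shape $\oplus(P,Q)$. This simultaneously secures persistence, needed later to pass from factorization to completeness via \Cref{prop:abs_normalization}.

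It then remains to discharge the three hypotheses of \Cref{prop:test_ll} with $\redc \eq \redo$. Condition~(2), substitutivity of the root rule, is immediate: from $\oplus(P,Q)\mapsto_{\oplus} P$ we get $\oplus(P,Q)\subs{x}{R} = \oplus(P\subs{x}{R},Q\subs{x}{R}) \mapsto_{\oplus} P\subs{x}{R}$, and symmetrically for the right projection. Condition~(3), the root linear swap $\nllredbb \cdot \mapsto_{\oplus} \,\subseteq\, \mapsto_{\oplus}\cdot\,\redbb^*$, is precisely the $\gamma = \bbeta$ instance of the Roots \Cref{l:oplus_basic}, whose conclusion $\mapsto_{\oplus}\cdot\,\redbb^=$ is contained in $\mapsto_{\oplus}\cdot\,\redbb^*$.

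The only real content is condition~(1), the $\ll$-factorization of $\redo$, namely $\redo^*\subseteq\llredo^*\cdot\nllredo^*$, with least level computed against the full redex set $\R = \R_{\bbeta}\cup\R_{\oplus}$. I would derive it from Hindley's \Cref{l:SP} by proving strong postponement $\LP{\llredo}{\nllredo}$, i.e. $\nllredo\cdot\llredo \subseteq \llredo^*\cdot\nllredo^=$. The crucial point is that $\oplus$ is \emph{erasing but never duplicating}, so no parallel-reduction machinery is required. Given $T\nllredo T_1\llredo T_2$, internal invariance (part of \good least-level) forces the least-level step to fire a redex strictly shallower than the internal one; hence the internal redex cannot contain the least-level one, leaving only two cases: either the two redexes are disjoint, or the least-level redex contains the internal one. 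In the disjoint case they commute, and if the least-level $\oplus$-step happens to erase the internal redex then the internal step simply vanishes (this is where the $\nllredo^=$ comes from); in the nested case, firing the least-level $\oplus$-redex either retains the internal redex in the kept branch (so it is performed afterwards as an internal step) or discards it. The root interaction underlying the nested case is exactly the $\gamma = \oplus$ instance of \Cref{l:oplus_basic}. I expect this case analysis to be the main obstacle: each case is combinatorially light, but one must carefully track how levels behave under erasure so that the residual steps remain genuinely least-level or internal with respect to $\R$.

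Finally, the \CbN/\CbV statement (point~(2)) comes for free from \Cref{thm:translation}\,(1): with $\OpSet = \{\oplus\}$ one has $\LambdaOp^\cbn = \PLambda^\cbn$ and $\LambdaOp^\cbv = \PLambda^\cbv$, and since the \CbN and \CbV translations send $\oplus$-redexes to $\oplus$-redexes, least-level factorization of $\Bango$ transfers verbatim to both $\lambda$-calculi. Completeness of the least-level strategy in all three systems then follows, as in \Cref{sec:bang_strategy}, from \Cref{prop:abs_normalization} together with persistence supplied by \Cref{prop:ll-properties}.
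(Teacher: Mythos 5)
Your proposal is correct and follows essentially the same route as the paper: both proofs reduce point~(1) to the three hypotheses of \Cref{prop:test_ll} (good least-level via \Cref{prop:ll-properties}, substitutivity of $\mapsto_\oplus$, the root linear swap via \Cref{l:oplus_basic} with $\gamma=\bbeta$, and $\ll$-factorization of $\redo$ via \Cref{l:oplus_basic} with $\gamma=\oplus$ plus strong postponement and Hindley's \Cref{l:SP}), and both obtain point~(2) directly from \Cref{thm:translation}. The only difference is cosmetic: where you lift the root-level $\oplus$-swap to arbitrary contexts by a hand-made disjoint/nested case analysis, the paper invokes a general root-swap lifting lemma, and your minor imprecisions there (erasure actually occurs in the nested case, and the residual step may become least-level rather than internal when the least level rises) are harmless, since the target $\llredo^*\cdot\nllredo^=$ of strong postponement absorbs both situations.
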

\begin{proof}
	\begin{enumerate}
		\item It is enough to verify the hypotheses of \Cref{prop:test_ll}.
\SLV{}{	\begin{enumerate}
			\item \emph{Least-level factorization of $\redo$}:~ $\F{\llredo}{\nllredo}$. By   \Cref{l:oplus_basic} (with $\gamma=\oplus$) we obtain $\nllredo \cdot  \llredo   ~\subseteq ~ \llredo \cdot  \redo^= $ \SLV{}{(use \Cref{l:ll_swaps})}, \ie     $\nllredo$ strongly  postpones after $\llredo$. We conclude by \Cref{l:SP}.

			\item \emph{Substitutivity}: ~ {$\mapsto_{\oplus}$ is substitutive.}  Indeed, 
			$$\oplus (P_1,P_2)\subs x Q  = \oplus (P_1 \subs x Q, P_2\subs x Q)~
			\mapsto_{\oplus}~ P_i\subs x Q.$$
			
			\item \emph{Root linear swap}: 
			{$\nllredbb \!\cdot\! \mapsto_{\oplus}\ \subseteq\ \mapsto_{\oplus} \!\cdot\!  \red_{\beta} ^=$.} This is  \Cref{l:oplus_basic} (with $\gamma=\bbeta$).
		\end{enumerate}
	}
		\item It follows from \Cref{thm:translation} and \Cref{thm:ND_fact}.\ref{thm:ND_fact-bang}.
		\qed
	\end{enumerate}

\end{proof}


\SLV{}{
Completeness is the best that can be achieved in these calculi, because of the  true non-determinism of $\redo$ and hence of  least-level reduction and of any other complete strategy for $\to$. 
For instance, in $\PLambda^\cbn$ there is no normalizing strategy for $\oplus(\var, \delta\delta)$ in the sense of \Cref{def:strategy}, since $x  \xrevredx{\ll}{\oplus}\oplus(\var, \delta\delta) \llredx{\oplus} \delta\delta \llredx{\beta}   \dots$\,.


}

\vspace*{-6pt}
\section{Conclusions and Related Work}
The combination of translations (\Cref{thm:translation}), $\ll$-factorization for $\redbb$ (\Cref{thm:factorize-bang}), and modularity (\Cref{prop:test_ll}),
give us a powerful method to analyze  factorization in various $\lam$-calculi that \emph{extend} the pure \CbN and \CbV calculi.
The main novelty is transferring the results from a calculus to another via translations.

We chose to study least-level reduction as a normalizing strategy 
because it is natural to define in the bang calculus, and it is easier to transfer via translations to \CbN and \CbV calculi than leftmost-outermost. 
Since leftmost-outermost is the most common normalizing strategy in \CbN, it is worth 
noticing that  least-level normalization implies leftmost-outermost normalization (and vice-versa).
This is an easy consequence of the---easy to check---fact that 
their union is quasi-diamond (and hence, again, uniformly normalizing). A proof of least-level normalization lends a proof also of leftmost-outermost normalization.

\paragraph{Related Work.}
Many calculi inspired by linear logic subsumes \CbN and \CbV, such as \cite{BentonBPH93,BentonWadler96,RonchiRoversi97,MaraistOderskyTurnerWadler99} (other than the ones already cited).
We chose the bang calculus for its simplicity, which eases the analysis of the \CbN and \CbV translations.

Least-level reduction is studied for linear-logic-based calculi in \cite{Terui07,Accattoli12} and for linear logic proof-nets in \cite{CarvalhoPF11,PaganiTranquilli17}.
Least-level factorization and normalization for the pure \CbN $\lambda$-calculus is studied in \cite{AccattoliFaggianGuerrieri19}.


\bibliographystyle{splncs04}
\bibliography{biblio}

\SLV{}{
\newpage
\appendix
\section*{Technical Appendix}

\section{\Cref{sect:lambda-calculi}: General properties of the contextual closure }

	We start by recalling a basic but key  property of contextual closure. If a step $\redc$ is obtained by closure under \emph{non-empty context} of a rule $\rredc$, then it preserve the shape of the term.
	{We say that $T$ and $T'$ have \emph{the same shape} if  both terms are an 
		application (resp. an abstraction, an atom, a term of shape $!P$ or $\op {(P_1, \dots,P_k)}$ ).}
	\begin{fact}[Shape preservation]\label{fact:shape} 
		Assume $T=\cc\hole{R}\red \cc\hole {R'}=T'$ and that the  context   $\cc$ is \emph{non-empty}. Then $T$ and $T'$ have the same shape.
	\end{fact}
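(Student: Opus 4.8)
The plan is to prove the statement by a direct case analysis on the top-level production of the non-empty context $\cc$, exploiting the observation that the outermost constructor (the \emph{shape}) of $\cc\hole{s}$ is fixed by $\cc$ alone and is completely independent of the term $s$ plugged into the hole.

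First I would note that, since $\cc$ is non-empty, it cannot be the bare hole $\hole{\cdot}$; hence by the context grammar it is generated by exactly one of the remaining productions, that is $\cc = \lambda x.\cc'$, or $\cc = T_0\,\cc'$, or $\cc = \cc'\,T_0$, or $\cc = \oc\,\cc'$, or $\cc = \op(T_1,\dots,\cc',\dots,T_k)$, for some context $\cc'$ and terms $T_0,T_1,\dots,T_k$. The key point is that in each of these five cases the head symbol of $\cc\hole{s}$ is dictated by the production: $\cc\hole{s}$ is an abstraction in the first case, an application in the second and third, a box (a term of shape $\oc P$) in the fourth, and a term of shape $\op(\dots)$ in the fifth---and this holds whatever the subterm $s$ is. In particular no non-empty context can produce a variable, so the atom case never occurs here.

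Then I would simply instantiate $s$ with the redex $R$ and with the contractum $R'$: since $T = \cc\hole{R}$ and $T' = \cc\hole{R'}$ are obtained by filling the hole of the \emph{same} context $\cc$, they carry the same head symbol, and therefore have the same shape, which is exactly the claim.

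I do not expect any genuine obstacle: once the above observation is isolated, the result is immediate by inspection of the grammar. The only subtlety worth recording is that non-emptiness of $\cc$ is essential. For the empty context one has $T = R$ and $T' = R'$ related by a root step $R \mapsto R'$, and a root step can change the shape---for instance a $\beta$-like redex $(\lambda x.P)\,\oc Q$ is an application while its contractum $P\Sub{Q}{x}$ generally is not---so the conclusion truly relies on there being at least one constructor sitting above the hole.
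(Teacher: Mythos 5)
Your proposal is correct, and it coincides with what the paper intends: the paper states this as a ``basic but key property'' of contextual closure and omits the proof entirely, treating it as immediate by inspection of the grammar. Your case analysis on the outermost production of the non-empty context $\cc$ (which fixes the head constructor of $\cc\hole{s}$ independently of $s$, so that no non-empty context yields a variable), together with the remark that non-emptiness is essential because a root step such as $(\La{\var}{P})\,\Bang{Q} \Root{\Tot} P\Sub{Q}{\var}$ can change the shape, is exactly the argument the paper leaves implicit.
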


\subsection{Surface Reduction.}Reduction at level $0$ has a key role --  we call reduction which only  fires  a redex  at level $0$ 
	\emph{surface reduction} and write $\sred$. %
Surface reduction	(already defined by Simpson in \cite{Simpson05}); it only 
	 fires redexes that are not inside  a box, or inside the scope of an operator. 
	It can be equivalently defined as the closure of  a rule $\Root{}$ under contexts  $\WCtx$ defined by the grammar $\WCtx  	\Coloneqq \, \Hole{\cdot} \mid \La{\var}{\WCtx} \mid \WCtx\Tm \mid \Tm\WCtx  $.
	We write 
	$\Tm \sred\TmTwo$ for a step $\red$ which is surface; 
	otherwise, we write $\Tm \nsred \TmTwo$. 
	Clearly	
\begin{fact}
 $ \sred \subset \llred$ and $\nllred \subset \nsred$, because a reduction at level $0$ is surely least-level.
\end{fact}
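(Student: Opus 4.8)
The plan is to unfold the definitions of the four reductions and reduce the whole statement to the single arithmetical observation that level~$0$ is the least possible level. First I would recall two facts from \Cref{def:depth-bang}: (i)~every reduction step $\Tm \red \TmTwo$ fires a redex occurring at some level $k$, and since $\llb{}{\Tm}$ is by definition the infimum of the levels of \emph{all} redex occurrences in $\Tm$, firing one of them yields $k \geq \llb{}{\Tm}$; (ii)~the level of a context is a natural number, i.e. $\lev{\Ctx} \in \Nat$ for every $\Ctx$, so whenever $\Tm$ is not $\red$-normal we have $\llb{}{\Tm} \in \Nat$ and in particular $\llb{}{\Tm} \geq 0$. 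Together these say precisely that $\{\llred,\nllred\}$ partitions $\red$ according to $k = \llb{}{\Tm}$ versus $k > \llb{}{\Tm}$, while $\{\sred,\nsred\}$ partitions it according to $k = 0$ versus $k > 0$.

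For $\sred \subseteq \llred$ the argument is immediate: if $\Tm \sred \TmTwo$ then the fired redex is at level $k = 0$, so $\Tm$ has a redex occurrence at level $0$, whence $\llb{}{\Tm} \leq 0$; combined with $\llb{}{\Tm} \geq 0$ this forces $\llb{}{\Tm} = 0 = k$, which is exactly the condition for the step to be least-level, i.e. $\Tm \llred \TmTwo$. The second inclusion $\nllred \subseteq \nsred$ is then the contrapositive of the first relative to the two partitions above: a step is internal exactly when it is not least-level, and I have just shown that every surface step is least-level, so every internal step is non-surface. Directly, if $\Tm \nllred \TmTwo$ fires a redex at level $k > \llb{}{\Tm} \geq 0$, then $k > 0$ and the step is non-surface.

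To justify that both inclusions are in fact \emph{strict}, a single witness suffices, for instance the term $T_1 := x\,\Bang{R}$ of \Cref{ex:ll}, where $R$ is a redex and $\llb{}{T_1} = 1$: its only step fires $R$ at level $1 = \llb{}{T_1}$, so it lies in $\llred$ but not in $\sred$ (being at level $1 > 0$), and symmetrically it lies in $\nsred$ but not in $\nllred$. The main obstacle here is essentially nonexistent — the entire content is carried by $\llb{}{\Tm} \geq 0$ together with $k \geq \llb{}{\Tm}$. The only point deserving a little care is the bookkeeping that makes the second inclusion the exact dual of the first, namely verifying that least-level/internal and surface/non-surface are genuine partitions of $\red$, which again hinges on the inequality $k \geq \llb{}{\Tm}$ recalled above.
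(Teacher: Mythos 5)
Your proof is correct and takes essentially the same approach as the paper: the paper states this fact as immediate, its entire justification being precisely your key observation that a step at level $0$ must be least-level, since $0 \leq \llb{}{\Tm} \leq k = 0$ forces $k = \llb{}{\Tm}$, with the second inclusion the dual of the first. Your explicit witness $x\,\Bang{R}$ for strictness of both inclusions is a small bonus that the paper leaves implicit.
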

Note that a root  steps   $\Root{}$ is both a  \emph{least level} and a \emph{surface} step.

\subsection{Shape preservation for internal steps}
Let $\red$ be the contextual closure of some  rules. 
  \Cref{fact:shape} implies  that $\nllred$ and $ \nsred$  steps always preserve the shape of terms.

\begin{fact}[Internal Steps]\label{fact:isteps}
	By  \Cref{fact:shape}, $\nsred$  and $\nllred$ preserve the shapes of terms. Moreover, the following hold for $\nsred$, and hence for $\nllred$ (recall that $\nllred \subset \nsred$):
	\begin{enumerate}
		\item There is no $T$ such that $T \nsred x$, for any variable $x$;
		\item $T \nsred  \op(U_1,...,U_k)$ implies $T = \op(T_1,...,T_k)$, and there exists $1\leq j \leq k$ such that  $T_j\red U_j$ ($T_i=U_i$ for $i\not=j$).

		\item   $T\nsred !  U_1$ implies $T = ! T_1$ and $T_1\red U_1$. 
		
		\item $T\nsred \lam x. U_1$ implies $T = \lam x. T_1$ and $T_1\nsred U_1$
		\item $T\nsred U_1U_2$ implies  $T = T_1T_2$, with either (i) $T_1\nsred U_1$ (and $T_2=U_2$), 
		or (ii) $T_2\nsred U_2$ (and $T_1=U_1$).  Moreover, $T_1$ and $U_1$ have the same shape, and so   $T_2$ and $U_2$.
		
	\end{enumerate}
\end{fact}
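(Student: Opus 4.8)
The plan is to derive every item from one structural observation: a $\nsred$-step (and, since $\nllred \subset \nsred$, a $\nllred$-step) is by definition the closure of the rule under a context $\Ctx$ with $\lev{\Ctx} > 0$, hence under a \emph{non-empty} context, because the empty context $\HoleCtx$ has level $0$ by \eqref{eq:level-bang}. Indeed a $\nllred$-step fires a redex at level $k > \llBang{T}$, and since such a step exists $T$ is not normal, so $\llBang{T} \in \Nat$ and thus $k \geq 1 > 0$; this already gives $\nllred \subset \nsred$, so it suffices to prove the five items for $\nsred$. Applying \Cref{fact:shape} to the non-empty context $\Ctx$ of a step $T = \Ctxp{R} \nsred \Ctxp{R'} = U$ immediately yields the preamble, namely that $T$ and $U$ have the same shape.

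Each of the five items then follows by inspecting the outermost constructor of the non-empty context $\Ctx$, which is pinned down by the shape of the target $U$. For item~1, no non-empty context can produce a variable, since every clause of the context grammar other than $\HoleCtx$ has an abstraction, an application, an $\oc$ or an $\op$ at the root; hence $\Ctxp{R'} = x$ is impossible. For items~2--5 I would peel exactly one layer. If $U = \op(U_1,\dots,U_k)$ then necessarily $\Ctx = \op(U_1,\dots,\CtxTwo,\dots,U_k)$ with $\CtxTwop{R'} = U_j$, whence $T = \op(T_1,\dots,T_k)$ with $T_j = \CtxTwop{R}$, $T_i = U_i$ for $i \neq j$, and the inner step $T_j = \CtxTwop{R} \red \CtxTwop{R'} = U_j$ is a plain $\red$-step. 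The remaining cases $U = \Bang{U_1}$ (forcing $\Ctx = \Bang{\CtxTwo}$, so $T = \Bang{T_1}$), $U = \La{\var}{U_1}$ (forcing $\Ctx = \La{\var}{\CtxTwo}$, so $T = \La{\var}{T_1}$), and $U = U_1 U_2$ (forcing $\Ctx = \CtxTwo\,U_2$ or $\Ctx = U_1\,\CtxTwo$, so $T = T_1 T_2$) are handled identically.

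The only genuinely delicate point---and the one clause-by-clause discrepancy to get right---is \emph{which} reduction relation the residual inner step satisfies, and this is entirely governed by the level bookkeeping of \eqref{eq:level-bang}. Under an abstraction or an application the level is unchanged, $\lev{\La{\var}{\CtxTwo}} = \lev{\CtxTwo}$ and $\lev{\CtxTwo\,U_2} = \lev{U_1\,\CtxTwo} = \lev{\CtxTwo}$, so from $\lev{\Ctx} > 0$ we get $\lev{\CtxTwo} > 0$ and the inner step is again internal, giving $T_1 \nsred U_1$ (item~4) and, in item~5, $T_1 \nsred U_1$ in subcase~(i) or $T_2 \nsred U_2$ in subcase~(ii). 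By contrast, under $\oc$ or $\op$ the level drops by one, $\lev{\Bang{\CtxTwo}} = \lev{\CtxTwo} + 1$ and $\lev{\op(\dots,\CtxTwo,\dots)} = \lev{\CtxTwo} + 1$, so we can only conclude $\lev{\CtxTwo} \geq 0$ and hence merely a plain $\red$-step (items~2 and~3). Finally, the ``same shape'' clauses of item~5 follow by applying \Cref{fact:shape} once more to the inner step, whose context $\CtxTwo$ is non-empty because $\lev{\CtxTwo} > 0$, while the unchanged component ($T_2 = U_2$ or $T_1 = U_1$) has the same shape trivially. I expect no real obstacle beyond keeping this level accounting straight across the five clauses.
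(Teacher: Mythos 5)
Your proposal is correct and takes essentially the same route as the paper, which states this Fact without a separate proof precisely because it is the argument you give: a non-surface step fires a redex under a context of level $\geq 1$, hence a non-empty one, so \Cref{fact:shape} gives shape preservation, and the five items follow by peeling the outermost constructor of that context. Your explicit level bookkeeping (only a plain $\red$-step survives under $\oc$ or $\op$ since the level drops by one, while $\nsred$ is preserved under abstraction and application) is exactly the point the paper leaves implicit, and you get it right.
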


%
%
%
%

\begin{corollary}\label{cor:bo_redex}Assume $T\nsred S$
\begin{itemize}
	\item $T$ is a $\bbeta$-redex iff $S$ is.
	\item $T$ is an $\op$-redex iff $S$ is.
\end{itemize}
\end{corollary}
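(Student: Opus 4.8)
The plan is to read the corollary off the structural analysis of internal steps recorded in \Cref{fact:isteps}, with essentially no computation involved. The guiding observation is that being a $\bbeta$-redex or an $\op$-redex is a \emph{shape property}: it is determined by the outermost constructor of the term together with the outermost constructors of its immediate subterms, and these are exactly the data that an internal step $\nsred$ leaves unchanged. Concretely, a $\bbeta$-redex is exactly a term of the form $(\La{\var}{T_1})\,\Bang{T_2}$, i.e.\ an application whose left immediate subterm is an abstraction and whose right immediate subterm is a box; an $\op$-redex is exactly a term of the form $\op(T_1,\dots,T_k)$.

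For the $\bbeta$-bullet I would establish both implications simultaneously through point~5 of \Cref{fact:isteps}. Assuming $T \nsred S$ with one of $T,S$ a $\bbeta$-redex, that term is an application, so by shape preservation the other is an application too; writing $T = T_1T_2$ and $S = S_1S_2$, point~5 gives that $T_1,S_1$ have the same shape and $T_2,S_2$ have the same shape. Being a $\bbeta$-redex means precisely having an abstraction on the left and a box on the right, and these are shape features of the two immediate subterms, which $T$ and $S$ share; hence the redex status passes from whichever of $T$ and $S$ enjoys it to the other.

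For the $\op$-bullet, the direction ``$S$ is an $\op$-redex $\Rightarrow$ $T$ is'' is immediate from point~2 of \Cref{fact:isteps}, which states that $T \nsred \op(U_1,\dots,U_k)$ forces $T = \op(T_1,\dots,T_k)$ with the same operator and arity. For the converse I would use that $\nsred$ never contracts a redex at the root, a root step being at level $0$ and therefore surface, hence excluded from $\nsred$: so if $T = \op(T_1,\dots,T_k)$ and $T \nsred S$, the contracted redex lies strictly inside one argument $T_i$, whence $S = \op(T_1,\dots,T_i',\dots,T_k)$ is again an $\op$-redex.

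There is no genuine obstacle here; the whole statement is a bookkeeping consequence of \Cref{fact:isteps}. The only two points needing attention are (i) isolating, for each redex class, exactly which subterm shapes characterize it, and (ii) recording that the root of $T$ is out of reach for $\nsred$, so that an internal step can only act inside a proper subterm and thus cannot destroy the outer redex pattern. Both are already made explicit in \Cref{fact:isteps}, which is why the proof reduces to a direct appeal to its points~2 and~5.
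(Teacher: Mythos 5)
Your proof is correct and follows exactly the route the paper intends: the corollary is stated there without explicit proof precisely because it is read off \Cref{fact:isteps} (shape preservation of $\nsred$, its points~2 and~5) together with the observation that an internal step never fires at the root, which is just what you spell out.
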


\begin{example}If $T\red S$ and $S$ is a $\bbeta$-redex, $T$ does not need to be a $\bbeta$-redex. Consider $((\lam x.x)!(\lam x. P))!z \redbb (\lam x. P)  !z$.
	In the following example, $T$ contains no $\bbeta$-redex:  
$\op(\lam x. P, z)!z  \redx{\op} (\lam x. P)  !z$.
\end{example}

\section{Omitted lemmas and proofs of \Cref{subsect:translations}}
\label{sect:embedding-proofs}

Note that, for any $\LTm \in \Lambda$, $\Cbv{\LTm} \in \BoxSet$ if and only if $\LTm$ is a value (\Ie an abstraction or a variable).
As a consequence, the definition of $\Cbv{(\cdot)}$ in the application case $\LTm \LTmTwo$ depends on whether $\LTm$ is a value or not.

\begin{remark}[\CbV translation is $\Derel$-normal]
	\label{rmk:cbv-val-normal}
	It is immediate to prove 
	that $\Cbv{\LTm}$ is $\Derel$-normal (see \eqref{eq:der} for the 
	definition of $\ToBang$), and that if $\Cbv{\LTm} \ToVal \TmThree \ToDer \TmTwo$ then there is \emph{only} one $\Derel$-redex in $\TmThree$ (absent in $\Cbv{\LTm}$ and in $\TmTwo$) and it has been created by 
	the step $\Cbv{\LTm} \ToVal \TmThree$.
\end{remark}

\begin{lemma}[Substitution] 
	\label{lemma:substitution}
	Let 
	$\LTm, \LTmTwo$ be $\lambda$-terms and $\var$ be a variable.
	\begin{enumerate}
		\item\label{p:substitution-cbn}\emph{For \CbN translation:} One has that $\Cbn{\LTm} \Sub{\Cbn{\LTmTwo}\!}{\var} = \Cbn{(\LTm\Sub{\LTmTwo}{\var})}$.
		\item\label{p:substitution-cbv}\emph{For \CbV translation:} If $\LTmTwo$ is such that $\Cbv{\LTmTwo} = \Bang{\TmTwo}$ for some $\TmTwo \in \BangSet$, then $\Cbv{\LTm} \Sub{\TmTwo}{\var} = \Cbv{(\LTm\Sub{\LTmTwo}{\var})}$.
	\end{enumerate}
\end{lemma}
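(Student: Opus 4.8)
The plan is to prove both items by structural induction on $\LTm$, keeping $\LTmTwo$ and $\var$ fixed; in each case I would unfold the relevant clause of the translation, push the substitution inward, and close with the induction hypothesis. Throughout I rely on the usual variable convention, so that every bound variable of $\LTm$ differs from $\var$ and does not occur free in $\LTmTwo$ (hence, since neither translation introduces free variables, not in $\Cbn{\LTmTwo}$ resp. in $\TmTwo$). This lets the substitution commute with abstraction, with the box operator, and with the $\op$-constructors without capture.

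For item~\ref{p:substitution-cbn} the induction is routine, because $\Cbn{(\cdot)}$ is an almost homomorphic decoration. In the variable case I split on $\LTm = \var$ versus $\LTm = \varThree \neq \var$ and read off both sides directly. In the abstraction and operator cases the translation commutes with the constructor, so the claim follows by applying the induction hypothesis componentwise. The only case worth spelling out is the application $\LTm = \LTmThree\LTmFour$, translated to $\Cbn{\LTmThree}\,\Bang{\Cbn{\LTmFour}}$: here the substitution distributes over the application and enters the box, and two uses of the induction hypothesis (on $\LTmThree$ and on $\LTmFour$) yield $\Cbn{(\LTmThree\Sub{\LTmTwo}{\var})}\,\Bang{\Cbn{(\LTmFour\Sub{\LTmTwo}{\var})}}$, which is exactly $\Cbn{((\LTmThree\LTmFour)\Sub{\LTmTwo}{\var})}$. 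Note that this item needs no hypothesis on $\LTmTwo$.

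For item~\ref{p:substitution-cbv} the variable, abstraction and operator cases are again immediate; in particular the bound-variable subcase $\LTm = \var$ uses $\Cbv{\var}\Sub{\TmTwo}{\var} = \Bang{\TmTwo} = \Cbv{\LTmTwo}$, which is exactly where the hypothesis $\Cbv{\LTmTwo} = \Bang{\TmTwo}$ is consumed. The delicate case is the application $\LTm = \LTmThree\LTmFour$, since the clause for $\Cbv{(\LTmThree\LTmFour)}$ branches on whether $\Cbv{\LTmThree}$ is a box. The key auxiliary fact I would isolate first is that, as $\LTmTwo$ is a value (equivalently $\Cbv{\LTmTwo} \in \BoxSet$, using the observation that $\Cbv{\LTmThree} \in \BoxSet$ iff $\LTmThree$ is a value), the predicate ``being a value'' is preserved by $\Sub{\LTmTwo}{\var}$: variables map to $\LTmTwo$ or to themselves, abstractions stay abstractions, and applications and operator terms stay non-values. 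Consequently $\Cbv{\LTmThree}$ is a box iff $\Cbv{(\LTmThree\Sub{\LTmTwo}{\var})}$ is, so the \emph{same} branch is selected on both sides, and the two sides then coincide after applying the induction hypothesis to $\LTmThree$ and $\LTmFour$; in the box branch, where $\Cbv{\LTmThree} = \Bang{\Tm}$, the induction hypothesis $\Bang{\Tm}\Sub{\TmTwo}{\var} = \Cbv{(\LTmThree\Sub{\LTmTwo}{\var})}$ additionally pins down the content of the box as $\Tm\Sub{\TmTwo}{\var}$.

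The main obstacle is precisely this synchronization of branches in the \CbV application case: without the value-preservation fact one could fear a mismatch in which $\Cbv{\LTmThree}$ is a box while $\Cbv{(\LTmThree\Sub{\LTmTwo}{\var})}$ is not, forcing the two sides of the equation to use different clauses of the translation. The hypothesis that $\LTmTwo$ is a value (encoded as $\Cbv{\LTmTwo} = \Bang{\TmTwo}$) is exactly what excludes this, and it is also what makes the substituted term land \emph{inside} a box in the bound-variable case; I expect everything else to be bookkeeping.
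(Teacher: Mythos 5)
Your proposal is correct and follows essentially the same route as the paper's proof: structural induction on $\LTm$ under the variable convention, with all cases routine except the \CbV application case, where the two branches of the translation must be kept in sync. The only difference is presentational: the paper obtains the synchronization implicitly (when $\LTmThree$ is a value, the induction hypothesis already exhibits $\Cbv{(\LTmThree\Sub{\LTmTwo}{\var})}$ as a box, and when it is not, substitution trivially preserves non-value shape), whereas you isolate this as an explicit value-preservation fact.
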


\begin{proof}   The proofs of both points are by induction on 
	$\LTm \in \Lambda$.
	\begin{enumerate}
		\item 
		\begin{itemize}
			\item \emph{Variable:} If $\LTm$ is a variable then there are two subcases.
			If $\LTm \Defeq \var$ then 
			$\Cbn{\LTm} = {\var}$, so $\Cbn{\LTm} \Sub{\Cbn{\LTmTwo}}{\var} = \Cbn{\LTmTwo} = \Cbn{(\LTm\Sub{\LTmTwo}{\var})}$.
			Otherwise $\LTm \Defeq \varTwo \neq \var$ and then $\Cbn{\LTm} = \varTwo$, hence $\Cbn{\LTm} \Sub{\Cbn{\LTmTwo}}{\var} = \varTwo = \Cbn{(\LTm\Sub{\LTmTwo}{\var})}$.
			
			\item \emph{Abstraction:} If $\LTm \Defeq \La{\varTwo}{\LTmThree}$ then $\Cbn{\LTm} = \La{\varTwo}{\Cbn{\LTmThree}}$ 
			(we 
			suppose without loss of generality 
			$\varTwo \notin \Fv{\LTmTwo} \cup \{\var\}$). 
			By 
			\Ih, $\Cbn{\LTmThree} \Sub{\Cbn{\LTmTwo}\!}{\var} = \Cbn{(\LTmThree \Sub{\LTmTwo}{\var})}$ and so $\Cbn{\LTm} \Sub{\Cbn{\LTmTwo}\!}{\var} = \La{\varTwo}{(\Cbn{\LTmThree} \Sub{\Cbn{\LTmTwo}\!}{\var})} = \La{\varTwo}{\Cbn{(\LTmThree \Sub{\LTmTwo}{\var})}} = \Cbn{(\LTm\Sub{\LTmTwo}{\var})}$.
			
			\item \emph{Application:} If $\LTm \Defeq \LTmThree\LTmFour$ then $\Cbn{\LTm} = \App{\Cbn{\LTmThree}}{\Bang{\Cbn{\LTmFour}}}$.
			By 
			\Ih, $\Cbn{\LTmThree} \Sub{\Cbn{\LTmTwo}}{\var} = \Cbn{(\LTmThree \Sub{\LTmTwo}{\var})}$ and $\Cbn{\LTmFour} \Sub{\Cbn{\LTmTwo}}{\var} \allowbreak= \Cbn{(\LTmFour \Sub{\LTmTwo}{\var})}$.
			Hence, 
			$
			\Cbn{\LTm} \Sub{\Cbn{\LTmTwo}}{\var} = \Cbn{\LTmThree} \Sub{\Cbn{\LTmTwo}}{\var}\Bang{(\Cbn{\LTmFour} \Sub{\Cbn{\LTmTwo}}{\var})} =
			\allowbreak \Cbn{(\LTmThree\Sub{\LTmTwo}{\var})}\Bang{({\Cbn{(\LTmFour\Sub{\LTmTwo}{\var})}} )} \allowbreak= \Cbn{(\LTm\Sub{\LTmTwo}{\var})}$.
			
			\item \emph{Operator:} If $\LTm \Defeq \op(\LTmThree_1, \dots, \LTmThree_n)$ then $\Cbn{\LTm} = \op(\Cbn{\LTmThree_1}, \dots, \Cbn{\LTmThree_n})$. 
			By 
			\Ih, $\Cbn{\LTmThree_i} \Sub{\Cbn{\LTmTwo}\!}{\var} = \Cbn{(\LTmThree_i \Sub{\LTmTwo}{\var})}$ for all $1 \leq i \leq n$, and so $\Cbn{\LTm} \Sub{\Cbn{\LTmTwo}\!}{\var} = \op(\Cbn{\LTmThree_1} \Sub{\Cbn{\LTmTwo}\!}{\var}, \dots, \Cbn{\LTmThree_n} \Sub{\Cbn{\LTmTwo}\!}{\var}) = \op(\Cbn{(\LTmThree_1 \Sub{\LTmTwo}{\var})}, \dots, \Cbn{(\LTmThree_n \Sub{\LTmTwo}{\var})}) = \Cbn{(\LTm\Sub{\LTmTwo}{\var})}$.
			
		\end{itemize}
		
		\smallskip
		\item 
		\begin{itemize}
			\item \emph{Variable:} If $\LTm$ is a variable then there are two subcases.
			If $\LTm \Defeq \var$ then 
			$\Cbv{\LTm} = \Bang{\var}$, so $\Cbv{\LTm} \Sub{\TmTwo}{\var} = \Bang{\TmTwo} = \Cbv{\LTmTwo} = \Cbv{(\LTm\Sub{\LTmTwo}{\var})}$.
			Otherwise $\LTm \Defeq \varTwo \neq \var$ and then $\Cbv{\LTm} = \Bang{\varTwo}$, hence $\Cbv{\LTm} \Sub{\TmTwo}{\var} = \Bang{\varTwo} = \Cbv{(\LTm\Sub{\LTmTwo}{\var})}$.
			
			\item \emph{Application:} If $\LTm \Defeq \LTmThree\LTmFour$ then there are two sub-cases:
			\begin{itemize}
				\item $\Cbv{\LTmThree} = \Bang{\TmThree}$ (\Ie $\LTmThree$ is a value) and then $\Cbv{\LTm} = \TmThree \Cbv{\LTmFour}$.
				By \Ih, $\Bangp{\TmThree \Sub{\TmTwo}{\var}} = \Cbv{\LTmThree} \Sub{\TmTwo}{\var} = \Cbv{(\LTmThree \Sub{\LTmTwo}{\var})}$ and $\Cbv{\LTmFour} \Sub{\TmTwo}{\var} \allowbreak= \Cbv{(\LTmFour \Sub{\LTmTwo}{\var})}$.
				Therefore, $\Cbv{\LTm} \Sub{\TmTwo}{\var} \allowbreak=  \TmThree \Sub{\TmTwo}{\var} \, \Cbv{\LTmFour} \Sub{\TmTwo}{\var} =\allowbreak \TmThree\Sub{\TmTwo}{\var}\allowbreak{\Cbv{(\LTmFour\Sub{\LTmTwo}{\var})}} \allowbreak= 
				\Cbv{(\LTmThree\Sub{\LTmTwo}{\var} \,\LTmFour\Sub{\LTmTwo}{\var})} \allowbreak=
				\Cbv{(\LTm\Sub{\LTmTwo}{\var})}$.
				
				\item $\Cbv{\LTmThree} \neq \Bang{\TmThree}$ for any $\TmThree \in \BangSet$ (\Ie $\LTmThree$ is not a value) and then $\Cbv{\LTm} = \App{\Der{\Cbv{\LTmThree}}}{\Cbv{\LTmFour}}$\!.
				By 
				\Ih, $\Cbv{\LTmThree} \Sub{\TmTwo}{\var} = \Cbv{(\LTmThree \Sub{\LTmTwo}{\var})}$ and $\Cbv{\LTmFour} \Sub{\TmTwo}{\var} \allowbreak= \Cbv{(\LTmFour \Sub{\LTmTwo}{\var})}$.
				So, $\Cbv{\LTm} \Sub{\TmTwo}{\var} \allowbreak= \App{ \Derp{\Cbv{\LTmThree} \Sub{\TmTwo}{\var}}}{\Cbv{\LTmFour} \Sub{\TmTwo}{\var}} =\allowbreak \App{ \Derp{\Cbv{\LTmThree\Sub{\LTmTwo}{\var}}}}\allowbreak{\Cbv{(\LTmFour\Sub{\LTmTwo}{\var})}} \allowbreak= \Cbv{(\LTm\Sub{\LTmTwo}{\var})}$.
			\end{itemize}
									
			\item \emph{Abstraction:} If $\LTm \Defeq \La{\!\varTwo}{\LTmThree}$ then $\Cbv{\LTm} \!=\! \Bang{(\La{\!\varTwo}{\Cbv{\LTmThree}})}$ 
			(suppose without loss of generality 
			$\varTwo \notin \Fv{\LTmTwo} \cup \{\var\}$). 
			By 
			\Ih\ $\Cbv{\LTmThree}\! \Sub{\TmTwo}{\var} = \Cbv{(\LTmThree \Sub{\LTmTwo}{\var})}$\!, 
			so $\Cbv{\LTm} \!\Sub{\TmTwo}{\var} = \Bangp{\La{\!\varTwo}{(\Cbv{\LTmThree} \!\Sub{\TmTwo}{\var})}} = \Bangp{\La{\!\varTwo}{\Cbv{(\LTmThree \Sub{\LTmTwo}{\var})}}} = \Cbv{(\LTm\Sub{\LTmTwo}{\var})}$\!.
			
			\item \emph{Operator:} If $\LTm \Defeq \op(\LTmThree_1, \dots, \LTmThree_n)$ then $\Cbv{\LTm} = \op(\Cbv{\LTmThree_1}, \dots, \Cbv{\LTmThree_n})$.
			By 
			\Ih, $\Cbv{\LTmThree_i} \Sub{\TmTwo}{\var} = \Cbv{(\LTmThree_i \Sub{\LTmTwo}{\var})}$ for all $1 \leq i \leq n$.
			So, $\Cbv{\LTm} \Sub{\TmTwo}{\var} \allowbreak= \op(\Cbv{\LTmThree_1} \Sub{\TmTwo}{\var}, \dots, \Cbv{\LTmThree_n} \Sub{\TmTwo}{\var}) =\allowbreak \op(\Cbv{(\LTmThree_1\Sub{\LTmTwo}{\var})}, \dots, \allowbreak\Cbv{(\LTmThree_n\Sub{\LTmTwo}{\var})}) \allowbreak= \Cbv{(\LTm\Sub{\LTmTwo}{\var})}$.
			\qed
		\end{itemize}
	\end{enumerate}
\end{proof}

Note that the hypothesis about $\LTmTwo$ in \Cref{lemma:substitution}.\ref{p:substitution-cbv} is fulfilled 
if and only if $\LTmTwo$ is a value.

\embedding*

\begin{proof}
	\begin{enumerate}
		\item \emph{Soundness:} We prove by induction on the term $\LTm \in \Lambda$ that if $\LTm \ToBeta \LTm'$ then $\Cbn{\LTm} \ToVal \Cbn{{\LTm'}}$. 
		According to the definition of $\LTm \ToBeta \LTm'$, there are the following cases:
		\begin{itemize}
			\item \emph{Root-step}, \Ie $\LTm \Defeq (\La{\var}{\LTmThree})\LTmTwo \Root{\beta} \LTmThree\Sub{\LTmTwo}{\var} \Eqdef {\LTm'}$: by \Cref{lemma:substitution}.\ref{p:substitution-cbn}, 
			$\Cbn{\LTm} = \App{\La{\var}{\Cbn{\LTmThree}}}{\Bang{\Cbn{\LTmTwo}}} \allowbreak\Root{\ValScript} \Cbn{\LTmThree} \Sub{\Cbn{\LTmTwo}\!}{\var} = \Cbn{{\LTm'}}$.
			
			\item  \emph{Abstraction}, \Ie $\LTm \Defeq \La{\var}{\LTmThree} \ToBeta \La{\var}{\LTmThree'} \Eqdef {\LTm'}$ with $\LTmThree \ToBeta \LTmThree'$: by 
			\Ih, $\Cbn{\LTmThree} \ToVal \Cbn{{\LTmThree'}}$, thus $\Cbn{\LTm} = \La{\var}{\Cbn{\LTmThree}} \ToVal \La{\var}{\Cbn{{\LTmThree'}}} = \Cbn{{\LTm'}}$.
			
			\item \emph{Application left}, \Ie $\LTm \Defeq \LTmThree\LTmTwo \ToBeta \LTmThree'\LTmTwo \Eqdef {\LTm'}$ with $\LTmThree \ToBeta \LTmThree'$:
			analogous to the previous case.
			
			\item \emph{Application right}, \Ie $\LTm \Defeq \LTmTwo\LTmThree \ToBeta \LTmTwo\LTmThree' \!\Eqdef \LTm'$ with $\LTmThree \ToBeta\! \LTmThree'$: 
			by 
			\Ih $\Cbn{\LTmThree} \ToVal \Cbn{{\LTmThree'}}$\!, so $\Cbn{\LTm} = \App{\Cbn{\LTmTwo}}{\Bang{{\Cbn{\LTmThree}}}} \!\ToVal \App {\Cbn{\LTmTwo}} {\Bang{{\Cbn{{\LTmThree'}}}}} \!= \Cbn{{\LTm'}}$\!.

			\item \emph{Operator}, \Ie $\LTm \Defeq \op(\dots, \LTmTwo, \dots) \ToBeta \op(\dots, \LTmTwo', \dots) \Eqdef \LTm'$ with $\LTmTwo \ToBeta \LTmTwo'$: analogous to the abstraction case.
		\end{itemize}
	
		The proof that if $\LTm \ToOp \LTm'$ then $\Cbn{\LTm} \ToOp \Cbn{{\LTm'}}$ is analogous, the only difference is in the root-step.
		\begin{itemize}
			\item \emph{Root-step}, \Ie $\LTm \Defeq \op(\LTmTwo_1, \dots, \LTmTwo_n) \Root{\op} \LTmThree \Eqdef {\LTm'}$: then, 
			$\Cbn{\LTm} = \op(\Cbn{\LTmTwo_1}, \dots, \Cbn{\LTmTwo_n}) \allowbreak\Root{\op} \Cbn{\LTmThree} = \Cbn{{\LTm'}}$.
		\end{itemize}
		
		\smallskip
		\emph{Completeness:} 
		We prove by induction on the term $\LTm \in \Lambda$ that if $\Cbn{\LTm} \ToVal \TmTwo$ then $\TmTwo = \Cbn{{\LTm'}}$ and $\LTm \ToBeta \LTm'$ for some $\lambda$-term $\LTm'$. 
		According to the definition of $\Cbn{\LTm} \ToVal \TmTwo$, there are the following cases:
		\begin{itemize}
			\item \emph{Root-step}, \Ie $\Cbn{\LTm} \Defeq \App{\La{\var}{\Cbn{\LTmThree}}}{\Bang{\Cbn{\LTmFour}}} \allowbreak\Root{\ValScript} \Cbn{\LTmThree} \Sub{\Cbn{\LTmFour}}{\var} \Eqdef \TmTwo$: 
			by \Cref{lemma:substitution}.\ref{p:substitution-cbn} $\TmTwo = \Cbn{(\LTmThree \Sub{\LTmFour}{\var})}$, so $\LTm = (\La{\var}{\LTmThree})\LTmFour \Root{\beta} \LTmThree\Sub{\LTmFour}{\var} \Eqdef {\LTm'}$ where $\Cbn{{\LTm'}} = \TmTwo$.
			
			\item  \emph{Abstraction}, \Ie $\Cbn{\LTm} \Defeq \La{\var}{\Cbn{\LTmThree}} \ToVal \La{\var}{\TmTwo'} \Eqdef \TmTwo$ with $\Cbn{\LTmThree} \ToVal \TmTwo'$: by 
			\Ih, there is a term $\LTmThree' \in \Lambda$ such that $\Cbn{{\LTmThree'}} = \TmTwo'$ and $\LTmThree \ToBeta \LTmThree'$, thus $\LTm = \La{\var}{\LTmThree} \ToBeta \La{\var}{\LTmThree'} \Eqdef {\LTm'}$ where $\Cbn{{\LTm'}} = \La{\var}\Cbn{{\LTmThree'}} = \TmTwo$.
			
			\item \emph{Application left}, \Ie $\Cbn{\LTm} \Defeq \App{\Cbn{\LTmThree}}{\Bang{{\Cbn{\LTmFour}}}} \ToVal \App{\TmTwo'}{\Bang{{\Cbn{\LTmFour}}}} \Eqdef \TmTwo$ with $\Cbn{\LTmThree} \ToVal \TmTwo'$: 
			analogously to above.
			
			\item \emph{Application right}, \Ie $\Cbn{\LTm} \!\Defeq \App{\Cbn{\LTmFour}}{\Bang{{\Cbn{\LTmThree}}}} \!\ToVal \App {\Cbn{\LTmFour}} \Bang{\TmTwo'} \!\Eqdef \TmTwo$ with $\Cbn{\LTmThree} \ToVal \TmTwo'$: by 
			\Ih, there is a term $\LTmThree' \in \Lambda$ such that $\Cbn{{\LTmThree'}} = \TmTwo'$ and $\LTmThree \ToBeta \LTmThree'$, so $\LTm = \LTmFour\LTmThree \ToBeta \LTmFour\LTmThree' \Eqdef \LTm'$ with $\Cbn{{\LTm'}} = \App{\Cbn{\LTmFour}}{\Bang{\Cbn{{\LTmThree'}}}} = \TmTwo$.
			
			\item \emph{Operator}, \Ie $\Cbn{\LTm} \Defeq \op(\dots, \Cbn{\LTmTwo}, \dots) \ToTot \op(\dots, \TmTwo', \dots) \Eqdef \TmTwo$ with $\Cbn{\LTmTwo} \ToTot \TmTwo'$: by
			\Ih, there is a term $\LTmTwo' \in \Lambda$ such that $\Cbn{{\LTmTwo'}} = \TmTwo'$ and $\LTmTwo \ToBeta \LTmTwo'$, thus $\LTm = \op(\dots, \LTmTwo, \dots) \ToBeta \op(\dots, \LTmTwo', \dots) \Eqdef \LTm'$ where $\Cbn{{\LTm'}} = \op(\dots, \Cbn{{\LTmTwo'}}, \dots) = \TmTwo$.
		\end{itemize}
		
		
		\item \emph{Soundness:} We prove a stronger statement by induction on the term $\LTm \in \Lambda$: if $\LTm \ToBetav \LTm'$ then $\Cbv{\LTm} \ToVal 
		\ToDer^= \Cbv{{\LTm'}}$ and moreover if $\LTm'$ is a value (\Ie an abstraction or a variable) and $\LTm$ is not then 
		$\Cbv{\LTm} \ToVal \Cbv{{\LTm'}}$.
		According to the definition of $\LTm \ToBetav \LTm'$, there are the following cases:
		\begin{itemize}
			\item \emph{Root-step}, \Ie $\LTm \Defeq (\La{\var}{\LTmThree})\LVal \Root{\Betav} \LTmThree\Sub{\LVal}{\var} \Eqdef \LTm'$ where $\LVal$ is a value, \Ie a variable or an abstraction: then, $\Cbv{(\La{\var}{\LTmThree})} = \Bang{\La{\var}{\Cbv{\LTmThree}}}$ and $\Cbv{\LVal} = \Bang{\TmTwo}$ for some $\TmTwo \in \BangSet$, hence $\Cbv{\LTm} = (\La{\var}{\Cbv{\LTmThree}}){\Bang{\TmTwo}}  \Root{\ValScript}\allowbreak \Cbv{\LTmThree} \Sub{\TmTwo}{\var} = \Cbv{{\LTm'}}$ by \Cref{lemma:substitution}.\ref{p:substitution-cbv}.
			
			\item  \emph{Abstraction}, \Ie $\LTm \Defeq \La{\var}{\LTmThree} \ToBetav \La{\var}{\LTmThree'} \Eqdef \LTm'$ with $\LTmThree \ToBetav \LTmThree'$: by 
			\Ih, $\Cbv{\LTmThree} \ToVal 
			\ToDer^= \Cbv{{\LTmThree'}}$; 
			therefore $\Cbv{\LTm} = \Bang{(\La{\var}{\Cbv{\LTmThree}})} \ToVal 
			\ToDer^= \Bang{(\La{\var}{\Cbv{{\LTmThree'}}})} = \Cbv{{\LTm'}}$ 
			where $\LTm$ is a value.
			
			\item \emph{Application left}, \Ie $\LTm \Defeq \LTmThree\LTmTwo \ToBetav \LTmThree'\LTmTwo \Eqdef \LTm'$ with $\LTmThree \ToBetav \LTmThree'$: by 
			\Ih, $\Cbv{\LTmThree} \ToVal 
			\ToDer^= \Cbv{{\LTmThree'}}$ and if $\LTmThree'$ is a value and $\LTmThree$ is not then 
			$\Cbv{\LTmThree} \ToVal \Cbv{{\LTmThree'}}$.
			There are two sub-cases:
			\begin{itemize}
				\item if $\Cbv{\LTmThree} = \Bang{\TmThree}$ for some $\TmThree \in \BangSet$, then
				$\Cbv{{\LTmThree'}} = \Bang{\TmThree'}$ for some $
				\TmThree' \in \BangSet$ such that $\TmThree \ToVal 
				\ToDer^= \TmThree'$; hence
				$\Cbv{\LTm} = \TmThree{{\Cbv{\LTmTwo}}} \ToVal 
				\ToDer^= \TmThree'{{\Cbv{\LTmTwo}}} = \Cbv{{\LTm'}}$;
				
				\item otherwise $\Cbv{\LTmThree} \neq \Bang{\TmThree}$ for any $\TmThree \in \BangSet$, thus $\LTmThree$ is not a value;
				either $\LTmThree'$ is a value and so $\Cbv{\LTmThree} \ToVal \Cbv{{\LTmThree'}} = \Bang{\TmThree'}$ for some $\TmThree' \in \BangSet$, hence $\Cbv{\LTm} = (\Der{\Cbv{\LTmThree}}){{\Cbv{\LTmTwo}}} \ToVal (\Der{\Bang{\TmThree'}}){{\Cbv{\LTmTwo}}} \ToDer \TmThree'{{\Cbv{\LTmTwo}}} = \Cbv{{\LTm'}}$; 
				or $\LTmThree'$ is not a value, thus $\Cbv{{\LTmThree'}} \neq \Bang{\TmThree'}$ for any $\TmThree' \in \BangSet$ and therefore $\Cbv{\LTm} = (\Der{\Cbv{\LTmThree}}){{\Cbv{\LTmTwo}}} \ToVal \ToDer^= (\Der{\Cbv{{\LTmThree'}}}){{\Cbv{\LTmTwo}}} = \Cbv{{\LTm'}}$.
			\end{itemize}
			In any case, $\LTm'$ is not a value.
			
			\item \emph{Application right}, \Ie $\LTm \Defeq \LTmTwo\LTmThree \ToBetav \LTmTwo\LTmThree' \!\Eqdef \LTm'$ with $\LTmThree \ToBetav\! \LTmThree'$: 
			analogous to the previous case.
			
			\item \emph{Operator}, \Ie $\LTm \Defeq \op(\dots, \LTmTwo, \dots) \ToBetav \op(\dots, \LTmTwo', \dots) \Eqdef \LTm'$ with $\LTmTwo \ToBetav \LTmTwo'$: analogous to the abstraction case, taking into account that $\LTm'$ is not a value.
		\end{itemize}
	
			The proof that if $\LTm \ToOp \LTm'$ then $\Cbv{\LTm} \ToOp \Cbv{{\LTm'}}$ is analogous, the only difference is in the root-step.
	\begin{itemize}
		\item \emph{Root-step}, \Ie $\LTm \Defeq \op(\LTmTwo_1, \dots, \LTmTwo_n) \Root{\op} \LTmThree \Eqdef {\LTm'}$: then, 
		$\Cbv{\LTm} = \op(\Cbv{\LTmTwo_1}, \dots, \Cbv{\LTmTwo_n}) \allowbreak\Root{\op} \Cbv{\LTmThree} = \Cbv{{\LTm'}}$.
	\end{itemize}

		\smallskip
		\emph{Completeness:} 
		We prove by induction on $\TmTwo_0 \in \BangSet$ that if $\Cbv{\LTm}  \ToVal \TmTwo_0 \ToDerNorm  \TmTwo$ then $\TmTwo_0 \ToDer^= \TmTwo = \Cbv{{\LTm'}}$ and $\LTm \ToBetav \LTm'$ for some term $\LTm' \in \Lambda$.
		According to the definition of $\Cbv{\LTm} \ToVal \TmTwo_0$, there are the following cases:
		\begin{itemize}
			\item \emph{Root-step}, \Ie $\Cbv{\LTm} \Defeq \App{\La{\var}{\TmThree}}{\Bang{\TmFour}} \Root{\ValScript} \TmThree\Sub{\TmFour}{\var} \Eqdef \TmTwo_0$.
			According to the definition of $\Cbv{(\cdot)}$, necessarily $\LTm = (\La{\var}{\LTmThree}){\LTmFour} \Root{\Betav} \LTmThree \Sub{\LTmFour}{\var} = \LTm'$ for some term $\LTmThree \in \Lambda$ and some value $\LTmFour$, with $\Cbv{\LTmThree} = {\TmThree}$ and $\Cbv{\LTmFour} = \Bang{\TmFour}$. 
			By substitution lemma (\Cref{lemma:substitution}.\ref{p:substitution-cbv}), $\Cbv{{\LTm'}} = \TmTwo_0$ and hence $\TmTwo_0$ is $\Derel$-normal, according to \Cref{rmk:cbv-val-normal}, thus $\TmTwo_0 = \TmTwo$.
			
			\item \emph{Abstraction}, \Ie $\Cbv{\LTm} \Defeq \La{\var}{\TmThree_0} \ToVal \La{\var}{\TmThree'} \Eqdef \TmTwo_0$ with $\TmThree_0 \ToVal \TmThree'$.
			This case is impossible because, according to the definition of $\Cbv{(\cdot)}$, there is no $\LTm \in \Lambda$ such that $\Cbv{\LTm}$ is an abstraction.
			
			
			\item \emph{Application left}, \Ie $\Cbv{\LTm} \Defeq \TmThree_0{\TmThree_1} \ToVal \App{\TmThree'}{\TmThree_1} \Eqdef \TmTwo$ with $\TmThree_0 \ToVal \TmThree'$.
			By \Cref{rmk:cbv-val-normal}, 
			$\Cbv{\LTm} = \App{\Der{\TmFive}}{\TmThree_1}$ for some $\ValScript$-normal $\TmFive \in \BangSet$ such that $\Der{\TmFive} \ToDer \TmThree_0$, and thus $\TmThree_0 = \Der{\TmFive_0}$ where $\TmFive \ToDer \TmFive_0$ (indeed $\TmFive = \Bang{\TmThree_0}$ is impossible because $\TmFive$ is $\ValScript$-normal), and hence $\TmThree' = \Der{\TmFive'}$ with $\TmFive_0 \ToVal \TmFive'$.
			So, $\LTm = \LTmFour\LTmFour_1$ for some $\lambda$-terms $\LTmFour$ and $\LTmFour_1$ such that $\Cbv{\LTmFour} = \TmFive$ and $\Cbv{\LTmFour_1} = \TmThree_1$ with $\Cbv{\LTmFour} \ToDer \TmFive_0 \ToVal \TmFive'$.
			By 
			\Ih, $\TmFive' = \Cbv{{\LTmFour'}}$ and $\LTmFour \ToBetav \LTmFour'$ for some $\lambda$-term $\LTmFour'$.
			Let $\LTm' \Defeq \LTmFour'\LTmFour_1$: then, $\LTm = \LTmFour\LTmFour_1 \ToBetav 
			\LTm'$ and $\Cbv{{\LTm'}} = \App{\Der{\Cbv{{\LTmFour'}}}}{\Cbv{\LTmFour_1}} = \TmTwo$.
			
			\item \emph{Application right}, \Ie $\TmTwo_0 \Defeq \App{\TmThree_1}{\TmThree_0} \ToVal \App{\TmThree_1}{\TmThree'} \Eqdef \TmTwo$ with $\TmThree_0 \ToVal \TmThree'$.
			By \Cref{rmk:cbv-val-normal}, necessarily $\Cbv{\LTm} = \App{\Der{\TmFive_1}}{\TmThree}$ for some $\ValScript$-normal $\TmThree, \TmFive_1 \in \BangSet$ such that $\Der{\TmFive_1} = \TmThree_1$ and $\TmThree \ToDer \TmThree_0$.
			So, $\LTm = \LTmFour_1\LTmFour$ for some $\lambda$-terms $\LTmFour_1$ and $\LTmFour$ such that $\Cbv{\LTmFour_1} = \TmFive_1$ and $\Cbv{\LTmFour} = \TmThree$ with $\Cbv{\LTmFour} \ToDer \TmThree_0 \ToVal \TmThree'$.
			By 
			\Ih, $\TmThree' = \Cbv{{\LTmFour'}}$ and $\LTmFour \ToBetav \LTmFour'$ for some $\lambda$-term $\LTmFour'$.
			Let $\LTm' \Defeq \LTmFour_1\LTmFour'$: then, $\LTm = \LTmFour_1\LTmFour \ToBetav 
			\LTm'$ and $\Cbv{{\LTm'}} = \App{\Der{\Cbv{\LTmFour_1}}}{\Cbv{{\LTmFour'}}} = \TmTwo$.
			
			\item \emph{Box}, \Ie $\TmTwo_0 \Defeq \Bang{\TmThree_0} \ToVal \Bang{\TmThree'} \Eqdef \TmTwo$ with $\TmThree_0 \ToVal \TmThree'$.
			According to \Cref{rmk:cbv-val-normal}, necessarily $\Cbv{\LTm} = \Bang{\TmThree}$ for some $\ValScript$-normal $\TmThree \in \BangSet$ such that $\TmThree \ToDer \TmThree_0$.
			So, $\LTm = \La{\var}{\LTmFour}$ (since $\Cbv{\LTm}$ is a box and $\Cbv{\var}$ is $\Derel$-normal) for some $\lambda$-term $\LTmFour$ such that $\TmThree = \La{\var}\Cbv{\LTmFour}$, and hence there are $\TmFive_0, \TmFive' \in \BangSet$ such that $\TmThree_0 = \La{\var}{\TmFive_0}$ and $\TmThree' = \La{\var}{\TmFive'}$  with $\Cbv{\LTmFour} \ToDer \TmFive_0 \ToVal \TmFive'$.
			By 
			\Ih, $\TmFive' = \Cbv{{\LTmFour'}}$ and $\LTmFour \ToBetav \LTmFour'$ for some $\lambda$-term $\LTmFour'$.
			Let $\LTm' \Defeq \La{\var}{\LTmFour'}$: then, $\LTm = \La{\var}{\LTmFour} \ToBetav 
			\LTm'$ and $\Cbv{{\LTm'}} = \Bangp{\La{\var}{\Cbv{{\LTmFour'}}}} = \TmTwo$.
			
			\item \emph{Operator}, \Ie $\Cbv{\LTm} \Defeq \op(\dots, \Cbn{\LTmTwo}, \dots) \ToTot \op(\dots, \TmTwo', \dots) \Eqdef \TmTwo$ with $\Cbv{\LTmTwo} \ToTot \TmTwo'$: by
			\Ih, there is a term $\LTmTwo' \in \Lambda$ such that $\Cbv{{\LTmTwo'}} = \TmTwo'$ and $\LTmTwo \ToBeta \LTmTwo'$, thus $\LTm = \op(\dots, \LTmTwo, \dots) \ToBeta \op(\dots, \LTmTwo', \dots) \Eqdef \LTm'$ where $\Cbv{{\LTm'}} = \op(\dots, \Cbv{{\LTmTwo'}}, \dots) = \TmTwo$.
			\qed
		\end{itemize}
		
	\end{enumerate}
\end{proof}

The targets of the \CbN translation $\Cbn{(\cdot)}$  and \CbV translation $\Cbv{(\cdot)}$
into the bang calculus can be \emph{characterized syntactically}, as described in \Cref{fig:targets}. 

\begin{remark}[Image of \CbN translation]
	\label{rmk:cbn-image}
	The \CbN translation $\Cbn{(\cdot)}$ 
	is a \emph{bijection} from the set $\Lambda$ of $\lambda$-terms to the subset $\Cbn{\BangSet\!}$ of $\BangSet$ defined in \Cref{fig:targets}:
	\begin{figure}[!t]
  \centering
  \scalebox{0.9}{\parbox{1.08\linewidth}{
  \begin{align*}
    \text{\emph{Target of \CbN translation into} $\BangSet$:}		&& \Tm, \TmTwo &\Coloneqq \var \mid \Tm \,{\Bang{\TmTwo}} \mid \La{\var}{\Tm}	\mid \op(\Tm_1, \dots, \Tm_k)		&&\textup{\!(set: } \Cbn{\BangSet\!}\textup{)}  
    \\
    \text{\emph{Target of \CbV translation into} $\BangSet$:}		&& \ImTmOne, \ImTmTwo &\Coloneqq \Bang{\ImVal} \mid (\Der{\ImTmOne}){\ImTmTwo} \mid \ImVal{\ImTmOne} \mid \op(\ImTmOne_1, \dots, \ImTmOne_n)				&&\textup{\!(set: }\Cbv{\BangSet\!}\textup{)} &
    \\&&\ImVal &\Coloneqq \var \mid \La{\var}{\ImTmOne} &&\textup{\!(set: }\Cbv{{\BangSet}_v}\textup{).}
  \end{align*}
  \vspace{-2\baselineskip}
  }}
  \caption{Targets of \CbN and \CbV translations of $\lambda$-terms into the bang calculus.}
  \label{fig:targets}
\end{figure}

	$\Cbn{\LTm} \in \Cbn{\BangSet\!}$ for any $\LTm \in \Lambda$, and conversely,
	for any $\Tm \in \Cbn{\BangSet\!}$, there is a \emph{unique} $\LTm \in \Lambda$ \mbox{such that $\Cbn{\Tm} = \LTm$.}
\end{remark}

By \Cref{thm:embedding}.\ref{p:embedding-cbn} and \Cref{rmk:cbn-image}, 
the image of the \CbN translation into the the bang calculus, \ie $\Cbn{\BangSet\!}$ endowed with reduction $\ToTot$, is \emph{isomorphic} to the pure \CbN $\lambda$-calculus.
In particular, $\Cbn{(\cdot)}$ preserves normal forms, forth and back.

\begin{corollary}[Preservations with respect to pure \CbN $\lambda$-calculus \cite{GuerrieriManzonetto18}]
	\label{cor:cbn-preservation}
	Let $\LTm, \LTmTwo \in \Lambda$.
	\begin{enumerate}
		\item\label{p:cbn-preservation-equational-theory} \emph{\CbN\ equational theory:} $\LTm \simeq_\beta \LTmTwo$ 
		if and only if $\Cbn{\LTm} \simeq_{\Tot} \Cbn{\LTmTwo}$.
		\item\label{p:cbn-preservation-normal} \emph{\CbN\ normal forms:} $\LTm$ is $\beta$-normal 
		if and only if $\Cbn{\LTm}$ is  $\Tot$-normal.
	\end{enumerate}
\end{corollary}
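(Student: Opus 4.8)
The plan is to obtain both equivalences from the one-step \CbN simulation (\Cref{thm:embedding}.\ref{p:embedding-cbn}), the bijectivity of $\Cbn{(\cdot)}$ onto $\Cbn{\BangSet\!}$ (\Cref{rmk:cbn-image}), and confluence of $\ToTot$ (proved in \cite{GuerrieriManzonetto18}). No new induction on terms is needed: everything reduces to transporting reductions back and forth through the translation.

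The statement on normal forms is the contrapositive of the one-step simulation. If $\LTm$ is not $\beta$-normal, say $\LTm \ToBeta \LTm'$, then \CbN soundness gives $\Cbn{\LTm} \ToTot \Cbn{{\LTm'}}$, so $\Cbn{\LTm}$ is not $\Tot$-normal. Conversely, if $\Cbn{\LTm} \ToTot \TmTwo$ then \CbN completeness yields some $\LTm'$ with $\LTm \ToBeta \LTm'$, so $\LTm$ is not $\beta$-normal. Hence $\LTm$ is $\beta$-normal iff $\Cbn{\LTm}$ is $\Tot$-normal.

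For the equational theory, the forward direction is a routine push of soundness along a conversion: writing $\LTm \simeq_\beta \LTmTwo$ as a finite zig-zag of $\ToBeta$ and $\RevTo{\beta}$ steps, soundness turns each step into the corresponding $\ToTot$ (resp.\ reverse) step between the translations, so $\Cbn{\LTm} \simeq_{\Tot} \Cbn{\LTmTwo}$. The backward direction is the crux, and the step I expect to be the main obstacle, since it is the only place that genuinely uses confluence and injectivity. From $\Cbn{\LTm} \simeq_{\Tot} \Cbn{\LTmTwo}$ and confluence of $\ToTot$ there is a common reduct $T$ with $\Cbn{\LTm} \ToTot^* T$ and $\Cbn{\LTmTwo} \ToTot^* T$. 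Iterating completeness (a one-line induction on the length of the reduction, each step landing again inside the image of $\Cbn{(\cdot)}$) produces $\lambda$-terms $\LTm', \LTmTwo'$ with $\LTm \ToBeta^* \LTm'$, $\LTmTwo \ToBeta^* \LTmTwo'$, and $\Cbn{{\LTm'}} = T = \Cbn{{\LTmTwo'}}$. Injectivity of $\Cbn{(\cdot)}$ (\Cref{rmk:cbn-image}) then forces $\LTm' = \LTmTwo'$, so $\LTm \ToBeta^* \LTm' = \LTmTwo' \MRevTo{\beta} \LTmTwo$ and therefore $\LTm \simeq_\beta \LTmTwo$. The delicate point is exactly that completeness closes only under \emph{single} steps that remain inside the translation image, so it is the bijectivity of \Cref{rmk:cbn-image} that both licenses the iteration and identifies the two reducts at the end.
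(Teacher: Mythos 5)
Your proof is correct and follows essentially the same route as the paper's: confluence of $\ToTot$ to produce a common reduct, iterated \CbN completeness (each step staying in the image of the translation) to pull it back along both reduction sequences, and the bijectivity of $\Cbn{(\cdot)}$ (\Cref{rmk:cbn-image}) to identify the two preimages, with point 2 likewise read off directly from \Cref{thm:embedding}.\ref{p:embedding-cbn}. The only cosmetic difference is in the forward direction of point 1, where you push soundness through the zig-zag of $\simeq_\beta$ step by step, while the paper first invokes confluence of $\ToBeta$ to reduce to a common-reduct valley --- both are fine.
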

%

\begin{proof}\hfill
	\begin{enumerate}
%
		\item 
		If $\LTm \simeq_\beta \LTmTwo$ then $\LTm \ToBeta^* \LTmThree \MRevTo{\beta} \LTmTwo$ for some $\LTmThree \in \Lambda$, as $\ToBeta$ is confluent;
		by \Cref{thm:embedding}.\ref{p:embedding-cbn} (soundness), $\Cbn{\LTm} \ToVal^* \Cbn{\LTmThree} \MRevTo{\Tot} \Cbn{\LTmTwo}$ and so $\Cbn{\LTm} \simeq_{\Tot} \Cbn{\LTmTwo}$.
		
		Conversely, if $\Cbn{\LTm} \simeq_{\Tot} \Cbn{\LTmTwo}$ then $\Cbn{\LTm} \ToTot^* \TmThree \, \MRevTo{\Tot} \Cbn{\LTmTwo}$ for some $\TmThree \in \BangSet$, since $\ToTot$ is confluent;
		by \Cref{thm:embedding}.\ref{p:embedding-cbn} (completeness) and bijectivity of $\Cbn{(\cdot)}$ (\Cref{rmk:cbn-image}), $\LTm \ToBeta^* \LTmThree \MRevTo{\beta} \LTmTwo$ for some $\lambda$-term $\LTmThree$ such that $\Cbn{\LTmThree} = \TmThree$, and so $\LTm \simeq_\beta \LTmTwo$.

		\item Immediate consequence of \Cref{thm:embedding}.\ref{p:embedding-cbn}.
		\qed
	\end{enumerate}
\end{proof}

\smallskip
The correspondence between \CbV $\lambda$-calculus and bang calculus is slightly more delicate: \CbV translation 
$\Cbv{(\cdot)}$ gives a \emph{sound} and \emph{complete} embedding of $\ToBetav$ into $\ToVal\ToDerNorm$, but it is not complete with respect to 
$\ToTot$, 
see \Cref{ex:not-complete-cbv}
\begin{example}
	\label{ex:not-complete-cbv}
	Resuming \Cref{ex:simulation}, given the $\lambda$-term $\LTm = ((\La{\varThree}{\varThree})\var)\varTwo$, we have $\Cbv{\tm} \ToTot (\Der{\Bang{\var}})\Bang{\varTwo}$ 
	and there is no $\lambda$-term $\LTmTwo$ such that $\Cbv{\LTmTwo} = (\Der{\Bang{\var}})\Bang{\varTwo}$.
\end{example}

Nonetheless, as in the \CbN case, the \CbV translation preserves normal forms back and forth.
The proof requires a fine analysis of \CbV\ translation $\Cbv{(\cdot)}$.
First, 
we define two subsets $\Cbv{\BangSet}$ and $\Cbv{{\BangSet}_v}$ of $\BangSet$, see \Cref{fig:targets}. 
%
%
\begin{remark}[Image of \CbV\ translation]
	\label{rmk:cbv-image}
	If $\LTm \in \Lambda$ then $\Cbv{\LTm} \in \Cbv{\BangSet}$; in particular, if $\LVal \in \LambdaVal$ then $\Cbv{\LVal} = \Bang{\ImVal}$ for some $\ImVal \in \Cbv{{\BangSet}_v}$.
	However, $\Cbv{(\cdot)}$ is not surjective in $\Cbv{\BangSet}$: 
	there is no $\lambda$-term $\LTmTwo$ such that $\Cbv{\LTmTwo} = (\Der{\Bang{\var}})\Bang{\varTwo} \in \Cbv{\BangSet}$.
	Note that there is no term $\LTm \in \Lambda$ such that $\Cbv{\LTm} = \La{\var}{\var} = \Der{\!}$.
\end{remark}

We 
define a \emph{forgetful 
	map} $\InvV{(\cdot)} \colon \Cbv{\BangSet} \cup \Cbv{{\BangSet}_v} \to \Lambda$ 
from terms 
$\ImTmOne \!\in\! \Cbv{\BangSet\!}$ and $\ImVal \!\in\! \Cbv{{\BangSet}_v}$~into $\lambda$-terms,
just forgetting the $\oc$ and $\Der{}$ added by the \CbV translation.

\begin{equation*}
	\small
	\begin{gathered}
	\InvV{(\Bang{\ImVal})} = \InvV{\ImVal} 
	\quad 
	\InvV{((\Der{\ImTmOne}){\ImTmTwo})} = \InvV{\ImTmOne}\InvV{\ImTmTwo}
	\quad 
	\InvV{(\ImVal{\ImTmOne})} =  \InvV{\ImVal}\InvV{\ImTmOne}
	\quad
	\InvV{\op(\ImTmOne_, \dots, \ImTmOne_n)} = \op(\InvV{\ImTmOne_1}, \dots, \InvV{\ImTmOne_n});
  \\
	\InvV{\var} = \var 
	\qquad
	\InvV{(\La{\var}{\ImTmOne})} = \La{\var}{\InvV{\ImTmOne}}.
	\end{gathered}
\end{equation*}

Since $\Der{\!} = \La{\var}{\var} \notin \Cbv{\BangSet} \cup \Cbv{{\BangSet}_v}$, the forgetful function is well defined.

Thus, we can prove that $\Cbv{\BangSet\!}$ is the set of terms 
in the bang calculus that are reachable by $\Tot$-reduction or any $\op$-reduction from \CbV translations of $\lambda$-terms (\Ie for any $\LTm \in \Lambda$, if $\Cbv{\LTm} \To{\Tot,\op}^* \TmTwo$ then 
$\TmTwo \in \Cbv{\BangSet\!}$).
Also, $\Tot$-reduction on $\Cbv{\BangSet}$ is projected into $\Betav$-reduction on $\Lambda$ by the forgetful map $\InvV{(\cdot)}$, and \mbox{similarly for $\op$-reduction}.

%
\begin{lemma}[Properties of the forgetful 
	map $\InvV{(\cdot)}$]
	\label{lemma:forgetful}
	\hfill
	\begin{enumerate}
		\item\label{p:forgetful-inverse} \emph{Forgetful map 
			is a left-inverse of \CbV\ translation:} For every $\LTm \in \Lambda$,  $\InvV{\Cbv{\LTm}} = \LTm$.
		\item\label{p:forgetful-substitution} 
		\emph{Substitution:} $\ImTmOne \Sub{\ImVal}{\var} \!\in\! \Cbv{\BangSet\!}$ with $\InvV{(\ImTmOne \Sub{\ImVal}{\var})} = \InvV{\ImTmOne} \Sub{\InvV{\ImVal}\!}{\var}$, 
		for any $\ImTmOne \!\in\! \Cbv{\BangSet\!}$ and $\ImVal
		\!\in\! \Cbv{{\BangSet}_v}$.
		
		\item\label{p:forgetful-reduction} 
		\emph{$\Tot$ vs.~$\Betav$:}
		For all $\ImTmOne \in \Cbv{\BangSet\!}$ and $\Tm \in \BangSet$, if $\ImTmOne \ToTot \Tm$ then $\Tm \in \Cbv{\BangSet\!}$ and $\InvV{\ImTmOne} \ToBetav^= \InvV{\Tm}$.
		
		\item\label{p:forgetful-reduction-op} 
		\emph{$\op$ vs.~$\op$:}
		For all $\ImTmOne \in \Cbv{\BangSet\!}$ and $\Tm \in \BangSet$, if $\ImTmOne \ToOp \Tm$ then $\Tm \in \Cbv{\BangSet\!}$ and $\InvV{\ImTmOne} \ToOp^= \InvV{\Tm}$.
	\end{enumerate}
\end{lemma}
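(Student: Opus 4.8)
The plan is to establish the four items in order, each by structural induction, with items~\ref{p:forgetful-substitution}--\ref{p:forgetful-reduction-op} strengthened to a \emph{simultaneous} statement over the two categories $\Cbv{\BangSet}$ and $\Cbv{{\BangSet}_v}$ of \Cref{fig:targets}, since that grammar defines terms and values by mutual recursion. Throughout, the key structural facts I will exploit are that $\Bang{\,\cdot\,}$ and $\op(\dots)$ expose no $\Tot$-redex at the root, and that $\InvV{(\cdot)}$ maps a value of $\Cbv{{\BangSet}_v}$ to a $\lambda$-calculus value (immediate from its clauses on $\var$ and $\La{\var}{\ImTmOne}$).

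Item~\ref{p:forgetful-inverse} is a plain induction on $\LTm \in \Lambda$; only the application case $\LTm\LTmTwo$ inspects the translation, splitting on whether the head is a value. In both sub-cases the $\oc$ or $\Der{}$ inserted by $\Cbv{(\cdot)}$ is erased by the matching clause of $\InvV{(\cdot)}$ and the \Ih closes the case. For item~\ref{p:forgetful-substitution} I would prove, mutually, that substituting a value $\ImVal \in \Cbv{{\BangSet}_v}$ for $\var$ preserves both categories and commutes with $\InvV{(\cdot)}$. All cases are compositional except the application production with a value head: replacing that head---possibly the very variable $\var$---by $\ImVal$ yields again a value (by the companion hypothesis on $\Cbv{{\BangSet}_v}$), so the result stays of the form $\ImVal\,\ImTmOne$ in $\Cbv{\BangSet}$ and no spurious $\Der{}$ is created.

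Items~\ref{p:forgetful-reduction} and~\ref{p:forgetful-reduction-op} are proved together by induction on $\ImTmOne \in \Cbv{\BangSet}$, with the companion fact that reduction sends values to values, establishing the closure $\Tm \in \Cbv{\BangSet}$ and the commutation at once. The heart of item~\ref{p:forgetful-reduction} is the dichotomy between the two $\Tot$-redexes a $\Cbv{\BangSet}$-term can expose: an \emph{administrative} one $\Der{\Bang{\ImVal}}\ToDer\ImVal$ (arising in the production $(\Der{\ImTmOne})\ImTmTwo$), which $\InvV{(\cdot)}$ collapses since $\InvV{((\Der{\Bang{\ImVal}}){\ImTmTwo})} = \InvV{\ImVal}\InvV{\ImTmTwo} = \InvV{(\ImVal\ImTmTwo)}$, so it contributes the reflexive case of $\ToBetav^=$; and a \emph{genuine} one $(\La{\var}{\ImTmOne})\Bang{\ImVal}\ToTot \ImTmOne\Sub{\ImVal}{\var}$ (arising in the production $\ImVal\,\ImTmOne$), which item~\ref{p:forgetful-substitution} maps to the actual step $\InvV{\ImTmOne}\Sub{\InvV{\ImVal}}{\var}$, a genuine $\betav$-redex because $\InvV{\ImVal}$ is a $\lambda$-value. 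Every remaining step is contextual and closes under the \Ih and the contextual closure of $\ToBetav$. Item~\ref{p:forgetful-reduction-op} has the identical shape, the only new point being the root $\op$-step $\op(\ImTmOne_1,\dots,\ImTmOne_n) \Root{\op} \Tm$, which commutes with $\InvV{(\cdot)}$ because the forgetful map is transparent to operators; for projection rules such as $\op(\ImTmOne_1,\ImTmOne_2)\Root{\op}\ImTmOne_i$ this is immediate.

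I expect the main obstacle to be the combined bookkeeping in items~\ref{p:forgetful-reduction}--\ref{p:forgetful-reduction-op}: one must prove closure in $\Cbv{\BangSet}$ \emph{simultaneously} with the commutation, correctly route each $\Tot$-step either to an erased $\Der{}$-step (giving reflexive $\ToBetav^=$) or to a real $\betav$-step (giving one $\Betav$-step), and check that a value can reduce only to a value so that the productions $\Bang{\ImVal}$ and $\ImVal\,\ImTmOne$ are preserved. This is exactly where the refined \CbV embedding pays off: since a translated term never puts a box directly under $\Der{}$, the administrative and genuine redexes remain cleanly separable in $\Cbv{\BangSet}$, which is what makes the reflexive-versus-single-step attribution well defined.
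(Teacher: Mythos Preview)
The paper states this lemma without proof in the appendix, so there is no detailed argument to compare against. Your plan is the natural one and is correct: the mutual induction over the grammars $\Cbv{\BangSet}$ and $\Cbv{{\BangSet}_v}$ is exactly what is needed, and your identification of the two kinds of root $\Tot$-redexes available in $\Cbv{\BangSet}$---the administrative $\Der{\Bang{\ImVal}}$ (giving the reflexive case of $\ToBetav^=$) versus the genuine $(\La{\var}{\ImTmOne})\Bang{\ImVal}$ (giving one $\Betav$-step via item~\ref{p:forgetful-substitution})---is precisely the heart of item~\ref{p:forgetful-reduction}.

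Two small remarks. First, your closing sentence slightly misstates the reason for separability: translated terms \emph{are} $\Derel$-normal, but arbitrary elements of $\Cbv{\BangSet}$ need not be---indeed that is exactly how administrative redexes arise after a reduction step. The clean separation you rely on comes from the grammar itself, which distinguishes the productions $(\Der{\ImTmOne})\ImTmTwo$ and $\ImVal\,\ImTmOne$ syntactically; a root $\Tot$-redex can only occur in the latter shape (genuine) or at the inner application of the former when $\ImTmOne = \Bang{\ImVal}$ (administrative). Second, you are right to flag that item~\ref{p:forgetful-reduction-op} at the root depends on the shape of the $\op$-rule: for projection rules (the paper's running example) the result is immediate, and the paper itself does not spell out further hypotheses on $\Root{\op}$ for the general case.
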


\begin{corollary}[Preservation with respect to pure \CbV $\lambda$-calculus]
	\label{cor:cbv-preservation}
	Let $\LTm, \LTmTwo \in \Lambda$. 
	\begin{enumerate}
		\item\label{p:cbv-preservation-equational-theory} \emph{\CbV equational theory:} $\LTm \simeq_{\Betav\!} \LTmTwo$ if and only if $\Cbv{\LTm} \simeq_{\Tot} \Cbv{\LTmTwo}$.
		\item\label{p:cbv-preservation-normal} \emph{\CbV normal forms:} $\LTm$ is $\Betav$-normal 
		if and only if $\Cbv{\LTm}$ is $\Tot$-normal.
	\end{enumerate}
\end{corollary}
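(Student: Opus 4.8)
The plan is to mirror the proof of the \CbN analogue (\Cref{cor:cbn-preservation}), replacing the use of bijectivity of $\Cbn{(\cdot)}$ by the forgetful map $\InvV{(\cdot)}$, whose relevant properties are collected in \Cref{lemma:forgetful}. The two ingredients that make everything work are the simulation result \Cref{thm:embedding}.\ref{p:embedding-cbv}, and the confluence of both reductions involved: $\ToBetav$ on $\Lambda$ \cite{Plotkin75} and $\ToTot$ on $\BangSet$ \cite{GuerrieriManzonetto18}. Throughout I use that $\ToVal$ and $\ToTot$ denote the same reduction and that $\ToDer \subseteq \ToTot$.

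For \emph{part 1}, the forward implication uses soundness and confluence of $\ToBetav$: if $\LTm \simeq_{\Betav} \LTmTwo$, confluence yields a common reduct $\LTmThree$ with $\LTm \ToBetav^* \LTmThree \MRevTo{\Betav} \LTmTwo$; applying \CbV soundness step by step turns each $\ToBetav$ step into a $\ToVal\ToDer^=$ segment, hence into a $\ToTot^*$ reduction, giving $\Cbv{\LTm} \ToTot^* \Cbv{\LTmThree} \MRevTo{\Tot} \Cbv{\LTmTwo}$ and therefore $\Cbv{\LTm} \simeq_{\Tot} \Cbv{\LTmTwo}$. The backward implication uses confluence of $\ToTot$ and the forgetful map: from $\Cbv{\LTm} \simeq_{\Tot} \Cbv{\LTmTwo}$ confluence gives $\Tm$ with $\Cbv{\LTm} \ToTot^* \Tm \MRevTo{\Tot} \Cbv{\LTmTwo}$. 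Iterating \Cref{lemma:forgetful}.\ref{p:forgetful-reduction} along each reduction\,---\,which keeps every reduct in $\Cbv{\BangSet}$ and projects each $\ToTot$ step to a (possibly empty) $\ToBetav$ step\,---\,gives $\InvV{\Cbv{\LTm}} \ToBetav^* \InvV{\Tm}$ and $\InvV{\Cbv{\LTmTwo}} \ToBetav^* \InvV{\Tm}$. Since $\InvV{\Cbv{(\cdot)}}$ is the identity on $\Lambda$ (\Cref{lemma:forgetful}.\ref{p:forgetful-inverse}), this reads $\LTm \ToBetav^* \InvV{\Tm} \MRevTo{\Betav} \LTmTwo$, so $\LTm \simeq_{\Betav} \LTmTwo$.

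For \emph{part 2}, the backward implication (by contraposition) is immediate from soundness: if $\LTm \ToBetav \LTm'$ then $\Cbv{\LTm} \ToVal\ToDer^= \Cbv{\LTm'}$ exhibits a $\ToVal = \ToTot$ redex in $\Cbv{\LTm}$, so $\Cbv{\LTm}$ is not $\Tot$-normal. The forward implication (again by contraposition) uses completeness together with two auxiliary facts: $\Cbv{\LTm}$ is $\Derel$-normal (\Cref{rmk:cbv-val-normal}) and $\ToDer$ is terminating (each step strictly decreases size), so $\Derel$-normal forms exist. If $\Cbv{\LTm}$ is not $\Tot$-normal, pick a step $\Cbv{\LTm} \ToVal \Tm$ and reduce $\Tm$ to its $\Derel$-normal form, obtaining $\Cbv{\LTm} \ToVal \ToDerNorm \TmTwo$; \CbV completeness (\Cref{thm:embedding}.\ref{p:embedding-cbv}) then yields $\LTm'$ with $\LTm \ToBetav \LTm'$, so $\LTm$ is not $\Betav$-normal.

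The main obstacle is the backward direction of part 1. Unlike the \CbN case, $\Cbv{(\cdot)}$ is not surjective, and $\ToTot$ on $\BangSet$ contains administrative $\ToDer$-steps absent from the image of $\ToBetav$, so one cannot simply read a $\ToTot$-reduction backwards as a $\ToBetav$-reduction. The forgetful map is exactly what resolves this, but its use is legitimate only because $\Cbv{\BangSet}$ is closed under $\ToTot$ and $\InvV{(\cdot)}$ is a genuine left-inverse collapsing the extra $\ToDer$-steps to identity; these are precisely the contents of \Cref{lemma:forgetful}.\ref{p:forgetful-reduction} and \ref{p:forgetful-inverse} on which the argument rests.
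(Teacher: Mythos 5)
Your proposal is correct and follows essentially the same route as the paper's proof: confluence of $\ToBetav$ plus \CbV soundness for the forward direction of part 1, confluence of $\ToTot$ plus the forgetful map (\Cref{lemma:forgetful}, closure of $\Cbv{\BangSet}$ under $\ToTot$ and left-invertibility) for the backward direction, and soundness/completeness of \Cref{thm:embedding}.\ref{p:embedding-cbv} together with \Cref{rmk:cbv-val-normal} for part 2. The only difference is that you make explicit the termination of $\ToDer$ needed for $\ToDerNorm$ to exist, a detail the paper leaves implicit.
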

%

\begin{proof}\hfill
	\begin{enumerate}
		\item If $\LTm \simeq_{\Betav} \LTmTwo$ then $\LTm \ToBetav^* \LTmThree \, \MRevTo{\Betav\!} \LTmTwo$ for some $\LTmThree \in \Lambda$, as $\ToBetav$ is confluent;
		by \Cref{thm:embedding}.\ref{p:embedding-cbv} (soundness), $\Cbv{\LTm} \ToTot^* \Cbv{\LTmThree} \MRevTo{\Tot} \Cbv{\LTmTwo}$ 
		and so $\Cbv{\LTm} \simeq_{\Tot} \Cbv{\LTmTwo}$.
		Conversely, if $\Cbv{\LTm} \simeq_{\Tot} \Cbv{\LTmTwo}$ then $\Cbv{\LTm} \ToTot^* \TmThree \,\MRevTo{\Tot} \Cbv{\LTmTwo}$ for some $\TmThree \in \BangSet$, since $\ToTot$ is confluent;
		by \Cref{rmk:cbv-image}, $\Cbv{\LTm}\!, \Cbv{\LTmTwo} \in \Cbv{\BangSet\!}$;
		thus, $\TmThree \in \Cbv{\BangSet\!}$ and $\InvV{\Cbv{\LTm}} \ToBetav^* \InvV{\TmThree} \, \MRevTo{\Betav\!} \InvV{\Cbv{\LTmTwo}}$ by \Cref{lemma:forgetful}.\ref{p:forgetful-reduction}, hence $\LTm = \InvV{\Cbv{\LTm}} \simeq_{\Betav} \InvV{\Cbv{\LTmTwo}} = \LTmTwo$ by \Cref{lemma:forgetful}.\ref{p:forgetful-inverse}.
		
		\item 
		If $\LTm$ is not $\Betav$-normal, then there is some $\TmTwo \in \BangSet$ such that $\Cbv{\LTm} \ToTot \TmTwo$ by \Cref{thm:embedding}.\ref{p:embedding-cbv} (soundness), hence $\Cbv{\LTm}$ is not $\Tot$-normal.
		
		Conversely, if $\Cbv{\LTm}$ is not $\Tot$-normal, then it is not $\ValScript$-normal according to \Cref{rmk:cbv-val-normal}, thus $\Cbv{\LTm} \ToVal\ToDerNorm  \TmTwo$ for some $\TmTwo \in \BangSet$.
		By \Cref{thm:embedding}.\ref{p:embedding-cbv} (completeness), $\LTm \ToBetav \LTmTwo$ with $\Cbv{\LTmTwo} = \TmTwo$.
		\qed
	\end{enumerate}
	
\end{proof}

\Cref{cor:cbv-preservation}.\ref{p:cbv-preservation-equational-theory} means that \CbV\ translation $\Cbv{(\cdot)}$\,---\,even if it is a sound but not complete embedding of $\Betav$-reduction into $\Tot$-reduction\,---\,is a sound and complete embedding of $\Betav$-equivalence into $\Tot$-equivalence.
Said differently, the non-completeness of $\Cbv{(\cdot)}$ with respect to $\Tot$-reduction is just a syntactic detail, the \CbV\ translation (via $\Tot$-equivalence) 
equates exactly the same as $\Betav$-equivalence.
\Cref{cor:cbv-preservation}.\ref{p:cbv-preservation-normal} says that $\Cbv{(\cdot)}$ preserves normal forms~(back~and~forth).

\section{Omitted proofs of 
\Cref{sec:ll_def}}

In this section, let  $\red$ be the contextual closure of a set of rules, and  $\R$  the set of all its redexes.

	\begin{definition}[Redex-preserving]		
		$\red$ is said  \textbf{redex-preserving} if      $T\nsred T'$  implies that
		$T$ is a redex if and only if $T'$ is a redex.
	\end{definition}
	
By \Cref{cor:bo_redex}, $\redbb\cup \ToOp$ is redex preserving.

\begin{lemma}[A sufficient condition for good least-level]\label{lem:good_redex}
If $\red$ is redex-preserving, then it has a good least-level.
\end{lemma}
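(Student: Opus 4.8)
The plan is to verify the two clauses of \Cref{def:good}, \emph{monotonicity} and \emph{internal invariance}, separately; only the second will actually use the redex-preserving hypothesis.

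For \emph{monotonicity}, I would take a step $\Tm = \Ctxp{R} \red \Ctxp{R'} = \TmTwo$ firing the redex $R \in \R$ at level $k \defeq \lev{\Ctx}$, and show that every redex occurrence of $\TmTwo$ sits at level $\geq \llBang{\Tm}$, whence $\llBang{\Tm} \leq \llBang{\TmTwo}$. The starting observation is that $\llBang{\Tm} \leq k$, since $R$ itself is a redex occurring at level $k$ in $\Tm$. Now each redex occurrence of $\TmTwo$ is of one of two kinds. Either its redex-status and position are untouched by the step (it is disjoint from the contracted hole): then it already occurs as a redex of $\Tm$ at the same level, because the path from the root to it — hence its number of enclosing $\oc$'s and operators — is unaffected; so it has level $\geq \llBang{\Tm}$. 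Or it is created (or modified) by the contraction, in which case its root lies inside the contractum $R'$ or is the application/operator node immediately joining $R'$ to $\Ctx$; since the contractum fills the hole, which is at level $k$, and the level function of \eqref{eq:level-bang} never decreases as one descends into a subterm, such a root is at level $\geq k \geq \llBang{\Tm}$. Taking the infimum gives $\llBang{\TmTwo} \geq \llBang{\Tm}$. (An equivalent purely inductive proof on $\Ctx$ using the inductive description of $\llBang{\cdot}$ is also available; its only delicate cases are the application ones, where a root step may reshape the function or argument so as to create a new redex, but there the fired redex already forces $\llBang{\Tm}=0$, so nothing drops below.)

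For \emph{internal invariance}, I would assume $\Tm \nllred \TmTwo$, firing a redex $R_0$ at level $k > \llBang{\Tm}$. By monotonicity $\llBang{\Tm} \leq \llBang{\TmTwo}$, so it suffices to prove $\llBang{\TmTwo} \leq \llBang{\Tm}$. As $\llBang{\Tm} \leq k < \infty$, by \Cref{rmk:least-level-normal} the term $\Tm$ is not normal and the infimum defining $\llBang{\Tm}$ is attained: fix an occurrence $\Tm = \mathbf{D}\Hole{R_1}$ of a redex $R_1 \in \R$ with $\lev{\mathbf{D}} = \llBang{\Tm}$. The occurrences $R_0$ and $R_1$ cannot coincide (they are at different levels), and $R_0$ cannot contain $R_1$ (otherwise $R_1$ would sit at level $\geq k > \llBang{\Tm}$); hence they are either disjoint, or $R_0$ lies strictly inside $R_1$. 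In both cases the external context $\mathbf{D}$ around $R_1$ is left untouched by the step, so the residual of $R_1$ occurs in $\TmTwo$ at the same level $\lev{\mathbf{D}} = \llBang{\Tm}$. It remains to check that this residual is again a redex: if $R_0$ is disjoint from $R_1$ this is trivial, as $R_1$ is copied verbatim; if $R_0$ lies inside $R_1$, then the induced step $R_1 \nsred R_1'$ is \emph{non-surface}, since it fires $R_0$ at level $k - \llBang{\Tm} \geq 1$ inside $R_1$, and the redex-preserving hypothesis then yields that $R_1'$ is a redex. Either way $\TmTwo$ contains a redex at level $\llBang{\Tm}$, so $\llBang{\TmTwo} \leq \llBang{\Tm}$, and together with monotonicity we get $\llBang{\Tm} = \llBang{\TmTwo}$.

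The main obstacle is the redex bookkeeping in monotonicity: ruling out that contraction produces a redex \emph{below} the level $k$ of the fired redex. This is exactly what the ``descent into the contractum'' observation together with the bound $\llBang{\Tm}\leq k$ takes care of. The one genuinely hypothesis-dependent point is the nested case of internal invariance: it is precisely there, after reducing the situation to a non-surface step $R_1 \nsred R_1'$, that redex-preservation is indispensable to guarantee that the least-level redex survives, and hence that the least level does not strictly increase.
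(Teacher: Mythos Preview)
Your argument for internal invariance is fine, but your claim that monotonicity does not need the redex-preserving hypothesis is wrong, and your proof of monotonicity has a genuine gap. The flaw is the assertion that a redex ``created'' by the contraction must have its root inside the contractum $R'$ or at the node \emph{immediately} joining $R'$ to $\Ctx$. In general a new redex may appear at \emph{any} ancestor of the hole along the spine of $\Ctx$, hence at a level strictly below $k$. Concretely: take a rule $\gamma$ whose only redexes are terms of the form $\op(\Bang{\var})$, and set $\Tm = \op(\Bang{((\La{\varTwo}{\varTwo})\Bang{\var})})$. The only redex in $\Tm$ is the inner $\Tot$-redex at level $2$, so $\llBang{\Tm}=2$; yet $\Tm \ToTot \op(\Bang{\var}) = \TmTwo$ with $\llBang{\TmTwo}=0$, and monotonicity fails. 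This reduction is not redex-preserving (the step $\Tm \nsred \TmTwo$ turns a non-redex into a redex), so the hypothesis is essential also for monotonicity.

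The paper's proof proceeds by structural induction on $\Tm$ and invokes redex-preserving exactly in the case where $\Tm$ is not a redex but $\TmTwo$ is: that forces the step to be surface, whence $\llBang{\Tm}=0$. Your redex-tracking argument can be repaired along the same lines: if a redex of $\TmTwo$ sits at an ancestor of the hole at level $m<k$, the corresponding subterm of $\Tm$ differs from it by a non-surface step (at relative level $k-m>0$), and redex-preserving then guarantees it was already a redex in $\Tm$, so its level is $\geq \llBang{\Tm}$ after all.
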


\begin{proof}Given a redex-preserving reduction  $\red$, 
we  prove  
	\begin{enumerate}
		\item\label{p:ll-properties-monotone-bang}
		\emph{Monotonocity:}  $T \red S$ implies $\llev{T} \leq \llev{S}$.
		
		\item\label{p:ll-properties-invariance-bang}
		\emph{Internal invariance:}  $T \nllred S$ implies $\llev T = \llev{S}$. 
	\end{enumerate}

In both cases, the proof is by induction on $T \in \BangSet$.

\begin{enumerate} 
	\item \emph{Monotonicity.} Assume $T\red S$.
\begin{enumerate}
	\item\label{p:ll-properties-monotone-bang-root} If $T$ is a redex, then $\llev{T}=0  \leq \llev{S}$.

 \item If $T$ is  not a redex, and $S$ is a redex, then $T\sred S$ (by the assumption that $\red$ is redex-preserving), and so $\llev T=0=\llev S$.
 
\item  If neither $T$ nor $S$ is a redex,  we have the following  cases (note that 	  cases $T=(\lam x.P) \Bang{Q}$ and $T=\op(P_1, \dots, P_k)$ are included in \eqref{p:ll-properties-monotone-bang-root}).
\begin{itemize}
	\item $T=\lam x.P  \red \lam x.P'=S$. Then $P\red {P'}$, and we conclude by \ih, because $\llev {\lam x.P}=\llev P \leq \llev P' =\llev {\lam x.P'}$
	\item $T=!P  \red !P'=S$. Then $P\red P'$, and we conclude by \ih, because $\llev P \leq \llev {P'} $ implies 
	$\llev P \leq \llev { !P'}$.
	\item $T=PQ\red S$. Since $ T $ is not a redex,   either (i) $P\red P'$ and $PQ\red P'Q=S$, or (ii)  $Q\red Q'$ and $PQ\red PQ'=S$. In case (i), by \ih,  $\llev P \leq \llev {P'}$ and $\llev {P'Q}=\min \{\llev P, \llev Q\} \leq \min \{\llev P, \llev Q\}  =\llev {P'Q} $. Case  (ii) is similar. 
	Observe that in the definition of least level we use the fact that neither $T$ nor $S$ is a redex.
	
\end{itemize}
\end{enumerate}

\item \emph{Internal invariance.} Assume $T\nllred S$.

\begin{enumerate}
\item  If $T$ is a redex, then $S$ is a redex, because $\nllred \subset \nsred$. Therefore $\llev{T}=0  = \llev{S}$.

\item Otherwise,  if neither $T$ nor $S$ is a redex, and  we have the following  cases.
\begin{itemize}
		\item   $!T_1\nllred !  S_1$   and $T_1\nllred S_1$. We conclude by \ih.
	
	\item $\lam x. T_1 \nllred \lam x. S_1$  and $T_1\nllred S_1$. We conclude by \ih.
	
	\item $ T_1T_2\nllred S$ and  either (i) $T_1\red_{:k} S_1$ (and $S=S_1T_2$), 
	or (ii) $T_2\red_{:k} S_2$ (and $S=T_1S_2$). We have  $k>\llev{T_1T_2}=\min\{\llev{T_1}, \llev{T_2}\}$.
	
	  We examine case   $T_1\red_{:k} S_1$ (case (ii) is similar). \begin{itemize}
	  	\item If $\llev {T_1}\leq \llev{T_2}$, then   $T_1\nllred S_1$, and by \ih $\llev {T_1}= \llev{S_1}$.
	  Hence $\llev {S_1T_2}= \llev{S_1}=\llev{T_1}=\llev{T_1T_2} $
	  \item If $\llev {T_1}> \llev{T_2}$, since $\llev{S_1} \geq \llev {T_1}$ (by monotonicity), then $\llev{S_1T_2}=\llev{T_2}=\llev{T_1T_2}$.
	  \end{itemize}

\end{itemize}
Note that in the definition of least level, we use the fact that neither $T_1T_2$ nor $S_1S_2$ is a redex. 
\qed
\end{enumerate}

\end{enumerate}
\end{proof}

\Cref{lem:good_redex} and \Cref{cor:bo_redex} give


 \llproperties*
\section{Omitted proofs and lemmas of \Cref{sect:factorization}}

\paragraph{Factorization.}
Our proof of factorization follows and generalizes the approach used in \cite{AccattoliFaggianGuerrieri19} for the pure \CbN $\lambda$-calculus,
in particular the following characterization.

\begin{proposition}[Abstract factorization \cite{AccattoliFaggianGuerrieri19}]
	\label{prop:abstract-factorize}
	Let $\To{} \ = \ \essred \cup \nessred$ be a reduction over a set $\mathcal{A}$.
	Suppose that there are reductions $\Parallel$ and $\ParallelNotEss$ such that:
	\begin{itemize}
		\item \emph{Macro:} $\nessred \ \subseteq \ \ParallelNotEss \ \subseteq \nessred^*$;
		\item \emph{Merge:} For any $\tm, \tmTwo \in \mathcal{A}$, if $\tm \ParallelNotEss \!\cdot\! \essred \tmTwo$ then $\tm \Parallel \tmTwo$;
		\item \emph{Split:} For any $\tm, \tmTwo \in \mathcal{A}$, if $\tm \Parallel \tmTwo$ then $\tm \essred^* \!\cdot\! \ParallelNotEss \tmTwo$.
	\end{itemize}
	Then, $(\mathcal{A},\To{})$ $\esssym$-factorize: for any $\tm, \tmTwo \in \mathcal{A}$, if $\tm \To{}^* \tmTwo$ then $\tm \essred^* \!\cdot\! \nessred^* \tmTwo$.
\end{proposition}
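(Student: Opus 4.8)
The statement is purely abstract rewriting, and its three hypotheses are visibly engineered to feed into the characterization of factorization recalled in \Cref{fact:fact}. So the plan is to use $\ParallelNotEss$ as the witness parallel reduction $\iparred$ in that characterization, instantiating $\ered$ with $\essred$ and $\ired$ with $\nessred$, and then to discharge the two conditions the characterization demands: namely that $\ParallelNotEss^* = \nessred^*$, and that strong postponement $\LP{\essred}{\ParallelNotEss}$ holds.

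First I would dispatch the easy condition from \emph{Macro}. The inclusions $\nessred \subseteq \ParallelNotEss \subseteq \nessred^*$ give $\ParallelNotEss^* = \nessred^*$ immediately: the left inclusion yields $\nessred^* \subseteq \ParallelNotEss^*$, and the right yields $\ParallelNotEss^* \subseteq (\nessred^*)^* = \nessred^*$. This same observation guarantees that closing $\essred \cup \nessred$ and $\essred \cup \ParallelNotEss$ under reflexive–transitive closure gives the same relation, which is what will later let me pass freely between $\nessred$ and $\ParallelNotEss$.

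The core step is strong postponement, and here \emph{Merge} and \emph{Split} compose to give it almost for free. Suppose $\tm \ParallelNotEss \cdot \essred \tmTwo$. By \emph{Merge} we get $\tm \Parallel \tmTwo$, and then by \emph{Split} we get $\tm \essred^* \cdot \ParallelNotEss \tmTwo$. Hence $\ParallelNotEss \cdot \essred \subseteq \essred^* \cdot \ParallelNotEss \subseteq \essred^* \cdot \ParallelNotEss^=$, which is precisely $\LP{\essred}{\ParallelNotEss}$ in the sense of \eqref{eq:SP}. With both conditions established, \Cref{fact:fact} (equivalently, \Cref{l:SP} applied to $\essred$ and $\ParallelNotEss$, followed by rewriting $\ParallelNotEss^*$ as $\nessred^*$ via \emph{Macro}) yields $\F{\essred}{\nessred}$, i.e. $\tm \To{}^* \tmTwo$ implies $\tm \essred^* \cdot \nessred^* \tmTwo$, as required.

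I do not expect a genuine obstacle at this abstract level: the auxiliary relations $\Parallel$ and $\ParallelNotEss$ are introduced exactly so that the single nontrivial diagram, strong postponement, reduces to one use of \emph{Merge} chained with one use of \emph{Split}. The only points that will need a little care are bookkeeping: tracking the reflexive closure $\ParallelNotEss^=$ required by \eqref{eq:SP}, and invoking $\ParallelNotEss \subseteq \nessred^*$ at the very end to convert the factorization through $\ParallelNotEss$ into one through $\nessred$. The real difficulty of the paper lies elsewhere, in constructing $\Parallel$ and $\ParallelNotEss$ concretely for the bang calculus and verifying \emph{Merge} and \emph{Split} there; but that higher-order content is external to this abstract proposition.
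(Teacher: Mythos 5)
Your proof is correct and follows exactly the route the paper intends: the paper recalls this proposition from \cite{AccattoliFaggianGuerrieri19} and its surrounding machinery (\Cref{fact:fact}, \ie Hindley's \localpostponement via a witness parallel reduction) is precisely what you instantiate, with \emph{Macro} giving $\ParallelNotEss^* = \nessred^*$ and \emph{Merge} composed with \emph{Split} giving $\LP{\essred}{\ParallelNotEss}$. No gaps.
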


Our goal is then to apply \Cref{prop:abstract-factorize} to the bang calculus, where $\To{} \ = \ \ToTot$ and $\essred \ = \ \llredBang$ and $\nessred \ = \ \nllredBang$. 
So, we have to identify reductions $\Parallel$ and $\ParallelNotEss$ in the bang calculus such that the properties macros, merge and split hold. 
The natural solution is to take:
\begin{itemize}
	\item $\Parallel \ = \ \ParallelBang$, the parallel version of $\ToTot$, which fires simultaneously a number of $\Tot$-redexes;
	\item $\ParallelNotEss \ = \ \ParallelNotLlBang$, the parallel version of $\nllredBang$, which fires simultaneously a number of $\Tot$-redexes that are not at minimal level.
\end{itemize}

Formally, \emph{parallel $\Tot$-reduction} $\ParallelBang$ is defined by the following rules:
\begin{center}
	\small
	\begin{prooftree}
		\infer0{\var \ParallelBang \var}
	\end{prooftree}
	\quad
	\begin{prooftree}
		\hypo{\Tm \ParallelBang \Tm'}
		\infer1{\La{\var}{\Tm} \ParallelBang \La{\var}{\Tm'}}
	\end{prooftree}
	\quad
	\begin{prooftree}
		\hypo{\Tm \ParallelBang \Tm'}
		\hypo{\TmTwo \ParallelBang \TmTwo'}
		\infer2{\Tm\TmTwo \ParallelBang \Tm'\TmTwo'}
	\end{prooftree}
	\quad
	\begin{prooftree}
		\hypo{\Tm \ParallelBang \Tm'}
		\infer1{\Bang{\Tm} \ParallelBang \Bang{\Tm'}}
	\end{prooftree}
	\\[5pt]
	\begin{prooftree}
		\hypo{\Tm \ParallelBang \Tm'}
		\hypo{\TmTwo \ParallelBang \TmTwo'}
		\infer2{(\La{\var}{\Tm})\Bang{\TmTwo} \ParallelBang \Tm'\Sub{\TmTwo'\!}{\var}}
	\end{prooftree}
	\quad
	\begin{prooftree}[separation=1.2em]
		\hypo{\Tm_1 \ParallelBang \Tm_1'}
		\hypo{\overset{k \in \Nat}{\dots}}
		\hypo{\Tm_k \ParallelBang \Tm_k'}
		\infer3{\op(\Tm_1, \dots, \Tm_k) \ParallelBang \op(\Tm_1', \dots, \Tm_k')}
	\end{prooftree}
\end{center}

To define $\ParallelNotLlBang$, we first introduce 
$\ParallelBangAt{n}$ (the parallel version of $\ToBangAt{n}$), which fires simultaneously a number of $\Tot$-redexes at level at least $n$ 
(and $\ParallelBangAt{\infty}$ does not reduce~any~$\Tot$-redex).
\begin{center}
	\small
	\begin{prooftree}
		\infer0{\var \ParallelBangAt{\infty} \var}
	\end{prooftree}
	\quad
	\begin{prooftree}
		\hypo{\Tm \ParallelBangAt{n} \Tm'}
		\infer1{\La{\var}{\Tm} \ParallelBangAt{n} \La{\var}{\Tm'}}
	\end{prooftree}
	\quad
	\begin{prooftree}[separation=1.2em]
		\hypo{\Tm \ParallelBangAt{m} \Tm'}
		\hypo{\TmTwo \ParallelBangAt{n} \TmTwo'}
		\infer2{\Tm\TmTwo \ParallelBangAt{\min\{m,n\}} \Tm'\TmTwo'}
	\end{prooftree}
	\quad
	\begin{prooftree}
		\hypo{\Tm \ParallelBangAt{n} \Tm'}
		\infer1{\Bang{\Tm} \ParallelBangAt{n\!+\!1} \Bang{\Tm'}}
	\end{prooftree}
	\\[5pt]
	\begin{prooftree}[separation=1.2em]
		\hypo{\Tm \ParallelBangAt{n} \Tm'}
		\hypo{\TmTwo \ParallelBangAt{m} \TmTwo'}
		\infer2{(\La{\var}{\Tm})\Bang{\TmTwo} \ParallelBangAt{0} \Tm'\Sub{\TmTwo'\!}{\var}}
	\end{prooftree}
	\quad
	\begin{prooftree}[separation=1.2em]
		\hypo{\Tm_1 \ParallelBangAt{n_1} \Tm_1'}
		\hypo{\overset{k \in \Nat}{\dots}}
		\hypo{\Tm_k \ParallelBangAt{n_k} \Tm_k'}
		\infer3{\op(\Tm_1, \dots, \Tm_k) \ParallelBangAt{1\!+\!\min\{n_1, \dots, n_k\}} \op(\Tm_1', \dots, \Tm_k')}
	\end{prooftree}
\end{center}

Note that $\Tm \ParallelBang \TmTwo$ if and only if $\Tm \ParallelBangAt{n} \TmTwo$ for some $n \in \Nat$; and $\Tm \ParallelBangAt{\infty} \TmTwo$ implies $\Tm = \TmTwo$.
The \emph{parallel internal $\Tot$-reduction} $\ParallelNotLlBang$ is then defined as:
\begin{center}
	\small
	$\Tm \ParallelNotLlBang \TmTwo$ \quad if $\Tm \ParallelBangAt{n} \TmTwo$ with $n = \infty$ or $n > \llBang{\Tm}$.
\end{center}

Clearly, $\ParallelBang$ and $\ParallelNotLlBang$ are reflexive, and $\nllredBang \ \subseteq \ \ParallelNotLlBang \ \subseteq \ \nllredBang^*$ (macro condition in \Cref{prop:abstract-factorize}).
To prove the merge property, we first prove a refined version of it ``by level''.

\begin{lemma}[Merge]
	\label{lemma:merge}
	In the bang calculus $(\BangSet, \ToTot)$:
	\begin{enumerate}
		\item\label{p:merge-level} \emph{Merge by level:}  If $\Tm \ParallelBangAt{n} \TmThree \ToBangAt{m} \TmTwo$ with $n > m$, then $\Tm \ParallelBang \TmTwo$.
		\item\label{p:merge-ll} \emph{Merge for least-level:} If $\Tm \ParallelNotLlBang \!\cdot\! \llredBang \TmTwo$, then $\Tm \ParallelBang \TmTwo$.
	\end{enumerate}
\end{lemma}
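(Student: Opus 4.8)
The plan is to obtain part~\ref{p:merge-ll} as an immediate corollary of part~\ref{p:merge-level}, and to prove part~\ref{p:merge-level} by structural induction on $\Tm$, inverting both the parallel step and the single step.

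For the reduction of \emph{merge for least-level} to \emph{merge by level}, suppose $\Tm \ParallelNotLlBang \TmThree \llredBang \TmTwo$; by definition $\Tm \ParallelBangAt{n} \TmThree$ with $n = \infty$ or $n > \llBang{\Tm}$, and the least-level step fires a redex at level $m = \llBang{\TmThree}$. If $n = \infty$ then $\Tm = \TmThree$ and the single step is already a parallel step, giving $\Tm \ParallelBang \TmTwo$. If $n > \llBang{\Tm}$, then since $\ParallelNotLlBang \subseteq \nllredBang^*$ and internal steps preserve the least level by internal invariance (\Cref{prop:ll-properties}), we have $\llBang{\Tm} = \llBang{\TmThree} = m$, hence $n > m$ and part~\ref{p:merge-level} applies.

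For part~\ref{p:merge-level}, the variable case is vacuous, since $\TmThree$ is then a variable and admits no single step. For $\Tm = \La{\var}{P}$ and $\Tm = \Bang{P}$ both steps occur inside $P$; stripping the abstraction (\Resp decrementing the level by one through the box) turns the hypothesis $n > m$ into the corresponding inequality on the sub-steps, so the induction hypothesis on $P$ together with the congruence rules of $\ParallelBang$ closes these cases. The central case is $\Tm = P Q$: since $n > m \geq 0$ forces $n \geq 1$, the parallel step cannot be the root $\Tot$-rule (which is annotated with level $0$), so it is the application rule, with $P \ParallelBangAt{m_1} P'$, $Q \ParallelBangAt{m_2} Q'$, $\TmThree = P'Q'$ and $n = \min\{m_1,m_2\}$. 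I then split on the single step $P'Q' \ToBangAt{m} \TmTwo$. If it is internal to $P'$ (\Resp $Q'$), then $m_1 \geq n > m$ (\Resp $m_2 \geq n > m$) lets me invoke the induction hypothesis and recombine with $Q \ParallelBang Q'$ (\Resp $P \ParallelBang P'$). If instead it fires the root redex, then $P' = \La{\var}{R'}$ and $Q' = \Bang{S'}$, whence by shape preservation of parallel reduction $P = \La{\var}{R}$ and $Q = \Bang{S}$ with $R \ParallelBang R'$ and $S \ParallelBang S'$, so the root rule of $\ParallelBang$ yields $\Tm = (\La{\var}{R})\Bang{S} \ParallelBang R'\Sub{S'}{\var} = \TmTwo$.

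The main obstacle is exactly this application case, and the conceptual content that makes it go through is the strict inequality $n > m$: it guarantees that the redex fired by the low-level single step sits strictly below every redex reduced by the high-level parallel step, so it was already present in $\Tm$ and untouched by the parallel step, and hence can simply be added to the parallel step with no residual analysis. The only remaining work is routine: the inversion of the parallel-reduction rules (yielding shape preservation and the level of each sub-derivation) and the bookkeeping of how $\ToBangAt{m}$ shifts levels through applications (unchanged), boxes ($+1$), and operators ($+1$, should any be present). This mirrors the merge argument of~\cite{AccattoliFaggianGuerrieri19}, here refined to keep track of levels.
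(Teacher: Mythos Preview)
Your proposal is correct and follows essentially the same route as the paper: part~\ref{p:merge-ll} is reduced to part~\ref{p:merge-level} via internal invariance, and part~\ref{p:merge-level} is proved by induction (the paper inducts on the derivation of $\Tm \ParallelBangAt{n} \TmThree$ rather than on $\Tm$, but this is immaterial here), with the key observation that $n>m\geq 0$ excludes the root $\Tot$-rule for the parallel step and hence forces the required shape preservation in the application case. Your appeal to ``shape preservation of parallel reduction'' in the root sub-case is slightly loose---it is really the fact that $m_1,m_2\geq n>0$ rules out the $\Tot$-rule at the top of the sub-derivations for $P$ and $Q$ that forces $P=\La{\var}{R}$ and $Q=\Bang{S}$---but the substance is right and matches the paper.
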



\begin{proof}\hfill
	\begin{enumerate}
		\item 	By induction on the definition of $\Tm \ParallelBangAt{n} \TmThree$.
		Consider the last rule of the derivation of $\Tm \ParallelBangAt{n} \TmThree$.
		It cannot conclude $\Tm = (\La{\var}{\Tm_0})\Bang{\Tm_1} \ParallelBangAt{0} \Tm_0 \Sub{\Tm_1}{\var} = \TmThree$ because otherwise $n = 0$, which contradicts the hypothesis $n > m \in \Nat$.
		Therefore, the only cases are:
		\begin{itemize}
			\item \emph{Variable}: $\Tm = \var \ParallelBangAt{\infty} \var = \TmThree$. 
			Then, there is no $\TmTwo$ such that $\TmThree \ToBangAt{m} \TmTwo$ for any $m \in \Nat$.
			
			\item \emph{Abstraction}: $\Tm = \La{\var}\Tm' \ParallelBangAt{n}  \La{\var} \TmThree' = \TmThree$ because $\Tm' \ParallelBangAt{n} \TmThree'$. 
			According to definition of $\TmThree \ToBangAt{m} \TmTwo$, by necessity $\TmTwo = \La{\var}{\TmTwo'}$ with $\TmThree' \ToBangAt{m} \TmTwo'$.
			By \Ih, $\Tm' \ParallelBang \TmTwo'$, thus
			\begin{align*}
			{
				\begin{prooftree}
				\hypo{\Tm' \ParallelBang \TmTwo'}
				\infer1{\Tm = \La{\var}{\Tm'} \ParallelBang \La{\var}{\TmTwo'} = \TmTwo}
				\end{prooftree}}
			\,.
			\end{align*}
			
			\item \emph{Box}: $\Tm = \Bang{\Tm'} \ParallelBangAt{n}  \Bang{\TmThree'} = \TmThree$ because $\Tm' \ParallelBangAt{n\!-\!1} \TmThree'$. 
			According to the definition of $\TmThree \ToBangAt{m} \TmTwo$, by necessity $\TmTwo = \Bang{\TmTwo'}$ with $\TmThree' \ToBangAt{m} \TmTwo'$.
			By \Ih, $\Tm' \ParallelBang \TmTwo'$, thus
			\begin{align*}
			{
				\begin{prooftree}
				\hypo{\Tm' \ParallelBang \TmTwo'}
				\infer1{\Tm = \Bang{\Tm'} \ParallelBang \Bang{\TmTwo'} = \TmTwo}
				\end{prooftree}}
			\,.
			\end{align*}
			
			\item \emph{Application}:
			\begin{align*}
			{
				\begin{prooftree}
				\hypo{\Tm_0 \ParallelBangAt{n_0} \TmThree_0}	
				\hypo{\Tm_1 \ParallelBangAt{n_1} \TmThree_1}
				\infer2{\Tm = \Tm_0 \Tm_1 \ParallelBangAt{n} \TmThree_0 \TmThree_1 = \TmThree}
				\end{prooftree}
			}
			\end{align*}
			where $n = \min\{n_0,n_1\}$.
			According to the definition of $\TmThree \ToBangAt{m} \TmTwo$, there are the following sub-cases:
			\begin{enumerate}
				\item $\TmThree = \TmThree_0\TmThree_1 \ToBangAt{m} \TmTwo_0\TmThree_1 = \TmTwo$ with $\TmThree_0 \ToBangAt{m} \TmTwo_0$; 
				since $m < n \leq n_0$, 
				by \Ih applied to $\Tm_0 \ParallelBangAt{n_0} \TmThree_0 \ToBangAt{m} \TmTwo_0$, we have $\Tm_0 \ParallelBang \TmTwo_0$, and so (as $\ParallelBangAt{n_1} \,\subseteq\, \ParallelBang$)
				\begin{align*}
				{\begin{prooftree}
					\hypo{\Tm_0 \ParallelBang \TmTwo_0}
					\hypo{\Tm_1 \ParallelBang \TmThree_1}
					\infer2{\Tm = \Tm_0\Tm_1 \ParallelBang \TmTwo_0\TmThree_1 = \TmTwo}
					\end{prooftree}}
				\,;
				\end{align*}
				\item $\TmThree = \TmThree_0\TmThree_1 \ToBangAt{m} \TmTwo_0\TmThree_1 = \TmTwo$ with $\TmThree_1 \ToBangAt{m} \TmTwo_1$; 
				analogous to the previous sub-case.
				
				\item $\TmThree = (\La{\var}{\TmThree_0'})\Bang{\TmThree_1'} \ToBangAt{0} \TmThree_0'\Sub{\TmThree_1'}{\var} = \TmTwo$ with $\TmThree_0 = \La{\var}\TmThree_0'$ and $\TmThree_1 = \Bang{\TmThree_1'}$ and $m = 0$; 
				as $0 < n \leq n_0,n_1$ then, according to the definition of $\Tm \ParallelBangAt{n} (\La{\var}{\TmThree_0'})\Bang{\TmThree_1'} = \TmThree$,
				\begin{align*}
				{
					\begin{prooftree}
					\hypo{\Tm_0 \ParallelBangAt{n_0} \TmThree_0'}
					\infer1{\La{\var}\Tm_0 \ParallelBangAt{n_0} \La{\var}\TmThree_0'}
					\hypo{\Tm_1 \ParallelBangAt{n_1\!-\!1} \TmThree_1'}
					\infer1{\Bang{\Tm_1} \ParallelBangAt{n_1}\Bang{\TmThree_1'}}
					\infer2{\Tm = (\La{\var}\Tm_0)\Tm_1 \ParallelBangAt{n} (\La{\var}{\TmThree_0'})\Bang{\TmThree_1'} = \TmThree}
					\end{prooftree}			}
				\end{align*}
				where $n = \min\{n_0, n_1\}$; therefore (as $\ParallelBangAt{k} \,\subseteq\, \ParallelBang$)
				\[
				\begin{prooftree}
				\hypo{\Tm_0 \ParallelBang \TmThree_0'}
				\hypo{\Tm_1 \ParallelBang \TmThree_1'}
				\infer2{\Tm = (\La{\var}{\Tm_0})\Bang{\Tm_1} \ParallelBang \TmThree_0'\Sub{\TmThree_1'}{\var} = \TmTwo}
				\end{prooftree}
				\]
			\end{enumerate}
		\end{itemize}
		
		\item Since $\Tm \ParallelNotLlBang \TmThree \llredBang \TmTwo$,  $\Tm \ParallelBangAt{n} \TmThree \ToTotInd{m} \TmTwo$ for some $n \in \Nat \cup \{\infty\}$
		and $m \in \Nat$ with $n > \llBang{\Tm}$ and $m = \llBang{\TmThree}$. 
		As $\ParallelNotLlBang \ \subseteq \ \nllredBang^*$ and $\nllredBang$
		cannot change the least-level 
		(\Cref{def:good}.\ref{p:ll-properties-invariance} and \Cref{prop:ll-properties}), $\llBang{\Tm} = \llBang{\TmThree}$ and so $n > m$. 
		By merging by level (\Cref{lemma:merge}.\ref{p:merge-level}), $\Tm \ParallelBang \TmTwo$.
		\qed
	\end{enumerate}
\end{proof}

The proof of the split property requires a further tool, to get the right induction hypothesis: the \emph{indexed parallel $\Tot$-reduction} $\ParallelBangInd{n}$ (not to be confused with $\ParallelBangAt{n}$), \ie $\ParallelBang$ equipped with a natural number $n$ which is, roughly, the number of $\Tot$-redexes reduced simultaneously by $\ParallelBang$.
The formal definition of $\ParallelBangInd{n}$ is (where $\Size{\Tm}_\var$ is the number of free occurrences of $\var$ in $\Tm$):
\begin{center}
	\small
	\begin{prooftree}
		\infer0{\var \ParallelBangInd{0} \var}
	\end{prooftree}
	\quad
	\begin{prooftree}
		\hypo{\Tm \ParallelBangInd{n} \Tm'}
		\infer1{\La{\var}{\Tm} \ParallelBangInd{n} \La{\var}{\Tm'}}
	\end{prooftree}
	\quad
	\begin{prooftree}
		\hypo{\Tm \ParallelBangInd{m} \Tm'}
		\hypo{\TmTwo \ParallelBangInd{n} \TmTwo'}
		\infer2{\Tm\TmTwo \ParallelBangIndLong{m+n} \Tm'\TmTwo'}
	\end{prooftree}
	\quad
	\begin{prooftree}
		\hypo{\Tm \ParallelBangInd{n} \Tm'}
		\infer1{\Bang{\Tm} \ParallelBangIndLong{n+1} \Bang{\Tm'}}
	\end{prooftree}
	\\[5pt]
	\begin{prooftree}
		\hypo{\Tm \ParallelBangInd{n} \Tm'}
		\hypo{\TmTwo \ParallelBangInd{m} \TmTwo'}
		\infer2{(\La{\var}{\Tm})\Bang{\TmTwo} \ParallelBangIndLong{n + \Size{\Tm'}_\var\cdot m + 1} \Tm'\Sub{\TmTwo'\!}{\var}}
	\end{prooftree}
	\quad
	\begin{prooftree}[separation=1.2em]
		\hypo{\Tm_1 \ParallelBangInd{n_1} \Tm_1'}
		\hypo{\overset{k \in \Nat}{\dots}}
		\hypo{\Tm_k \ParallelBangInd{n_k} \Tm_k'}
		\infer3{\op(\Tm_1, \dots, \Tm_k) \ParallelBangIndLong{n_1 + \dots + n_k} \op(\Tm_1', \dots, \Tm_k')}
	\end{prooftree}
\end{center}

The intuition behind the last clause is: $(\La{\var}{\Tm})\TmTwo$ reduces to $\Tm'\Sub{\TmTwo'\!}{\var}$ by:
\begin{enumerate}
	\item first reducing $(\La{\var}{\Tm})\TmTwo$ to $\Tm\Sub{\TmTwo}{\var}$ ($1$ step);
	\item then reducing in $\Tm\Sub{\TmTwo}{\var}$ the $n$ steps corresponding to the sequence $\Tm \ParallelBangInd{n} \Tm'$;
	\item finally reducing $\TmTwo$ to $\TmTwo'$ for every occurrence of $\var$ in $\Tm'$ replaced by $\TmTwo$, that is, $m$ 	steps for $\Size{\Tm'}_\var$ times, obtaining $\Tm'\Sub{\TmTwo'\!}{\var}$.
\end{enumerate}

\begin{lemma}[Split]
	\label{lemma:split}
	In the bang calculus $(\BangSet, \ToTot)$:
	\begin{enumerate}
		\item\label{p:split-indexed} \emph{Indexed split}: If $\Tm \ParallelBangInd{n} \TmTwo$ then $\Tm \ParallelNotLlBang \TmTwo$, or $n > 0 $ and $\Tm \llredBang \!\cdot\! \ParallelBangIndLong{n-1} \TmTwo$.
		\item\label{p:split-ll} \emph{Split for least-level}: If $\Tm \ParallelBang \TmTwo$, then $\Tm \llredBang^* \!\cdot\! \ParallelNotLlBang \TmTwo$.
	\end{enumerate}
\end{lemma}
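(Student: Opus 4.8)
The plan is to reduce part~\ref{p:split-ll} to part~\ref{p:split-indexed} and to prove the latter by induction on the derivation of $\Tm \ParallelBangInd{n} \TmTwo$. Recall first that $\Tm \ParallelBang \TmTwo$ holds iff $\Tm \ParallelBangInd{n} \TmTwo$ for some $n \in \Nat$, so for part~\ref{p:split-ll} it suffices to induct on this index $n$: by part~\ref{p:split-indexed}, either $\Tm \ParallelNotLlBang \TmTwo$, in which case $\Tm \llredBang^* \!\cdot\! \ParallelNotLlBang \TmTwo$ with zero least-level steps; or $n>0$ and $\Tm \llredBang \Tm' \ParallelBangInd{n-1} \TmTwo$, in which case the \ih applied to $\Tm' \ParallelBang \TmTwo$ (index $n-1$) yields $\Tm' \llredBang^* \!\cdot\! \ParallelNotLlBang \TmTwo$, and prepending the least-level step gives the claim. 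Before attacking part~\ref{p:split-indexed} I would isolate two auxiliary facts. The first is a \emph{substitution lemma} for the indexed relation: if $\Tm \ParallelBangInd{n} \Tm'$ and $\TmTwo \ParallelBangInd{m} \TmTwo'$ then $\Tm\Sub{\TmTwo}{\var} \ParallelBangInd{n + \Size{\Tm'}_\var \cdot m} \Tm'\Sub{\TmTwo'}{\var}$ (the very index appearing in the redex-firing rule, minus one). The second is a \emph{minimum-level fact}: since a term has no redex below its least level, any $\Tm \ParallelBang \Tm'$ is in fact a $\ParallelBangAt{\llBang{\Tm}} $-step, and $\Tm \ParallelNotLlBang \Tm'$ precisely when the minimal level of a fired redex strictly exceeds $\llBang{\Tm}$ (or nothing is fired).

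For part~\ref{p:split-indexed}, the base and unary cases are direct, using $\llBang{\La{\var}{\Tm}} = \llBang{\Tm}$ and $\llBang{\Bang{\Tm}} = \llBang{\Tm}+1$. For a variable, $\var \ParallelBangInd{0} \var$ gives $\var \ParallelNotLlBang \var$ (no redex). For an abstraction or a box, I would apply the \ih to the premise: if it returns $\ParallelNotLlBang$ the level bound transports through $\La{\var}{\cdot}$ (unchanged) or $\Bang{\cdot}$ (raised by one, matching the raise of $\llBang{\cdot}$), again yielding $\ParallelNotLlBang$; if it returns a peeled least-level step $\Tm \llredBang \Tm''\ParallelBangInd{n-1}\Tm'$, then that redex, still least-level after adding the enclosing $\La{\var}{\cdot}$ or $\Bang{\cdot}$, is peeled off and the index arithmetic matches. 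The redex-firing rule (rule~5) is the clean case: the whole term $(\La{\var}{\Tm})\Bang{\TmTwo}$ is a $\Tot$-redex at level $0=\llBang{(\La{\var}{\Tm})\Bang{\TmTwo}}$, so I fire it as the least-level step $(\La{\var}{\Tm})\Bang{\TmTwo} \llredBang \Tm\Sub{\TmTwo}{\var}$, and the substitution lemma delivers $\Tm\Sub{\TmTwo}{\var} \ParallelBangInd{n + \Size{\Tm'}_\var \cdot m} \Tm'\Sub{\TmTwo'}{\var}$, i.e.\ the residual $\ParallelBangInd{N-1}$ step.

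The application case is where the real work lies, and I expect it to be the main obstacle. Here $\Tm\TmTwo \ParallelBangInd{m+n} \Tm'\TmTwo'$ comes from $\Tm \ParallelBangInd{m} \Tm'$ and $\TmTwo \ParallelBangInd{n} \TmTwo'$, and the least level is $\llBang{\Tm\TmTwo} = \min\{\llBang{\Tm},\llBang{\TmTwo}\}$ when $\Tm\TmTwo$ is not itself a redex, and $0$ when it is (a redex left unfired by rule~3). I would argue by symmetry assuming $\llBang{\Tm} \le \llBang{\TmTwo}$, so the least-level redexes of $\Tm\TmTwo$ sit in $\Tm$, and apply the \ih to \emph{both} premises. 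If $\Tm$ yields a peeled least-level step, its redex is at level $\llBang{\Tm} = \llBang{\Tm\TmTwo}$, hence still least-level in the application, and I peel $\Tm\TmTwo \llredBang \Tm''\TmTwo \ParallelBangInd{(m-1)+n} \Tm'\TmTwo'$ (second disjunct). If instead $\Tm$ yields $\ParallelNotLlBang$, I combine it with the \ih for $\TmTwo$: when the latter is also $\ParallelNotLlBang$, or when it is a peeled step but $\llBang{\TmTwo} > \llBang{\Tm}$ (so that the step of $\TmTwo$, and by the minimum-level fact the whole of $\TmTwo \ParallelBang \TmTwo'$, lives at level $> \llBang{\Tm\TmTwo}$), both sides fire only redexes above $\llBang{\Tm\TmTwo}$ and I obtain $\Tm\TmTwo \ParallelNotLlBang \Tm'\TmTwo'$ (first disjunct); when $\TmTwo$ yields a peeled step with $\llBang{\TmTwo} = \llBang{\Tm}$, that redex is least-level in the application and is peeled as before. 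The delicate, error-prone points — which I would treat carefully — are reconciling the \emph{count}-index of $\ParallelBangInd{\cdot}$ with the \emph{level}-index of $\ParallelBangAt{\cdot}$ via the minimum-level fact, and the separate subcase in which $\Tm\TmTwo$ is a redex derived through rule~3 with its top redex unfired, where $\llBang{\Tm\TmTwo}=0$ forces one to check that any fired level-$0$ redex inside $\Tm$ can be peeled and that otherwise all fired redexes lie at level $\ge 1$.
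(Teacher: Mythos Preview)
Your proposal is correct and follows the paper's strategy: induction on the derivation of $\ParallelBangInd{n}$ for part~\ref{p:split-indexed}, induction on $n$ for part~\ref{p:split-ll}, and the substitution lemma for the redex-firing rule. The divergence is in the application case. You split on $\llBang{\Tm} \le \llBang{\TmTwo}$ and apply the \ih to both premises, which forces a separate treatment of the subcase where $\Tm\TmTwo$ is itself a $\Tot$-redex (then $\llBang{\Tm\TmTwo}=0$ need not equal $\min\{\llBang{\Tm},\llBang{\TmTwo}\}$, so a step peeled from $\Tm$ is not automatically least-level in $\Tm\TmTwo$, and your symmetry assumption no longer applies since the roles $\Tm=\La{\var}{P}$ and $\TmTwo=\Bang{Q}$ are fixed). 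The paper sidesteps this by first testing whether $\Tm\TmTwo \ParallelNotLlBang \Tm'\TmTwo'$ (first disjunct, done); only when this fails does it reinterpret the same derivation with level-indices $\ParallelBangAt{d}$, obtain $d = \llBang{\Tm\TmTwo}$ and $d=\min\{d_\Tm,d_\TmTwo\}$, pick the side whose index is $d$, deduce that that side's own least level equals $d$, and apply the \ih there---where the $\ParallelNotLlBang$ disjunct is now excluded, since it would entail $\Tm\TmTwo \ParallelNotLlBang \Tm'\TmTwo'$. This treats the hidden-root-redex subcase uniformly. Your route can be completed along the lines you sketch, but your ``minimum-level fact'' is misstated: a $\ParallelBang$ step is a $\ParallelBangAt{n}$-step with $n \ge \llBang{\Tm}$, not necessarily $n = \llBang{\Tm}$; the paper exploits precisely this inequality, together with its collapse to equality when $\ParallelNotLlBang$ fails.
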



\begin{proof}\hfill
	\begin{enumerate}
		\item By induction on the definition of $\Tm \ParallelBangInd{n} \TmTwo$. 
		Consider the last rule of the derivation of $\Tm \ParallelBangInd{n} \TmTwo$.
		We freely use the fact that if $\Tm \ParallelBangInd{n} \TmTwo$ then $\Tm \ParallelBang \TmTwo$. Cases:
		\begin{itemize}
			\item \emph{Variable}: $\Tm = \var \ParallelBangInd{0} \var = \TmTwo$. 
			Then, $\Tm = \var \ParallelNotLlBang \var = \TmTwo$  since $\var \ParallelBangInd{\infty} \var$.
			
			\item \emph{Abstraction}: $\Tm = \La \var{\Tm'} \ParallelBangInd n  \La {\var} {\TmTwo'} = \TmTwo$ because $\Tm' \ParallelBangInd n \TmTwo'$. It follows from the \Ih.
			
			\item \emph{Box}: $\Tm = \Bang{\Tm'} \ParallelBangInd{n}  \Bang{\TmTwo'} = \TmTwo$ because $\Tm' \ParallelBangInd{n} \TmTwo'$. It follows from the \Ih.
			
			\item \emph{Application}:
			\begin{align}
			\label{eq:app-parallel}
			{
			\begin{prooftree}
			\hypo{\TmThree \ParallelBangInd {n_1} \TmThree'}
			\hypo{\TmFive \ParallelBangInd {n_2} \TmFive'}
			\infer2{\Tm = \TmThree \TmFive \ParallelBangIndLong{n_1 + n_2} \TmThree' \TmFive' = \TmTwo}
			\end{prooftree}
			}
			\end{align}
			with $n = n_1 + n_2$.
			There are only two cases:
			\begin{itemize}
				\item either $\TmThree \TmFive \ParallelNotLlBang \TmThree' \TmFive'$, and then the claim holds;
				\item or $\TmThree\TmFive \not\ParallelNotLlBang \TmThree'\TmFive'$ and so (as $\TmThree\TmFive \ParallelBang \TmThree'\TmFive'$ with $\TmThree\TmFive \neq \TmThree'\TmFive'$) any derivation with conclusion $\TmThree \TmFive \ParallelBangAt{d} \TmThree' \TmFive'$ is such that $d =\llBang{\TmThree \TmFive} \in \Nat$.
				Let us rewrite derivation \eqref{eq:app-parallel} replacing $\ParallelBangInd{n}$ with $\ParallelBangAt{k}$: we have\footnote{This is possible because the inference rules for $\ParallelBangInd{n}$ and $\ParallelBangAt{k}$ are the same except for the way they manage their own indexes $n$ and $k$.}
				\[
				\begin{prooftree}
				\hypo{\TmThree \ParallelBangAt {d_{\TmThree}} \TmThree'}
				\hypo{\TmFive \ParallelBangAt{d_{\TmFive}} \TmFive'}
				\infer2{\Tm = \TmThree \TmFive \ParallelBangAt d \TmThree' \TmFive' = \TmTwo}
				\end{prooftree}
				\]
				where $d = \min\{d_\TmThree, d_\TmFive \}$.
				Thus, there are two sub-cases:
				\begin{enumerate}
					\item $d = d_\TmThree \leq d_\TmFive$ and then $d= \llBang{\TmThree \TmFive} \leq \llBang{\TmThree} \leq d_{\TmThree}=d $ (the first inequality holds by definition of $\llBang{\TmThree\TmFive}$), hence $\llBang \TmThree = d_{\TmThree}$; 
					we apply the \Ih\ to $\TmThree \ParallelBangInd{n_1} \TmThree'$ and we have that  $\TmThree \ParallelNotLlBang \TmThree'$, or $n_1>0$ and $\TmThree \llredBang \TmThree_1 \ParallelBangInd {n_1-1} \TmThree'$;
					but $\TmThree \ParallelNotLlBang \TmThree'$ is impossible because otherwise $\TmThree \TmFive \ParallelNotLlBang \TmThree' \TmFive'$ (as $d_\TmThree \leq d_\TmFive$);
					therefore, $n_1>0$ and $\TmThree \llredBang \TmThree_1 \ParallelBangInd{n_1-1} \TmThree'$,  so $n>0 $ and  $\Tm  = \TmThree \TmFive  \llredBang \TmThree_1 \TmFive  \ParallelBangIndLong {n_1-1+n_2} \TmThree'\TmFive' = \TmTwo$.
					  
					\item $d=d_{\TmFive} \leq d_{\TmThree}$ and then $d= \llBang{\TmThree \TmFive}\leq \deg \TmFive \leq d_{\TmFive} =d $, hence $\llBang{\TmFive} = 
					d_{\TmFive}$; we conclude analogously to thee previous sub-case.	
				\end{enumerate}
			\end{itemize}

			\item \emph{$\beta$ step}:
			\[
			\begin{prooftree}
			\hypo{\TmThree \ParallelBangInd {n_1} \TmThree'}
			\hypo{\TmFour \ParallelBangInd {n_2} \TmFour'}		
			\infer2{\Tm = (\La\var \TmThree)\TmFour \ParallelBangIndLong {n_1 + \SizeP{\TmThree'}\var \cdot n_2 +1} \TmThree'\Sub{\TmFour'}\var  = \TmTwo}
			\end{prooftree}
			\]
			With $n = n_1 + \SizeP{\TmThree'}\var \cdot n_2 +1 > 0$. We have $\Tm = (\La\var \TmThree)\TmFour \llredBang \TmThree \Sub{\TmFour}\var$ and by substitutivity of $\ParallelBangInd{n}$ (\Cref{partobind-subs}) $\TmThree\Sub{\TmFour}\var \ParallelBangIndLong{n_1 + \SizeP{\TmThree'}\var \cdot n_2} \TmThree' \Sub{\TmFour'}\var = \TmTwo$.
		\end{itemize}
	
		\item If $\Tm \ParallelBang \TmTwo$ then $\Tm \ParallelBangInd{n} \TmTwo$ for some $n \in \Nat$. We prove the statement by induction $n$. 
		By indexed split (\Cref{lemma:split}.\ref{p:split-indexed}), there are only two cases:
		\begin{itemize}
			\item \emph{$\Tm \ParallelNotLlBang \TmTwo$}. This is an instance of the statement (since $\llredBang^*$ is reflexive).
			
			\item $n>0$ and there exists $\TmFour$ such that $\Tm \llredBang \TmFour \ParallelBangInd{n-1} \TmTwo$. 
			By \Ih applied to $\TmFour \ParallelBangInd{n-1} \TmTwo$, there is $\TmThree$ such that $\TmFour \llredBang^* \TmThree \ParallelNotLlBang \TmTwo$, and so $\Tm \llredBang^* \TmThree \ParallelNotLlBang \TmTwo$.
		\qed
		\end{itemize}
	\end{enumerate}
\end{proof}

By \Cref{prop:abstract-factorize}, \Cref{lemma:merge,lemma:split}, 
least-level factorization of $\ToTot$ holds. 

\factorizebang*

%

\section{Omitted proofs of \Cref{sec:ll_modular}}

The proofs in this section  rely on  three ingredients:  the  properties of the  contextual 
closures---in particular \Cref{fact:shape}, as spelled out in \Cref{fact:isteps}---,the properties of  \emph{substitution},
which we recall below, and an assumption of good least-level.

A relation $\looparrowright$ on terms is \emph{substitutive} if 
\begin{equation}\tag{\textbf{substitutive}}
	R \looparrowright R' 
	\text{ implies } R \subs x Q \looparrowright R'\subs x Q.
\end{equation}
An obvious induction  on the shape of terms shows the   following (\cite{Barendregt84} pag. 54).
\begin{property}[Substitutive]\label{fact:subs} Let $\redc$ be  the contextual closure of $\rredc$.
	\begin{enumerate}
		\item\label{fact:subs-function} If $\rredc $ is substitutive then $\redc$ is 
		substitutive: ~ $T\redc T'$ implies $T \subs{x}{Q} \redc T' \subs{x}{Q}$.
		\item\label{fact:subs-argument} If $Q\redc Q'$ then $T\subs{x}{Q} \redc^* 
		T\subs{x}{Q'} $, always.
	\end{enumerate}
\end{property}
\medskip

We also rely on an assumption of good least-level, which we use to obtain  the following technical lemma.
\begin{lemma}\label{lem:nll_app}If $\red$ has a good least-level, then 
	$P Q \nllred  P' Q  \llred  P'' Q $ implies $P\nllred  P' \llred  P''$. Similarly, 
	$ Q  P\nllred  Q   P' \llred  Q  P''$ implies $P\nllred  P' \llred  P''$.
	
\end{lemma}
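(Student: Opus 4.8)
The plan is to reduce the whole statement to two bookkeeping facts about the least level: the structural inequality $\llBang{PQ} \le \llBang{P}$ (and, symmetrically, $\llBang{QP} \le \llBang{P}$), which is immediate from the clauses $\lev{\Ctx \Tm} = \lev{\Ctx}$ and $\lev{\Tm \Ctx} = \lev{\Ctx}$ of \eqref{eq:level-bang}, together with monotonicity and internal invariance from \Cref{def:good}. I will treat the left-application case $PQ \nllred P'Q \llred P''Q$; the case $QP \nllred QP' \llred QP''$ is entirely symmetric, using $\lev{Q\Ctx} = \lev{\Ctx}$ in place of $\lev{\Ctx Q} = \lev{\Ctx}$.

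First I would observe that both displayed steps take place inside the left component, with the right component $Q$ left untouched. For the internal step this follows from shape preservation (\Cref{fact:isteps}): an internal step is non-surface, hence not a root step, so it is the contextual closure of a step strictly inside the application; since the right component of the reduct is literally $Q$, this step is a step $P \red P'$, and because an application context does not change the level, it occurs at the very same level $k_1$ in $P$ as in $PQ$. The least-level step is read the same way, yielding a step $P' \red P''$ at the same level $k_2$ in $P'$ as in $P'Q$. By definition of internal and least-level reduction we have $k_1 > \llBang{PQ}$ and $k_2 = \llBang{P'Q}$; note also that $P$ and $P'$ genuinely contain a redex, so $\llBang{P}$ and $\llBang{P'}$ are finite.

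Next I would establish the least-level conclusion $P' \llred P''$. Since the step $P' \red P''$ fires a redex at level $k_2$ inside $P'$ and no redex of $P'$ sits below $\llBang{P'}$, we get $k_2 \ge \llBang{P'}$; conversely, a least-level redex of $P'$ yields, via $\lev{\Ctx Q} = \lev{\Ctx}$, a redex of $P'Q$ at the same level, so $\llBang{P'Q} \le \llBang{P'}$, i.e. $k_2 \le \llBang{P'}$. Hence $k_2 = \llBang{P'}$ and the step $P' \red P''$ is least-level, so $P' \llred P''$. A byproduct of this chain is the identity $\llBang{P'Q} = \llBang{P'}$, which feeds the last step. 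For the internal conclusion $P \nllred P'$ I would then chain three facts: internal invariance on the first step gives $\llBang{PQ} = \llBang{P'Q}$; the byproduct gives $\llBang{P'Q} = \llBang{P'}$; and monotonicity applied to $P \red P'$ gives $\llBang{P} \le \llBang{P'}$. Together with $\llBang{PQ} \le \llBang{P}$ this yields $\llBang{P'} = \llBang{PQ} \le \llBang{P} \le \llBang{P'}$, forcing $\llBang{P} = \llBang{PQ}$. Since the step $P \red P'$ is at level $k_1 > \llBang{PQ} = \llBang{P}$, it is internal: $P \nllred P'$.

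The main obstacle is not any single computation but getting the orientation of all these (in)equalities right and in the correct order. The only way to promote ``$k_1$ lies above the least level of the whole term $PQ$'' to ``$k_1$ lies above the least level of the subterm $P$'' is to first prove $\llBang{P} = \llBang{PQ}$, and this equality only becomes available once one has extracted $\llBang{P'Q} = \llBang{P'}$ from the least-level step. Hence the least-level conclusion must be handled before the internal one. The subtle point worth stressing is that monotonicity alone does not suffice: it is \emph{internal invariance} that transports the least level back across the internal step, closing the squeeze $\llBang{PQ} \le \llBang{P} \le \llBang{P'} = \llBang{PQ}$.
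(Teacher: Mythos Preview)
Your proof is correct and follows essentially the same route as the paper's: extract the levels $k_1,k_2$ of the two steps (unchanged by the application context), use $k_2=\llBang{P'Q}\le\llBang{P'}\le k_2$ to get $P'\llred P''$, then combine internal invariance $\llBang{PQ}=\llBang{P'Q}$ with monotonicity $\llBang{P}\le\llBang{P'}$ to conclude $k_1>\llBang{P}$ and hence $P\nllred P'$. The paper's version is a bit more direct in that it stops at $\llBang{P}\le\llBang{P'}=k_2<k_1$ without explicitly squeezing to $\llBang{P}=\llBang{PQ}$, but this is a cosmetic difference.
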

\begin{proof} Assume  $P\red_{:k}  P' \red_{:l}  P''$.  By assumption,  $l=\llev{P'Q}=\llev{P'}$, and so $ P' \llred  P''$.
	By internal invariance, $\llev{PQ}=l $, and so by assumption $k>l$.
	 By monotonicity,
	$\llev P \leq \llev{P'}=l <k$. Since $k>\llev {P}$, we have $P\nllred  P'$. 
\end{proof}

We recall that we often write $\Root{}$ to indicate the step $\red$ which is obtained by \emph{empty contextual closure}.

\subsection{ $\BangSetOp$: least-level Factorization, Modularly.}
Consider a calculus $(\BangSetOp, \red \eq \redbb\cup \redc)$, where $\redc$ is a new reduction added to $\bbeta$. \Cref{thm:modular}
 {states} that the  compound system $\redbb\cup \redc$ satisfies  least-level  factorization 
if $\F\llredbb\nllredbb$, $\F\llredc\nllredc$, 
and  the two linear swaps hold. We have already proved that  if $\red$ has a  good least-level, then  $\F\llredbb\nllredbb$ always hold. We now show that to verify    the linear swaps  
reduces to a single simple test, leading to \Cref{prop:test_ll}.

First, we observe that each  linear swap condition can be tested by considering for the least-level step  only $\mapsto$, that is, only the closure of $\mapsto$ under \emph{empty} context.
This is expressed in the following lemma, where we  include also a useful variant.

\begin{lemma}[Root linear swaps]\label{l:ll_swaps}Let  $\reda, \redc$ be the contextual closure of rules $\rreda,\rredc$, and  assume $\reda\cup \redc$ to have a good least-level.

\begin{enumerate}
	\item 	$\nllredx{\alpha} \cdot \rredc \subseteq  {\llredx{\gamma}} \cdot \reda^* $ implies 
	$\nllredx{\alpha} \cdot \llredx{\gamma} \subseteq  {\llredx{\gamma}} \cdot \reda^* $. 
	
	\item Similarly,  $\nllredx{\alpha} \cdot \rredc \subseteq  {\llredx{\gamma}} \cdot \reda^= $ implies 
	$\nllredx{\alpha} \cdot \llredx{\gamma} \subseteq  {\llredx{\gamma}} \cdot \reda^= $. 
\end{enumerate}
\end{lemma}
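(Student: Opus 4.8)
The plan is to prove both items simultaneously by induction on the context $\Ctx$ that witnesses the least-level $\gamma$-step: I fix a decomposition $S = \Ctxp{R} \llredx{\gamma} \Ctxp{R'} = S'$ with $R \rredc R'$ and $\lev{\Ctx} = \llb{}{S}$, and induct on $\Ctx$. The first item (with $\reda^*$) and the second (with $\reda^=$) are settled by the very same argument, since the only $\reda$-steps I create by hand are single steps, and all the others are inherited verbatim from the induction hypothesis or from the assumption.

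First I would settle the base case $\Ctx = \HoleCtx$. Here $S = R \rredc R' = S'$ is a root step, so from $T \nllredx{\alpha} S \rredc S'$ the assumption $\nllredx{\alpha} \cdot \rredc \subseteq \llredx{\gamma} \cdot \reda^*$ (resp. $\reda^=$) yields $T \llredx{\gamma} \cdot \reda^* S'$ (resp. $\reda^=$) at once. This is the only place where the assumption is invoked; the whole inductive step is bookkeeping of levels and recontextualization.

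For the inductive step $\Ctx \neq \HoleCtx$, since $T \nllredx{\alpha} S$ is internal, shape preservation (\Cref{fact:isteps}) forces $T$ to have the same top constructor as $S$, namely the one exposed by $\Ctx$. If this constructor is a $\La{\var}{(\cdot)}$ or a $\Bang{(\cdot)}$, the internal $\alpha$-step necessarily lies in the same immediate subterm as the hole of $\Ctx$; I would apply the induction hypothesis to that subterm and recontextualize. If it is an operator $\op(\dots)$, the $\alpha$-step may instead occur in a different argument, and then the two steps are parallel and close in a single direct diagram. The genuinely interesting case is the application $S = S_1 S_2$: by \Cref{fact:isteps} one gets $T = T_1 T_2$ with the $\alpha$-step confined to exactly one factor, giving four combinations. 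When the $\alpha$- and $\gamma$-redexes sit in the \emph{same} factor I would strip the application using \Cref{lem:nll_app}, obtaining an internal-then-least-level sequence on that factor to which the induction hypothesis applies, and then recontextualize; when they sit in \emph{different} factors the steps are parallel and commute in one step.

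The main obstacle---and the only part that is not entirely mechanical---is the level bookkeeping that justifies the descents and the recontextualizations: I must check that, on passing to a subterm, the $\alpha$-step is still \emph{internal} and the $\gamma$-step is still \emph{least-level} relative to that subterm's own least level, and conversely that wrapping the produced reduction back into the context preserves these statuses. This is exactly what the \emph{good least-level} assumption buys: internal invariance (\Cref{def:good}) gives $\llb{}{T_i} = \llb{}{S_i}$, monotonicity gives $\llb{}{T_i} \le \llb{}{S_j}$ for a sibling, and the immediate equalities $\llb{}{S_1 S_2} = \min\{\llb{}{S_1}, \llb{}{S_2}\}$ (when $S_1 S_2$ is not itself a redex), $\llb{}{\Bang{S_1}} = \llb{}{S_1} + 1$ and $\llb{}{\op(\dots)} = 1 + \min_i \llb{}{S_i}$ transport these relations up and down the term. \Cref{lem:nll_app} packages precisely this computation for applications, which is why that is the delicate case, whereas the $\Bang{}$ and $\op$ cases need only the $+1$ level shift, and the parallel cases need only that a redex at level $\llb{}{S}$ in one factor survives at the least level after an internal modification of its sibling.
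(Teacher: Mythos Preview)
Your proposal is correct and follows essentially the same approach as the paper: induction on the structure of the middle term $U$ (equivalently, on the context $\Ctx$ of the least-level $\gamma$-step), with the base case discharged by the assumption, the application same-side subcase handled via \Cref{lem:nll_app} and the induction hypothesis, and the different-side (parallel) subcases closed by a direct one-step commutation. Your treatment is in fact slightly more careful than the paper's written proof, which omits the operator case and the explicit level bookkeeping you spell out in the final paragraph.
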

\begin{proof}Assume $M \nllredx{\alpha} U \llredx{\gamma}N$. 
	If $U$ is the redex, the claim holds by assumption. Otherwise,
	we prove  $M{\llredx{\gamma}} \cdot \reda^* N $, by induction on the structure of $ U $.  Observe that  both $M$ and $N$ have the same shape as $U$ (by Property \ref{fact:shape} ).
	\begin{itemize}
		\item $U=U_1U_2$ (hence $M=M_1M_2$ and $N=N_1N_2$). We   have  two cases.   
		\begin{enumerate}
			\item Case  $U_1 \llredc N_1$.  By \Cref{fact:isteps}, either $M_1\reda U_1$ or $M_2\reda U_2$.
			
		\begin{enumerate}
			\item 	 Assume $M:=M_1M_2 \nllreda U_1 M_2\llredc N_1 M_2:=N$. By  \Cref{lem:nll_app}, we have $M_1 \nllreda U_1 \llredc N_1 $,
			and we conclude by \ih.
			
			\item  Assume $M:=U_1M_2 \nllreda U_1 U_2\llredc N_1 U_2:=N$.  Then $U_1M_2 \llredc N_1M_2\reda N_1U_2$.
		\end{enumerate}
			\item Case $U_2 \llredc N_2$. Similar to the above.	 
		\end{enumerate}
		\item $U=\lam x.U_0$ (hence $M=\lam x. M_0$ and $N=\lam x. N_0$). We conclude by  \ih.
		\item $U=!U_0$ (hence $M=! M_0$ and $N=! N_0$). We conclude by \ih.
	\end{itemize}
	
\end{proof}

Since we    study $ \redb\cup \redc$, one of the linear swap is 	$\nllredx{\gamma} \cdot \llredbb \subseteq  {\llredbb} \cdot \redc^*$. We show that, whatever is    $\redc$, it  linearly swaps after $\llredbb$ as soon as $\rredc$  is \emph{substitutive}. 

\begin{lemma}[Swap with $\llredbb$] \label{l:swap_after_b} 
	Let  $\redc$ be  the contextual closure of rule $\rredc$, and  assume $\redbb\cup \redc$ has    good least level.
	
	If  	  $\rredc$ is   substitutive, then 
	$\nllredx{\gamma} \cdot \llredbb \subseteq  {\llredbb} \cdot \redc^* $  always holds. 
\end{lemma}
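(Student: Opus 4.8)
The plan is to factor the argument through the root-swap lemma \Cref{l:ll_swaps} and then discharge the remaining root case by a short shape analysis. First I would instantiate the first statement of \Cref{l:ll_swaps}, taking the internal reduction to be $\redc$ and the least-level reduction to be $\redbb$; its hypotheses are met, since $\redbb\cup\redc$ has a good least-level by assumption and both relations are contextual closures of rules. This reduces the goal $\nllredc \cdot \llredbb \subseteq \llredbb \cdot \redc^*$ to the \emph{root} swap
\[
\nllredc \cdot \Root{\bbeta} \;\subseteq\; \llredbb \cdot \redc^* ,
\]
so it remains only to analyse a composition $M \nllredc U \Root{\bbeta} N$ in which the $\bbeta$-redex is fired at the root of $U$.

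For this root case, write $U = (\La{x}{P})\Bang{Q}$ and $N = P\subs{x}{Q}$. Since $M \nllredc U$ is an internal step onto an application, \Cref{fact:isteps} forces $M$ to be an application of the same shape, $M = M_1 M_2$ with $M_1$ an abstraction and $M_2$ a box, the $\gamma$-redex lying entirely in one of the two factors. This yields two cases. \textbf{Body case:} $M_1 = \La{x}{P'}$ with $P' \redc P$ and $M_2 = \Bang{Q}$. Then $M = (\La{x}{P'})\Bang{Q}$ is itself a $\bbeta$-redex, so $\llBang{M} = 0$ and the root step $M \Root{\bbeta} P'\subs{x}{Q}$ is least-level, i.e. $M \llredbb P'\subs{x}{Q}$; by substitutivity of $\redc$ (\Cref{fact:subs}.\ref{fact:subs-function}, which applies because $\rredc$ is substitutive) we obtain $P'\subs{x}{Q} \redc P\subs{x}{Q} = N$, closing the case with a single residual $\gamma$-step. \textbf{Argument case:} $M_1 = \La{x}{P}$ and $M_2 = \Bang{Q'}$ with $Q' \redc Q$; again $M$ is a root $\bbeta$-redex, hence $M \llredbb P\subs{x}{Q'}$, and \Cref{fact:subs}.\ref{fact:subs-argument} gives $P\subs{x}{Q'} \redc^* P\subs{x}{Q} = N$.

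The only genuine subtlety, and the step I expect to be the crux, is the interaction between the $\gamma$-step and the duplication/erasure performed by the substitution of the $\bbeta$-step: in the argument case the $\gamma$-redex sits inside the box $\Bang{Q'}$, which the root step may copy several times or discard entirely, so the residual is a possibly-empty or multi-step $\redc^*$ reduction rather than a single one. This is exactly why the conclusion is stated with $\redc^*$ and why \Cref{fact:subs}.\ref{fact:subs-argument} is needed. Everything else is bookkeeping: the reduction to the root case is delivered by \Cref{l:ll_swaps}, the shape constraints on $M$ by \Cref{fact:isteps}, and the observation that a root $\bbeta$-step is always least-level because its source is a redex at level $0$. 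Notably, no property of $\redc$ beyond substitutivity of $\rredc$ is used, which is precisely what makes the swap hold for \emph{every} such $\redc$.
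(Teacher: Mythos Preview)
Your proposal is correct and follows essentially the same approach as the paper: reduce to the root case via \Cref{l:ll_swaps}, then use \Cref{fact:isteps} to split on whether the $\gamma$-step lies in the body or the boxed argument of the $\bbeta$-redex, and close each case with the corresponding clause of \Cref{fact:subs}. The paper's proof is identical in structure and in the lemmas invoked; your additional remarks on why $\redc^*$ (rather than a single step) is needed in the argument case are a welcome elaboration.
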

\begin{proof} We prove $ \nllredx{\gamma} \cdot  \mapsto_{\bbeta} ~\subseteq  ~ {\llredbb} \cdot \redc^* $, and conclude by  \Cref{l:ll_swaps}.
	 
Assume  $M \nllredx{\gamma} (\lam x.P) !Q \rredbb P \subs x Q$. We want to prove   $  M{\llredbb} \cdot \redc^* P \subs x Q$.  By \Cref{fact:isteps},  $M=M_1M_2$ and {either $M_1= \lam x.M_P \redc  (\lam x.P)$ or $M_2= !M_Q \redc !Q$. }
\begin{itemize}
	\item 	In the first case,  $M=(\lam x.M_P)!Q$, with $M_P \redc P$. Hence $M=(\lam x.M_P)!Q \rredbb M_P \subs x Q $ and we conclude by substitutivity of $\redc$ (Property \ref{fact:subs}, point 1.).
	\item In the second case, $M=(\lam x.P)!M_Q$ with $M_Q \redc Q$. Hence $M=(\lam x.P)!M_Q \rredbb P \subs x {M_Q} $, and we conclude by Property \ref{fact:subs}, point 2.

\end{itemize}	
\end{proof}

Summing up, since surface factorization for $ \bbeta$ is known, we obtain the following compact  test for least-level factorization in extensions of  $\Lambda^!$.

\testll*

%

\subsection{Modular factorization for CbV and CbN} A modularity result similar to \Cref{prop:test_ll}  can be established for CbN and CbV.  Moreover, there is nothing special with least-level: leftmost works equally well.

\begin{proposition}[A test for  CbN least-level factorization] Let $\redc$ be the contextual closure of $\rredc $, and assume 	
	$\red \eq( \redb \cup \redc)$ has a \good least-level. 
 	The union  	$\redb \cup \redc$  satisfies CbN least-level factorization if:
	\begin{enumerate}
		\item \emph{$\ll$-factorization of $\redc$}: $\F{\llredx{\gamma}}{\nllredx{\gamma}}$.
		\item \emph{Substitutivity}: $\rredc$ is  substitutive.
		\item \emph{Root linear swap}: $\nllredx{\beta} \cdot \rredc  \ \subseteq \ \llredx{\gamma} \cdot\redb^* $.
	\end{enumerate}
	
\end{proposition}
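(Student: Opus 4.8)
The plan is to derive the statement as the \CbN instance of the modular test \Cref{prop:test_ll}, by applying the modular factorization theorem \Cref{thm:modular} with $\reda \eqdef \redb$. Concretely, set $\ereda \eqdef \llredb$, $\ireda \eqdef \nllredb$, $\eredc \eqdef \llredc$ and $\iredc \eqdef \nllredc$. Since, by \Cref{def:ll}, the least level of \emph{both} $\llredb$ and $\llredc$ is taken with respect to the \emph{same} redex set $\R \eq \R_\beta \cup \R_\rho$, we have $\llred \eq \llredb \cup \llredc$ and $\nllred \eq \nllredb \cup \nllredc$; this is precisely the delicate point discussed in \Cref{ex:issue}, and it is what lets \Cref{thm:modular} conclude factorization for the intended strategy $\llred$. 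The theorem then reduces the goal $\F{\llred}{\nllred}$ to \textbf{(i)} the two component factorizations $\F{\llredb}{\nllredb}$ and $\F{\llredc}{\nllredc}$, and \textbf{(ii)} the two linear swaps $\nllredb \cdot \llredc \subseteq \llredc \cdot \redb^*$ and $\nllredc \cdot \llredb \subseteq \llredb \cdot \redc^*$.

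For \textbf{(i)}, the factorization $\F{\llredc}{\nllredc}$ is exactly hypothesis~1. The factorization $\F{\llredb}{\nllredb}$ is the \CbN module, established once and for all: it is the \CbN reading of \Cref{thm:factorize-bang}, which applies because $\red \eq \redb \cup \redc$ has a \good least-level by assumption; equivalently, for the operator case it is transported from the bang-calculus result \Cref{cor:factorize-bang} through the \CbN translation, since by \Cref{prop:preservation-reduction} least-level and internal $\beta$-steps correspond, back and forth, to least-level and internal $\Tot$-steps.

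For \textbf{(ii)}, I would reuse the two root-level lemmas of the modular toolkit, now read with $\redb$ replacing $\redbb$. The second swap $\nllredc \cdot \llredb \subseteq \llredb \cdot \redc^*$ is the \CbN version of \Cref{l:swap_after_b}: from $M \nllredc (\La{x}{P})Q \rredb P\Sub{Q}{x}$, shape preservation of internal steps (\Cref{fact:isteps}) forces $M = M_1M_2$ with either $M_1 = \La{x}{M_P} \redc \La{x}{P}$ or $M_2 \redc Q$; in the first case one closes by substitutivity of $\redc$ (hypothesis~2, via \Cref{fact:subs}.\ref{fact:subs-function}), in the second by \Cref{fact:subs}.\ref{fact:subs-argument}, after which \Cref{l:ll_swaps} lifts the root swap to arbitrary $\llredb$-steps. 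The first swap follows directly from the \emph{root} linear swap of hypothesis~3, namely $\nllredb \cdot \rredc \subseteq \llredc \cdot \redb^*$, by feeding it to \Cref{l:ll_swaps}, which upgrades a swap against the empty-context closure $\rredc$ into a swap against the full least-level step $\llredc$.

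The main obstacle is not any fresh computation but checking that this toolkit ports cleanly from the $\bbeta$- to the $\beta$-setting. Two things must be verified. First, that $\redb \cup \redc$ is \emph{redex-preserving} for the \CbN level of \Cref{subsect:level-cbn-cbv} — the \CbN analogues of \Cref{cor:bo_redex} and \Cref{lem:good_redex} — so that the monotonicity and internal-invariance properties of \Cref{def:good} (silently used inside \Cref{l:ll_swaps} and \Cref{lem:nll_app}) remain available; this is granted by the standing assumption that $\red$ has a \good least-level. Second, that the sole structural difference between the two settings — a $\beta$-redex has shape $(\La{x}{P})Q$ rather than $(\La{x}{P})\Bang{Q}$ — never intervenes in the inductions, which rely only on the application/abstraction shape information exposed by \Cref{fact:isteps} and on the substitution properties of \Cref{fact:subs}. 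Since all ingredients are shape- and substitution-based and indifferent to the $\oc$-decoration, the port is routine and introduces no genuinely new difficulty.
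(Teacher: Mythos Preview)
Your proposal is correct and follows exactly the approach the paper intends: the paper states this proposition without proof, merely noting that ``a modularity result similar to \Cref{prop:test_ll} can be established for CbN and CbV,'' and your argument is precisely the \CbN port of the proof of \Cref{prop:test_ll}---apply \Cref{thm:modular}, use the \CbN analogue of \Cref{thm:factorize-bang} for $\F{\llredb}{\nllredb}$, and discharge the two linear swaps via the \CbN versions of \Cref{l:ll_swaps} and \Cref{l:swap_after_b}. One small remark: your ``equivalently'' is slightly loose, since the translation route via \Cref{prop:preservation-reduction} only covers the case where $\Rule$ is an operator rule, whereas the direct parallel-reduction port of \Cref{thm:factorize-bang} to the \CbN level function is what handles arbitrary $\Rule$; but you already flag this distinction, and the port is indeed routine for the reasons you give.
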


\begin{proposition}[A test for  CbV least-level factorization] Let $\redc$ be the contextual closure of $\rredc$,  and assume 	
	$\red \eq( \redbv \cup \redc)$ has a \good least-level. 
	The union  	$\redbv \cup \redc$  satisfies CbV least-level factorization if:
	\begin{enumerate}
		\item \emph{$\ll$-factorization of $\redc$}: $\F{\llredx{\gamma}}{\nllredx{\gamma}}$.
		\item \emph{Substitutivity}: $\rredc$ is  substitutive.
		\item \emph{Root linear swap}: $\nllredx{\betav} \cdot \rredc  \ \subseteq \ \llredx{\gamma} \cdot\redbv^* $.
	\end{enumerate}
	
\end{proposition}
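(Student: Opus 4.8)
The plan is to prove the statement exactly as its bang-calculus twin \Cref{prop:test_ll}, transposing every ingredient from $\redbb$ to $\redbv$. I would apply the modular factorization criterion \Cref{thm:modular} to the two factorizing relations $\redbv$ and $\redc$, splitting $\redbv$ into its least-level part $\llredbv$ and internal part $\nllredbv$, and $\redc$ into $\llredc$ and $\nllredc$, all computed with respect to the common redex set $\R = \R_{\betav} \cup \R_\rho$. This is the crucial point (as in \Cref{prop:test_ll}) that makes $\llred = \llredbv \cup \llredc$ and $\nllred = \nllredbv \cup \nllredc$, defusing the mismatch of \Cref{ex:issue}. Then \Cref{thm:modular} reduces the goal to four facts: the two component factorizations $\F{\llredbv}{\nllredbv}$ and $\F{\llredc}{\nllredc}$, and the two linear swaps $\nllredbv \cdot \llredc \subseteq \llredc \cdot \redbv^*$ and $\nllredc \cdot \llredbv \subseteq \llredbv \cdot \redc^*$. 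The second component factorization is hypothesis (1).

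The first component factorization $\F{\llredbv}{\nllredbv}$ is the \CbV counterpart of \Cref{thm:factorize-bang}: it holds for any $\redc$ whenever $\redbv \cup \redc$ has a \good least-level. I would obtain it by transporting \Cref{thm:factorize-bang} along the \CbV translation, which preserves both levels and redexes (\Cref{lemma:preservation-level-cbv}, \Cref{prop:preservation-reduction}), exactly as least-level factorization of the pure $\betav$-calculus is derived from the bang calculus in \Cref{thm:translation}. For the first linear swap, the abstract root-swap lemma \Cref{l:ll_swaps}\,---\,valid for any two contextual closures of rules sharing a \good least-level\,---\,reduces $\nllredbv \cdot \llredc \subseteq \llredc \cdot \redbv^*$ to its root instance $\nllredbv \cdot \rredc \subseteq \llredc \cdot \redbv^*$, which is precisely hypothesis (3).

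The heart of the proof is the second linear swap, the \CbV analogue of \Cref{l:swap_after_b}; by \Cref{l:ll_swaps} it suffices to establish the root instance $\nllredc \cdot \rootbv \subseteq \llredbv \cdot \redc^*$. So suppose $\tm \nllredc (\La{\var}{\tmThree})\LVal \rootbv \tmThree\Sub{\LVal}{\var}$ with $\LVal$ a value. Internal invariance\,---\,part of the \good least-level hypothesis, \Cref{def:good}\,---\,gives $\llCbv{\tm} = \llCbv{(\La{\var}{\tmThree})\LVal} = 0$, so the internal $\gamma$-step lies strictly below the root; its context is thus non-empty and shape preservation (\Cref{fact:shape}) forces $\tm = \tm_1\tm_2$ to be an application with the step confined to, and strictly below the root of, a single factor. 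Hence either (a) $\tm_1 = \La{\var}{\tmFour}$ with $\tmFour \redc \tmThree$ and $\tm_2 = \LVal$, or (b) $\tm_1 = \La{\var}{\tmThree}$ and $\tm_2 \redc \LVal$, in each case using \Cref{fact:shape} once more to recover the shape of the reduced factor. In case (a), $\tm = (\La{\var}{\tmFour})\LVal$ is a $\betav$-redex (its argument $\LVal$ is a value), so $\tm \rootbv \tmFour\Sub{\LVal}{\var}$, and $\tmFour\Sub{\LVal}{\var} \redc^* \tmThree\Sub{\LVal}{\var}$ by substitutivity of $\redc$ (\Cref{fact:subs}, available thanks to hypothesis (2)).

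Case (b) is the delicate point, and the step I expect to be the main obstacle: there the reduced subterm $\tm_2$ occupies argument position, so for $\tm$ to be a $\betav$-redex I must know that $\tm_2$ is itself a value\,---\,this is where the value restriction of \CbV becomes essential, and where a verbatim copy of the bang-calculus argument, which exploits that the argument of a $\redbb$-redex is a box $\Bang{Q}$, would break down. Shape preservation repairs it: from $\tm_2 \redc \LVal$ via a non-empty context, and since a variable is normal, $\LVal$ cannot be a variable, hence is an abstraction, and \Cref{fact:shape} then makes $\tm_2$ an abstraction too, thus a value. Therefore $\tm = (\La{\var}{\tmThree})\tm_2 \rootbv \tmThree\Sub{\tm_2}{\var}$ is a genuine $\betav$-step\,---\,at level $0$, hence least-level\,---\,and $\tmThree\Sub{\tm_2}{\var} \redc^* \tmThree\Sub{\LVal}{\var}$ by \Cref{fact:subs} since $\tm_2 \redc \LVal$. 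Collecting the four facts and invoking \Cref{thm:modular} yields $\F{\llred}{\nllred}$ for $\redbv \cup \redc$, which is the claim.
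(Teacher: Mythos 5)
Your skeleton is the one the paper intends for this proposition: instantiate \Cref{thm:modular} with both components split by the \emph{joint} redex set $\R = \R_{\betav} \cup \R_{\rho}$, discharge $\F{\llredc}{\nllredc}$ by hypothesis~(1), obtain the swap $\nllredbv \cdot \llredc \subseteq \llredc \cdot \redbv^*$ from hypothesis~(3) via the root-swap lemma, and prove the root instance $\nllredc \cdot \rootbv \subseteq \llredbv \cdot \redc^*$ of the remaining swap by hand; your case~(b), where shape preservation forces the rewritten argument to be an abstraction and hence a value, is exactly the CbV-specific repair of \Cref{l:swap_after_b}, and it is correct. The genuine gap is the other component, $\F{\llredbv}{\nllredbv}$. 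You derive it by transporting \Cref{thm:factorize-bang} through the CbV translation, but the preservation results you invoke (\Cref{lemma:preservation-level-cbv}, \Cref{prop:preservation-reduction}) are proved only for $\betav$- and $\op$-redexes, whereas here $\rredc$ is an \emph{arbitrary} rule on $\lambda$-terms: there is no rule of the bang calculus corresponding to $\rho$, hence no instance of \Cref{thm:factorize-bang} to transport, and no reason why $\Cbv{(\cdot)}$ should send $\rho$-redexes to redexes at the same level (the paper itself flags this limitation: the transfer works for extra rules ``as long as the translation preserves redexes''). Falling back on the pure transfer of \Cref{thm:translation} does not help either: it yields factorization of $\redbv$ with least level computed from $\R_{\betav}$ alone, which is a strictly different and weaker notion than the one \Cref{thm:modular} requires here---precisely the pitfall of \Cref{ex:issue} that this proposition is designed to avoid. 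What is actually needed (and what the paper implicitly intends by ``replace $\bbeta$ with $\betav$'') is a direct CbV analogue of \Cref{thm:factorize-bang}: redo the parallel-reduction argument (\Cref{prop:abstract-factorize}, merge and split) with parallel $\betav$-reduction and the level $\levCbv{\cdot}$, parametrically in the extra redex set, under the good-least-level hypothesis.

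A second, smaller soft spot: you cite \Cref{l:ll_swaps} as if it held verbatim for any syntax, but the paper proves it (through \Cref{lem:nll_app}) for the bang calculus level only, and in CbV the clause $\levCbv{(\La{\var}{\LCtxTwo})\LTm} = \levCbv{\LCtxTwo}$ genuinely changes the situation: if $\rho$-redexes may themselves be abstractions, a step in the body of $P'$ that is least-level in $P'Q$ need not be least-level in $P'$ (its level drops by one under the applied abstraction, while $\llCbv{P'}$ may be $0$ because $P'$ itself is a $\rho$-redex), so \Cref{lem:nll_app} fails as literally stated and the induction in \Cref{l:ll_swaps} needs an extra case for applied abstractions, recursing into the bodies with shifted levels. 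This is repairable, but both of your uses of the root-swap lemma rest on it, so the CbV re-proof has to be carried out rather than cited.
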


}

\end{document}